\newtheorem{theorem}{Theorem}
\newtheorem{definition}{Definition}
\newtheorem{remark}{Remark}
\newtheorem{proposition}{Proposition}
\newcommand{\R}{\mathbb{R}}
\newcommand{\sech}{\text{sech}}
\newcommand{\mL}{\mathcal L}
\newcommand{\lm}{\lambda}
\newcommand{\eps}{\varepsilon}
\begin{document}

\title[Optical systems with an intensity-dependent dispersion]{\bf Localization in optical systems with an intensity-dependent dispersion}

       \author{R.M. Ross, P.G. Kevrekidis}
       \address{Department of Mathematics and Statistics,
         University of Massachusetts, Amherst, MA 01003-4515, USA}

       \author{D.E. Pelinovsky}
\address{Department of Mathematics and Statistics, McMaster
  University, Hamilton, Ontario, Canada, L8S 4K1}

\date{\today}

\begin{abstract}
We address the nonlinear Schr{\"o}dinger equation with  intensity-dependent dispersion which was recently proposed in the context of nonlinear optical systems. Contrary to the previous findings, we prove that no solitary wave 
solutions exist if the sign of the intensity-dependent dispersion 
coincides with the sign of the constant dispersion, whereas 
a continuous family of such solutions exists in the case of the opposite signs. The family includes two particular solutions, 
namely cusped and bell-shaped solitons, where the former 
represents the lowest energy state in the family and the latter 
is a limit of solitary waves in a regularized 
system. We further analyze the delicate analytical properties 
of these solitary waves such as the asymptotic behavior near singularities, the
spectral stability, and the convergence of the fixed-point iterations 
near such solutions. The analytical theory is corroborated 
by means of numerical approximations. 
\end{abstract}

\maketitle

\section{Introduction}

The study of solitary waves in nonlinear Schr{\"o}dinger (NLS) type
equations~\cite{ablowitz,ablowitz2,ablowitz1,sulem}
is a topic of wide interest in a broad range of disciplines. This is
because of the ubiquitous nature of the relevant envelope wave equation 
which appears in settings as
diverse as the propagation of the electric field in optical fibers \cite{hasegawa,agrawal}, the
evolution of the probability density of atoms in Bose-Einstein
condensates~\cite{pethick,stringari}, but also in nonlinear waves in
plasmas~\cite{zakh2},
and freak waves in the ocean~\cite{slunaev}. In the simplest case
of bright~\cite{sulem,agrawal} and dark~\cite{siambook} solitary waves
the interplay of linear, constant coefficient dispersion and cubic nonlinearity (e.g., stemming from the Kerr effect in optics~\cite{hasegawa,agrawal} or
a mean-field approximation in Bose-Einstein condensation~\cite{pethick,stringari}) leads to the Duffing differential equation for the spatial profile of the solitary
wave. The spatial profile is smooth and decays exponentially to 
zero or to a  nonzero constant background.

In recent years, however, there has been an increasing interest in the
study of systems that feature intensity-dependent dispersion (IDD).
There exist multiple relevant examples of such systems, 
ranging from femtosecond pulse propagation in quantum well
waveguides~\cite{koser} to electromagnetically induced transparency in
coherently prepared multistate
atoms~\cite{greentree}. A recent work on this subject in ~\cite{OL2020} introduced a prototypical example of IDD and addressed non-standard types 
of solitary wave solutions of the NLS equation with IDD. Two different signs
of the intensity dependence were considered: one being the same as that of
linear dispersion and the other being opposite to that of linear dispersion.

{\em The purpose of this work} is to follow the intriguing example
of the NLS equation with IDD 
and to examine the relevant solitary wave solutions 
in detail. Contrary to the previous 
findings in~\cite{OL2020}, we prove that one of the two solutions 
examined earlier, namely {\em the cusped soliton}, does not exist 
in the case of the same sign of IDD but exists in the case of the 
opposite sign of IDD. In the latter case, it is a member of the 
continuous family of solitary wave solutions, which includes 
{\em the bell-shaped soliton} explored in \cite{OL2020}.

Periodic in space solutions are also possible in the model with the opposite sign of IDD. We briefly mention these periodic solutions but 
focus mainly on the existence and stability of the solitary wave solutions 
in the NLS equation with IDD.

\subsection{Main results}

We address the following NLS equation with IDD:
\begin{align}
\label{nls}
i\psi_t + (1-b|\psi|^2)\psi_{xx} = 0,
\end{align}
where $\psi = \psi(x,t)$ is the complex wave function and
$b$ is a real parameter. It was shown in \cite{OL2020} that
the NLS equation (\ref{nls}) admits formally two conserved quantities:
\begin{equation}
\label{mass-energy}
Q(\psi) = -\frac{1}{b} \int_{\mathbb{R}} \log|1-b|\psi|^2| dx, \quad
E(\psi) = \int_{\mathbb{R}} |\psi_x|^2 dx.
\end{equation}
The two conserved quantities have the meaning of the mass and energy
of the optical system and they are related to the phase rotation 
($\psi \mapsto \psi e^{i \theta}$, $\theta \in \mathbb{R}$)
and the time translation ($\psi(x,t) \mapsto \psi(x,t+t_0)$, $t_0 \in \mathbb{R}$) symmetries of the NLS equation (\ref{nls}).
The conserved quantities (\ref{mass-energy}) 
are defined in the subspace of $H^1$ functions given by
\begin{equation}
\label{energy-space}
X = \left\{ u \in H^1(\mathbb{R}) : \quad \left| \int_{\mathbb{R}} \log|1-b|\psi|^2| dx \right| < \infty \right\},
\end{equation}
which is the energy space of the NLS equation (\ref{nls}).

The standing wave solutions are given by
\begin{align}
\label{standing-wave}
\psi(x,t) = e^{ict} u(x)
\end{align}
where $c$ is a real parameter and $u(x)$ satisfies the differential equation
\begin{align}
\label{ode}
c u = (1-bu^2) u''(x).
\end{align}
Since the linear Schr\"{o}dinger equation $i \psi_t + \psi_{xx} = 0$ 
admits the linear waves $\psi(x,t) \sim e^{ikx - ik^2 t}$ 
which corresponds to $c = - k^2 \leq 0$, 
the true localization is possible only if $c > 0$, for which 
tails of solitary waves avoid resonance with the linear waves.

Let us now give the definition of the weak solutions of the differential equation (\ref{ode}). 

\begin{definition}
	\label{def-weak-solution}
	We say that $u \in H^1(\mathbb{R})$ is a weak solution 
	of the differential equation (\ref{ode}) if it satisfies 
	the following equation 
	\begin{equation}
	\label{ode-weak}
	c \langle u, \varphi \rangle + \langle (1- b u^2) u', \varphi' \rangle 
	- 2 b \langle u (u')^2, \varphi \rangle = 0, \quad \mbox{\rm for every } \varphi \in H^1(\mathbb{R}),
	\end{equation}
	where $\langle \cdot, \cdot \rangle$ is the standard inner product in $L^2(\mathbb{R})$. We say that the solution is positive if $u(x) > 0$ for every $x \in \mathbb{R}$ and single-humped if there exists only one point $x_0 \in \mathbb{R}$ such that $u(x_0) = \max\limits_{x \in \mathbb{R}} u(x)$.
\end{definition}

We study the weak solutions in Definition \ref{def-weak-solution} by looking 
for the smooth orbits of the second-order differential equation 
(\ref{ode}), see Propositions \ref{prop-soliton-1} and \ref{prop-soliton-2}. The orbits remain smooth if $b < 0$ but 
have singularities if $b > 0$. With the precise analysis of the asymptotic behavior of the solutions near the singularities (similar to the analysis 
in the recent work \cite{Alfimov}), we prove that the solutions 
remain in $H^1(\mathbb{R})$ across the singularity points.

The following theorem formulates the first main result of the paper.

\begin{theorem}
	\label{theorem-main1}
	Fix $c > 0$ and consider weak, positive, and single-humped solutions of Definition \ref{def-weak-solution}.
	No such solutions exist for $b < 0$, 
	whereas a one-parameter continuous family of such solutions 
	exists for each $b > 0$ in the energy space $X$.
\end{theorem}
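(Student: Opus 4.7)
The plan is to reduce (\ref{ode}) to a first-order Hamiltonian relation on each smooth piece of the orbit. Multiplying (\ref{ode}) by $u'$ and dividing by $1-bu^2$ (valid away from its zeros) integrates to the conserved quantity
\begin{equation*}
H := \tfrac{1}{2}(u')^2 + \tfrac{c}{2b}\log|1-bu^2|.
\end{equation*}
The decay conditions $u,u'\to 0$ at $\pm\infty$ force $H=0$ on the outermost pieces, yielding $(u')^2 = -(c/b)\log|1-bu^2|$. Because $|u'|=\infty$ along the singular line $u=1/\sqrt{b}$, the value of $H$ on a piece that meets another across such a singular point is not constrained to agree. The remainder of the argument is a phase-plane matching construction together with a regularity check at the cusps.

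For $b<0$ the identity $(u')^2 = -(c/b)\log(1-bu^2)$ has a strictly positive right-hand side whenever $u\neq 0$, since $1-bu^2>1$. Hence $u$ cannot have an interior critical point away from $0$, which immediately rules out any positive single-humped weak solution.

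For $b>0$ the outer pieces are confined to $u\in[0,1/\sqrt{b}]$ and reach the singular line in finite width: near $u=1/\sqrt{b}$, $1-bu^2\sim 2\sqrt{b}(1/\sqrt{b}-u)$, and the quadrature $\int^{1/\sqrt{b}} du/\sqrt{-\log|1-bu^2|}$ converges. I would then parametrize the family by the peak value $u_* \in [1/\sqrt{b},\sqrt{2/b}]$. For $u_* = 1/\sqrt{b}$ the two outer pieces are glued at a single cusp, yielding the symmetric cusped soliton; for $u_*\in(1/\sqrt{b},\sqrt{2/b}]$ one inserts a symmetric middle piece in the region $u>1/\sqrt{b}$ on the Hamiltonian level $H' = (c/2b)\log(bu_*^2-1)\le 0$, joined to the outer pieces at two cusps; the endpoint $u_* = \sqrt{2/b}$ corresponds to $H'=0$ and a smooth apex, recovering the bell-shaped soliton. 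Each resulting $u$ is smooth away from at most two cusp points and strictly single-humped.

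Three technical checks finish the proof. First, $u\in H^1(\mathbb{R})$: near a cusp $x_c$, $(u')^2$ diverges only logarithmically, hence $u' \in L^2_{\mathrm{loc}}$, and the saddle structure at the origin gives exponential-type decay at $\pm\infty$. Second, $u\in X$: the first integral rewrites $\log|1-bu^2| = (2b/c)H - (b/c)(u')^2$, which is integrable since $(u')^2\in L^1(\mathbb{R})$ and $H$ is piecewise constant with only finitely many values. Third, (\ref{ode-weak}) holds: on each smooth piece it reduces to the strong equation after integration by parts, and at each cusp the product $(1-bu^2)u'$ tends to zero because the linear vanishing of $1-bu^2$ dominates the logarithmic divergence of $u'$, so no distributional contribution appears; moreover $u(u')^2\in L^1_{\mathrm{loc}}$ by the same estimate. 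Uniqueness follows from elliptic regularity on each connected component where $u\neq 1/\sqrt{b}$ together with the observation that any middle piece in the lower region $u<1/\sqrt{b}$ would create a second point attaining the maximum $1/\sqrt{b}$, violating the single-hump hypothesis. The main obstacle, as in \cite{Alfimov}, is the delicate matching at the cusps: verifying that the weak formulation imposes no hidden constraint beyond the free choice of $H'$, so that the $u_*$-parametrization is both complete and exhaustive.
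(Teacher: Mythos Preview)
Your approach mirrors the paper's: derive the first integral, rule out $b<0$ by the absence of a turning point on the zero level, and for $b>0$ build the family by gluing the $H=0$ outer legs to a middle piece above the singular line with a freely chosen Hamiltonian level, then verify $H^1$, membership in $X$, and the weak formulation (\ref{ode-weak}) through the vanishing of $(1-bu^2)u'$ at the cusps. The structure and all the regularity checks are correct, and your boundary-term argument for (\ref{ode-weak}) is in fact more explicit than the paper's.

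There is, however, one genuine error in the parametrization. You restrict the peak value to $u_*\in[1/\sqrt{b},\sqrt{2/b}]$, equivalently $H'\le 0$, and claim this is exhaustive. It is not. On the middle piece one has $(u')^2 = 2H' - (c/b)\log(bu^2-1)$, which yields a turning point at $u_* = \sqrt{(1+e^{2bH'/c})/b}$ for \emph{every} $H'\in\mathbb{R}$, not just $H'\le 0$. The paper parametrizes the full family by $C\in\mathbb{R}$ (your $H'$), so the peak ranges over all of $(1/\sqrt{b},\infty)$; the bell-shaped soliton at $H'=0$ is an interior member of the family, not an endpoint. The ``smooth apex'' you attribute to $u_*=\sqrt{2/b}$ in fact occurs at every $u_*>1/\sqrt{b}$, and nothing in the $H^1$ estimate, the $X$-check, or the weak formulation distinguishes $H'\le 0$ from $H'>0$. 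Your exhaustiveness claim therefore fails, though the existence assertion in the theorem (a one-parameter family) survives with your construction.
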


Two particular solitary wave solutions of the 
continuous family in Theorem \ref{theorem-main1} for $b = 1$ and $c = 1$ 
are shown on Fig. \ref{bluesolns}. 
We call them the {\em cusped} and {\em bell-shaped} solitons as 
shown on the left and right panels, respectively.
Without loss of generality, the solutions can be translated to be even in $x$.
The cusped soliton satisfies $0 < u(x) \leq 1$ with the only singularity 
at $u(0) = 1$. The bell-shaped soliton satisfies $0 < u(x) \leq \sqrt{2}$ 
with two singularities at $u(\pm \ell) = 1$ for a uniquely defined
$\ell > 0$. The singular behavior of these solutions is 
further clarified in Proposition \ref{prop-cusped}.

\begin{figure}[hbt]
	\centering
	\includegraphics[width=6cm,height = 4cm]{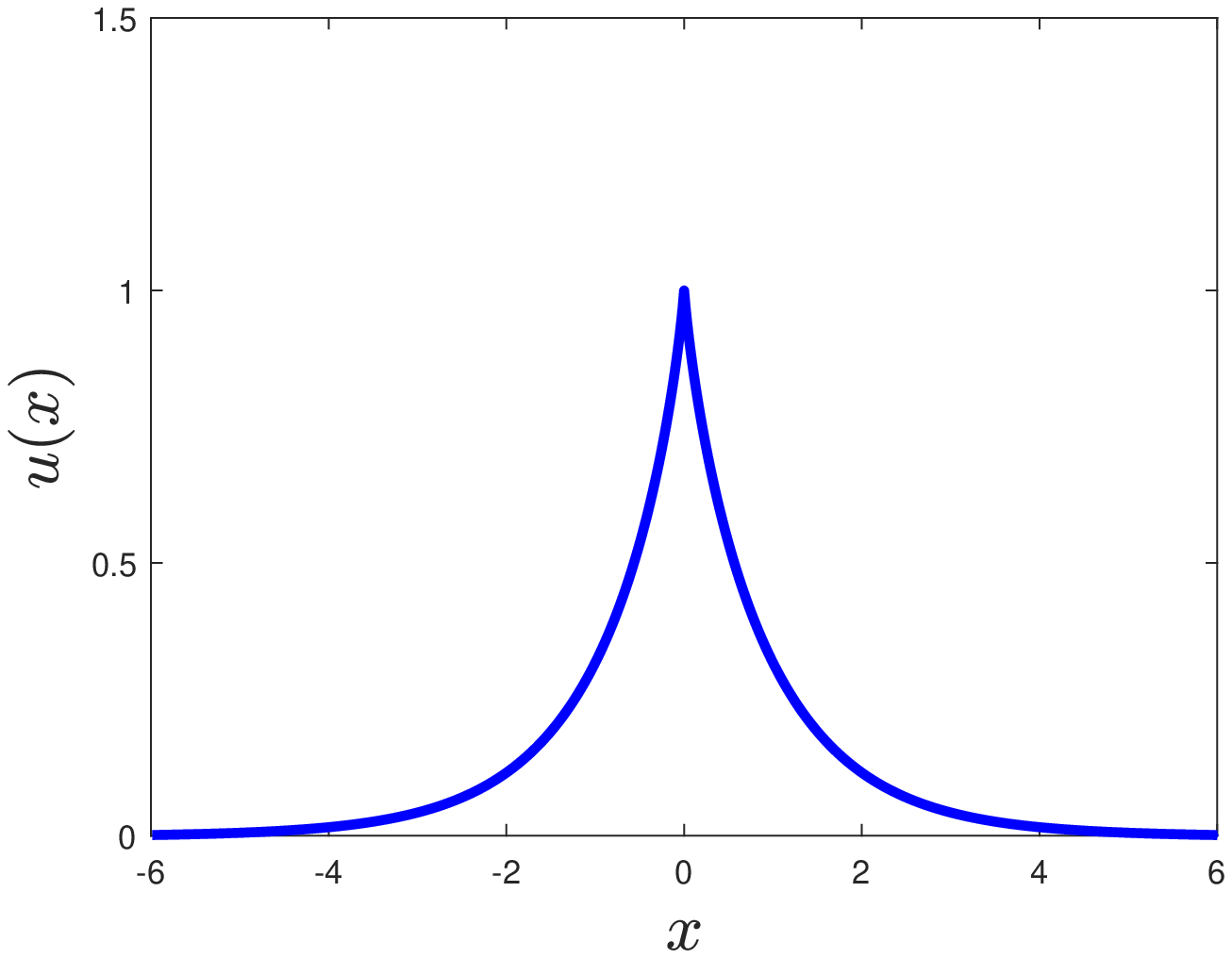}
	\includegraphics[width=6cm,height = 4cm]{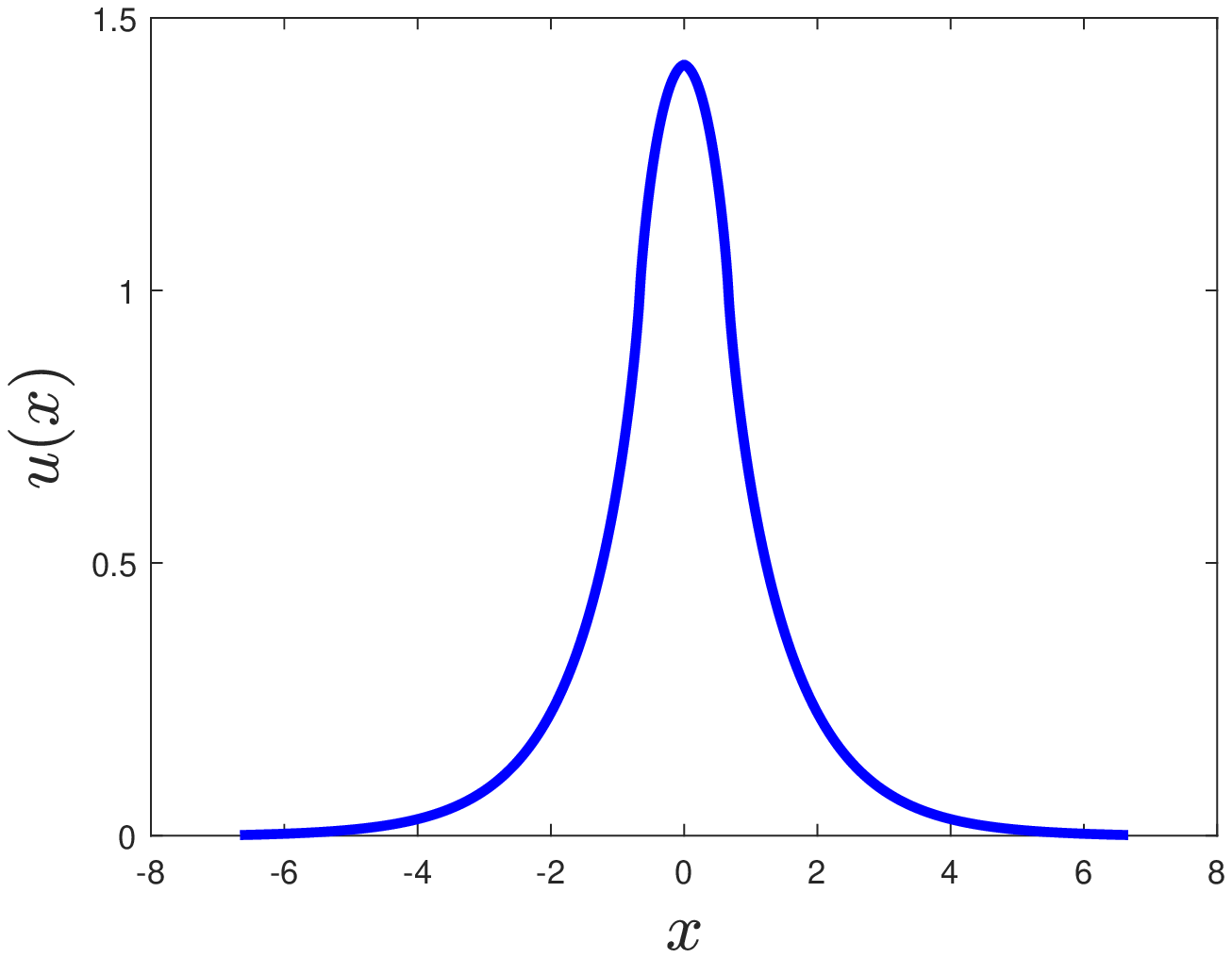} 
	\caption{The spatial profiles $u(x)$ of the two single-humped solitary wave solutions of the second-order equation (\ref{ode}) for $c = 1$ and $b = 1$: cusped soliton (left) and bell-shaped soliton (right).}
	\label{bluesolns}
\end{figure}

\begin{remark}
The result of Theorem \ref{theorem-main1} disagrees with the numerical results in \cite{OL2020}, where the solitary wave solutions were also obtained 
for $b < 0$. According to Theorem \ref{theorem-main1}, such solutions do not exist. In the case $b > 0$, the bell-shaped soliton was obtained in \cite{OL2020}, however, the cusped soliton and the continuous family of solitary wave solutions were missed  in \cite{OL2020}.
\end{remark}

The second main result of this paper is about 
numerical approximations of the cusped and bell-shaped 
solitons. We implement three numerical methods towards identifying
these waves and 
elaborate on convergence of these methods in 
the neighborhood of the cusped and bell-shaped solitons in $H^1(\mathbb{R})$. 
The outcomes of this study are summarized as follows:\\

\begin{itemize}
	\item Regularization of the differential equation (\ref{ode}) for $b = c = 1$ near the singularities $u = \pm 1$ allows us to approximate the bell-shaped 
	soliton only. We prove in Proposition \ref{prop-regularization} that the sequence of regularized solitary wave solutions converges in $H^1(\mathbb{R})$ to the bell-shaped soliton. \\
	
	\item Fixed-point iterations with the popular Petviashvili's
          method ~\cite{petviashvili} (also referred to
          as the spectral renormalization method~\cite{amus})
          allows us to approximate the cusped soliton only. We prove in Propositions \ref{prop-Petv} and \ref{prop-Petv-cusped} that the method diverges for the bell-shaped soliton and for other solitary wave solutions. The cusped soliton represents the lowest energy state 
	in the continuous family of solitary waves.\\
	
	\item Fixed-point iterations with the regular Newton's method 
	allow us to approximate both the bell-shaped and cusped
	solitons, as well as arbitrary members within the continuous family of solitary
	waves upon suitable initial guesses. We are able to prove convergence 
	of the Newton's method near the cusped soliton in Proposition \ref{prop-Newton-cusped}.\\
\end{itemize}

The third main result of this paper is about stability of solitary waves 
with respect to small perturbations in the time evolution of the NLS
equation (\ref{nls}). Due to singularities of the solitary wave
solutions, we conclude that the
analysis of stability is an open mathematical problem even at the 
level of {\em spectral stability}. We are only able to characterize 
the kernel of the linearized operator and only in the case of the cusped 
soliton in Proposition \ref{prop-kernel}. Nevertheless, numerical 
approximations of eigenvalues of the discretized and truncated 
spectral stability problem suggest that {\em cusped and bell-shaped solitons 
are spectrally stable.}

The same conclusion regarding the dynamical
stability of the cusped and bell-shaped solitons 
is supported by the results of direct numerical simulations of the NLS equation (\ref{nls}). For time integration, we use a pseudospectral method with the Fourier transform in the spatial domain $[-30,30]$ with $N=2048$ points. In order to solve the time-evolution equations for Fourier modes, we use the fourth-order Runge-Kutta method with time step $\Delta t = 0.001$. 

\begin{figure}[hbt]
	\centering
	\includegraphics[width=6cm,height=4cm]{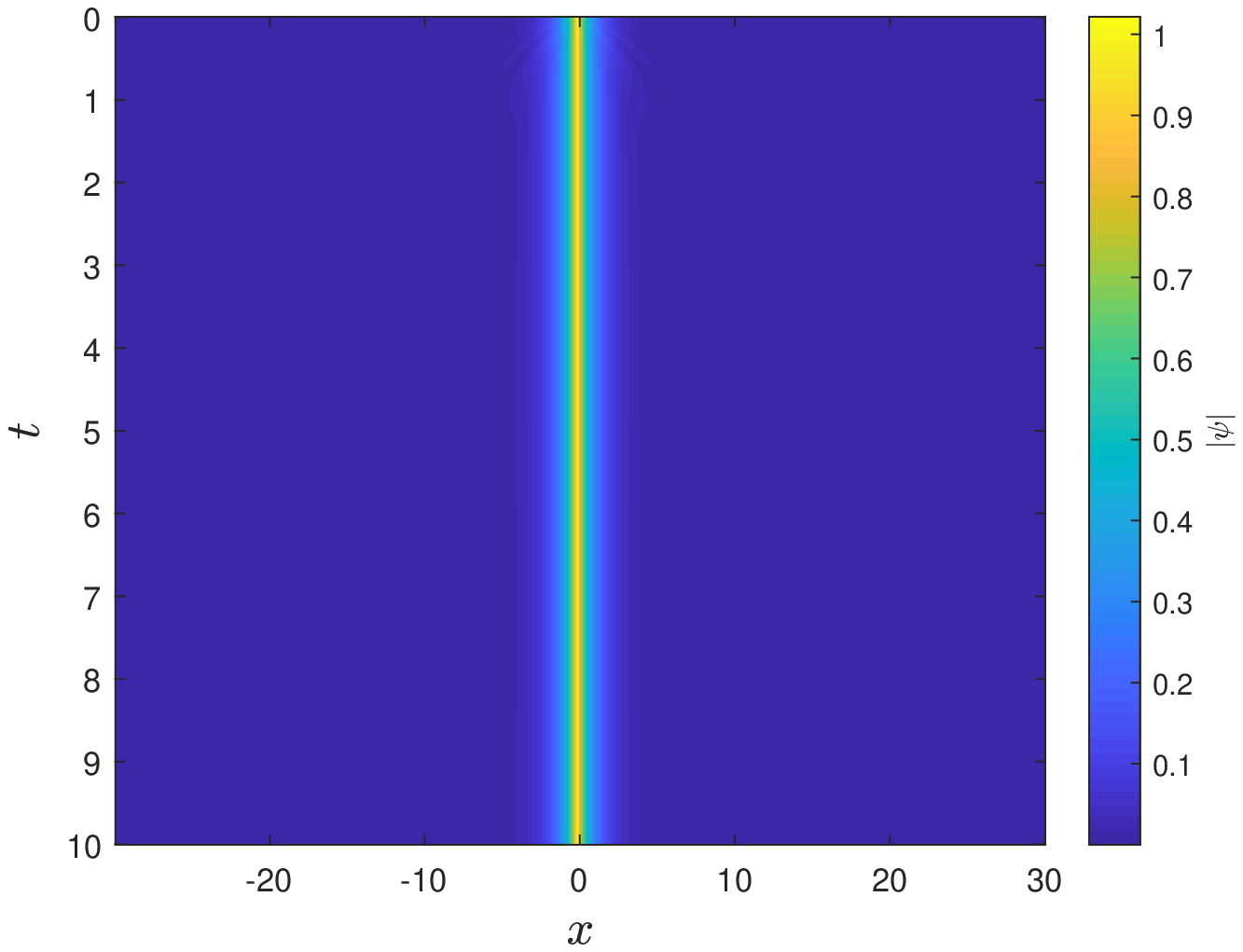}
	\includegraphics[width=6cm,height=4cm]{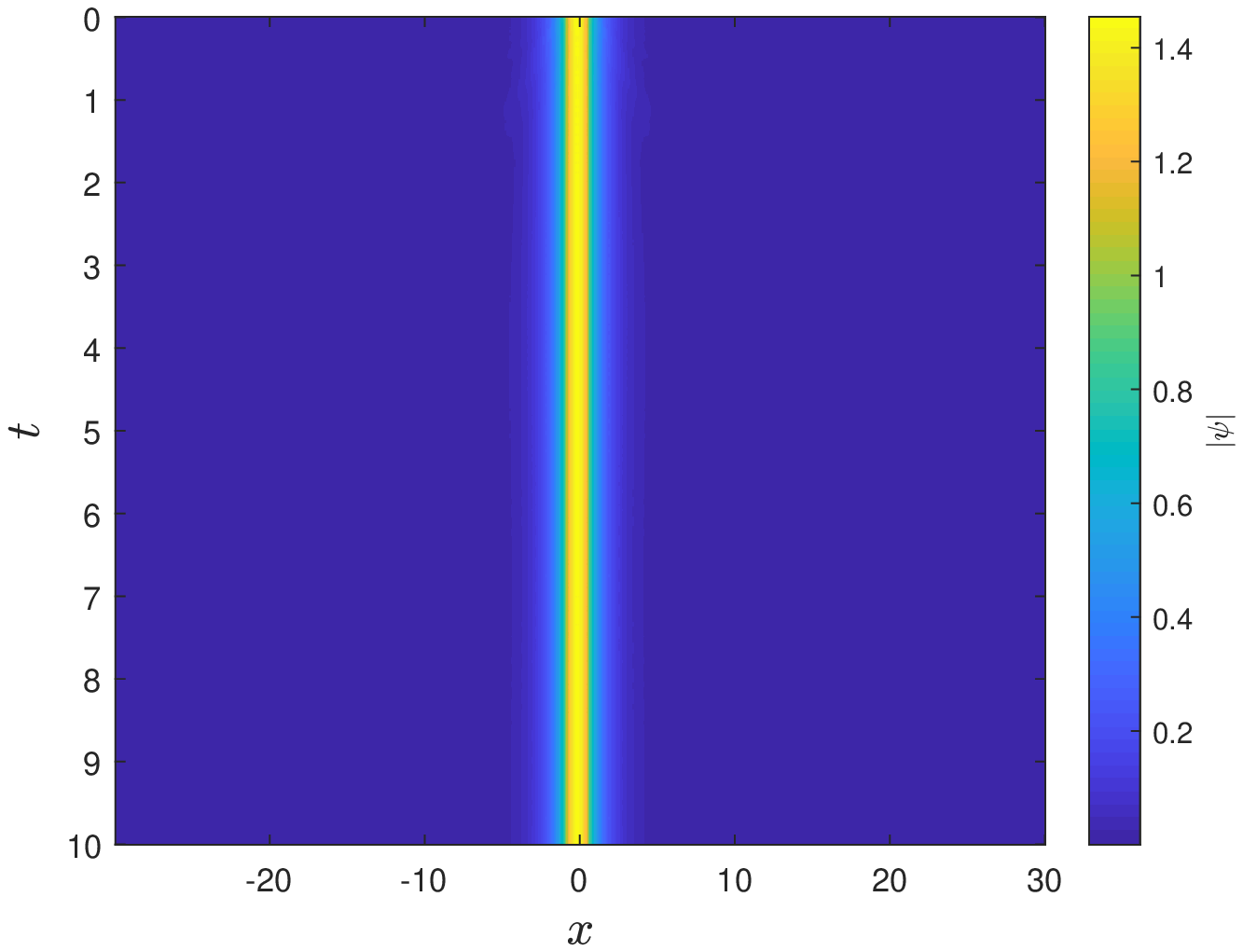}
	\caption{The space-time evolution of the NLS equation
		(\ref{nls}) with the initial perturbations 
		of the cusped soliton (left) and bell-shaped soliton
		(right). In this and similar space-time figures, 
		the contour plot is of the wavefunction modulus $|\psi|$.}
	\label{bluesolns-time}
\end{figure}

Figure \ref{bluesolns-time} presents outcomes of the numerical simulations 
of the initial conditions taken as perturbations of the solitary wave solutions 
$\psi(x,0) = 1.01u(x)$. 
The evolution of these waveforms is (nearly) steady and the small
perturbations
disperse away from the stationary localized solution.
Notice that, in the vicinity of the boundary, a dissipative layer has
been
used, absorbing the small amplitude wavepackets originally emitted
by the localized waves. Simulations for
considerably
longer times have also been performed and we have confirmed stability of 
both solitons in longer computations and under different types of
small
perturbations.

The methods and results obtained in the analytical and numerical parts of this work are very similar to the recent study of compactons 
in the degenerate NLS equation in \cite{GHM} and in the sublinear KdV equation 
in \cite{PelinPelin}.

\subsection{Organization of the paper}

Our presentation is structured as follows. 

In Section 2, we study the smooth orbits of the differential equation (\ref{ode}). The asymptotic behavior of the solitary wave solutions 
near the singularity is clarified in Section 3. The results of these two sections 
will accomplish the proof of Theorem \ref{theorem-main1}.

Section 4 describes the outcomes of the three numerical methods 
implemented for the approximation of solitary wave solutions 
of the differential equation (\ref{ode}) with $b = c = 1$. 
It is interesting that the regularization method 
approximates the bell-shaped soliton only, 
Petviashvili's method approximates the cusped soliton only, 
and Newton's method allows to approximate
both the bell-shaped and cusped solitons as well as 
other solutions in the continuous family of solitary waves.

Spectral stability of the solitary wave solutions 
is addressed in Section 5 
in the framework of the linearized NLS equation. We show how 
to characterize the kernel of the linearized operator and raise 
an open question on the mathematical analysis of the spectral stability problem. Numerical results suggest that the spectrum of the linearized operator is neutrally stable both for the cusped and bell-shaped solitons.

Finally, Section 6 summarizes our findings and presents
some directions of future study.

\section{Solitary wave solutions of the model}

We consider the differential equation (\ref{ode}) for $c > 0$.
The positive parameter $c$ can be set to unity without loss of generality
because if $u(x) = U(\sqrt{c}x)$ satisfies (\ref{ode}) for $c > 0$, 
then $U(x)$ satisfies the same equation with $c = 1$. 
Hence, we set $c = 1$ and rewrite the second-order equation (\ref{ode}) as the Newton equation:
\begin{align}
\label{Newton}
\frac{d^2 u}{dx^2} = \frac{u}{1-bu^2} = -V'(u),
\end{align}
where the potential $V$ is given by
\begin{align}
\label{potential}
V(u) = -\int \frac{u du}{1-bu^2} = \frac{1}{2b} \log|1-bu^2|.
\end{align}
The first invariant for the Newton equation (\ref{Newton}) is given by
\begin{align}
\label{ode-energy}
\frac{1}{2} \left(\frac{du}{dx} \right)^2 + V(u) = C,
\end{align}
where the value of $C$ is constant along every smooth solution of
the Newton equation (\ref{Newton}).

If $b \neq 0$, then it can be set to unity up to the choice of its sign
without loss of generality because if 
$u(x) = |b|^{-1/2} U(x)$ satisfies
(\ref{Newton}) for $b \neq 0$, then $U(x)$ 
satisfies the same equation with either $b = 1$
or $b = -1$. In what follows, we consider the two cases separately.

\subsection{Solitary wave solutions for $b = -1$}

We show that no solitary wave solutions exist in the Newton equation 
(\ref{Newton}) for $b = -1$ 
(or generally, for $b < 0$).

\begin{proposition}
	\label{prop-soliton-1}
	There exist no solutions with $u(x) \to 0$ as $|x| \to \infty$ 
in the Newton equation (\ref{Newton}) with $b = -1$.
\end{proposition}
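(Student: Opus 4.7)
The plan is to use the first integral (\ref{ode-energy}) in a phase-plane argument. With $b = -1$ the potential in (\ref{potential}) becomes $V(u) = -\frac{1}{2}\log(1+u^2)$, a smooth function with $V(0) = 0$ and $V(u) < 0$ for all $u \neq 0$, so $u = 0$ is the unique critical point of $V$ and is a strict global maximum. In particular, the equilibrium $(0,0)$ of (\ref{Newton}) is a saddle, and the existence of a solitary wave would amount to the existence of a homoclinic loop at this saddle.

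First I would pin down the energy level of any smooth solution satisfying $u(x) \to 0$ as $|x| \to \infty$. The conservation law (\ref{ode-energy}) rearranges to $(u')^2 = 2C + \log(1+u^2)$, and the right-hand side tends to $2C$ under the hypothesis, so $(u'(x))^2 \to 2C$. If $C > 0$, then $|u'(x)|$ is eventually bounded below by $\sqrt{C}$; by continuity $u'$ retains a definite sign for large $|x|$, forcing $|u(x)|$ to grow at least linearly and contradicting $u \to 0$. Hence $C = 0$, and the invariant collapses to $(u')^2 = \log(1+u^2)$, a nonnegative quantity that vanishes only at $u = 0$.

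Finally I would exclude nontrivial solutions on this level set. If $u(x_1) = 0$ at some finite $x_1$, then the energy identity also gives $u'(x_1) = 0$, and uniqueness for the Cauchy problem associated with (\ref{Newton}) (whose right-hand side $u/(1+u^2)$ is smooth in $u$) forces $u \equiv 0$. So any nontrivial $u$ has constant sign on $\mathbb{R}$, say $u > 0$ everywhere after applying the symmetry $u \mapsto -u$; then $(u')^2 > 0$ throughout, so $u'$ has constant sign by continuity, and $u$ is strictly monotonic on $\mathbb{R}$. A strictly monotonic function cannot tend to $0$ at both ends of the real line, a contradiction. The main subtlety is the first step, converting pointwise decay of $u$ into the energy condition $C = 0$; once this is in hand, the geometric picture of the level curve $\{(u')^2 = \log(1+u^2)\}$ opening out to infinity, rather than closing into a homoclinic loop, makes the non-existence immediate.
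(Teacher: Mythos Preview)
Your proof is correct and follows essentially the same route as the paper's: both identify the potential $V(u) = -\tfrac{1}{2}\log(1+u^2)$, use the first integral (\ref{ode-energy}) to pin down $C = 0$ for any decaying solution, and then observe that the level set $C = 0$ supports no nontrivial homoclinic orbit. The paper phrases this last step tersely as ``no nonzero turning points exist since $V(u) < 0$ for $u \neq 0$,'' whereas you spell out the equivalent monotonicity argument in detail.
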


\begin{proof} 
	If $b = -1$, the potential $V(u)$ can be written in the form:
\begin{align}
\label{potential-negative}
V(u) = -\frac{1}{2}\log(1+u^2).
\end{align}
All solutions are uniquely defined by the level $C$ in (\ref{ode-energy}) 
and remain smooth due to the smoothness of $V(u)$ in (\ref{potential-negative}). 
Solutions satisfying $u(x) \to 0$ as $|x| \to \infty$ 
correspond to the level $C = 0$ since 
$V(0) = 0$. They exist if and only if there exist nonzero 
turning points given by nonzero roots of $V(u)$. 
Since $V(u) < 0$ for every $u > 0$, 
no nonzero turning points exist at the level $C = 0$.
\end{proof}

\begin{figure}[hbt]
	\centering
	\includegraphics[width=12cm,height=8cm]{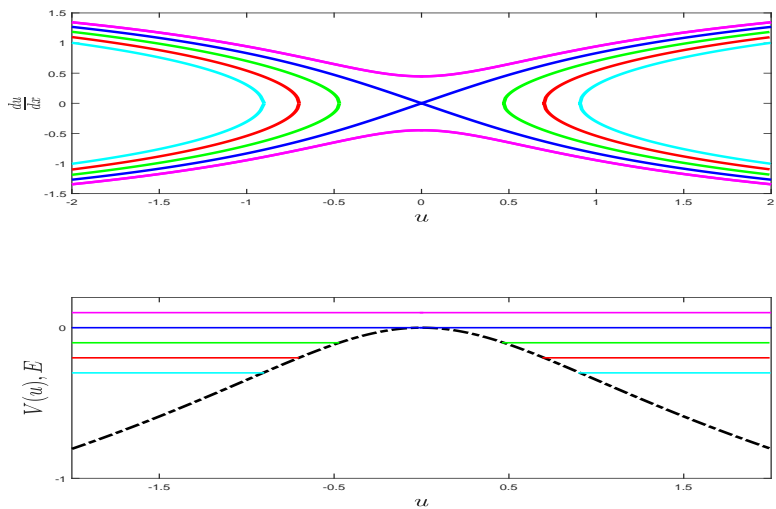}
	\caption{Top: Orbits on the phase plane for the potential (\ref{potential-negative}) corresponding to energy levels
		$C=0.2, 0, -0.1, -0.2, -0.3$. Bottom:
		levels of $C$ relative to the potential $V$ with admissible regions
		occurring for $V(u) \leq C$.}
	\label{ppandV_b=-1}
\end{figure}

Fig. \ref{ppandV_b=-1} (top panel) shows the level curves of the 
function in (\ref{ode-energy}) on the phase plane $(u,u')$.
The level curves with $C > 0$ ($C < 0$) lie outside (inside)
the stable and unstable curves corresponding to $C = 0$.
All curves are unbounded since no two turning points exist
for each orbit (see the bottom panel).

\begin{remark}
It was claimed in \cite{OL2020} that 
solitary wave solutions may exist for $b < 0$, in contradiction 
to Proposition \ref{prop-soliton-1}. The problem
with the approach of \cite{OL2020} stems from the Taylor expansion of the potential
$V(u)$ and truncation of this expansion. Indeed, the potential in  (\ref{potential-negative}) can be expanded as
\begin{align}
\label{potential-expansion}
V(u) = -\frac{1}{2} u^2 + \frac{1}{4} u^4 + \mathcal{O}(u^6) \quad \mbox{\rm as}
\quad u \to 0.
\end{align}
If the remainder term is truncated, the truncated Taylor expansion
(\ref{potential-expansion}) admits artificial turning points at $u = \pm \sqrt{2}$, which are not present in the original potential (\ref{potential-negative}). As a result, the truncated problem
has the artificial solution $u(x) = \sqrt{2} {\rm sech}(x)$ which does not
persist in the full system with the potential (\ref{potential-negative}). 
Note that further to the Taylor expansion (\ref{potential-expansion}), 
the approach of \cite{OL2020} used the expansion of the integrand near $u = 0$, after which the artificial solution was approximated with the Lambert-$W$ function. 
\end{remark}

Fig. \ref{lambertsoliton} shows evolution of the NLS equation (\ref{nls})
from the initial condition $\psi(x,0) = \sqrt{2} {\rm sech}(x)$.
This evolution leads to 
dispersion, corroborating the absence of a solitary wave.

\begin{figure}[hbt]
	\centering
  	\includegraphics[scale=0.45]{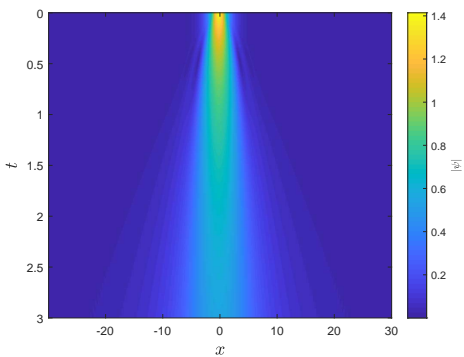}
	\caption{The space-time evolution of the NLS equation
          (\ref{nls}) with $\psi(x,0)=\sqrt{2} {\rm sech}(x)$. The evolution leads to spreading (dispersion) of the initially localized pulse as radiation is emitted.}
	\label{lambertsoliton}
\end{figure}

\subsection{Solitary wave solutions for $b = 1$}

We show that a continuous family of positive, single-humped, and continuous solitary wave solutions exists formally in the Newton equation (\ref{Newton}) for $b = 1$ 
(or generally, for $b > 0$).

\begin{proposition}
	\label{prop-soliton-2}
There exists a one-parameter family of positive, single-humped, and continuous 
solutions with $u(x) \to 0$ as $|x| \to \infty$ 
in the Newton equation (\ref{Newton}) with $b = 1$.
\end{proposition}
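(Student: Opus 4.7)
The plan is to build the family by concatenating smooth phase-plane orbits of the Newton equation (\ref{Newton}) with $b=1$ on the two regions where the vector field is nonsingular---the ``outer'' strip $u\in(0,1)$ and the ``inner'' half-plane $u>1$---matching them continuously at the singular level $u=1$. The first invariant (\ref{ode-energy}) with $V(u)=\tfrac12\log|1-u^2|$ shows that $V(0)=0$ is a strict local maximum and $V(u)\to-\infty$ as $u\to 1^\pm$; since the sought solution must satisfy $u(x),u'(x)\to 0$ as $|x|\to\infty$, the outer pieces are forced to lie on the energy level $C=0$.

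First I would analyze the two smooth pieces. On $u\in(0,1)$ at $C=0$, the relation $(u')^2=-\log(1-u^2)$ defines a unique monotone orbit that leaves $u=0$ at $x=-\infty$ and reaches $u=1$ at a finite point with $u'\to+\infty$; indeed, $\int_0^{1}du/\sqrt{-\log(1-u^2)}$ diverges at the lower endpoint (so $u=0$ is approached only as $x\to-\infty$) but converges at the upper endpoint. Its reflection supplies the corresponding decreasing outer orbit on a right half-line. On $u>1$, for each $C\in(-\infty,0]$ the relation $(u')^2=2C-\log(u^2-1)$ is admissible on $[1,u_*(C)]$ with $u_*(C):=\sqrt{1+e^{2C}}\in(1,\sqrt{2}]$. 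The resulting orbit is periodic, turning smoothly at $u=u_*(C)$ (where $u'=0$) and meeting $u=1$ with $u'\to\pm\infty$; its half-period $T(C):=\int_1^{u_*(C)}du/\sqrt{2C-\log(u^2-1)}$ is finite by the same integrability argument, with $T(C)\to 0^+$ as $C\to-\infty$.

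Next, for each $C\in(-\infty,0]$, I would define $u_C:\R\to(0,u_*(C)]$ by concatenating (i) the increasing outer orbit on $(-\infty,-T(C)]$ terminating at $u=1$, (ii) one half-period of the inner orbit on $[-T(C),T(C)]$ peaking at $u_*(C)$ at $x=0$, and (iii) the decreasing outer orbit on $[T(C),+\infty)$ starting at $u=1$. Because all three junctions occur at the common level $u=1$, $u_C$ is continuous on $\R$; it classically solves (\ref{Newton}) on each open subinterval of the complement of $\{\pm T(C)\}$; it is positive with a unique global maximum $u_*(C)$ at $x=0$, hence single-humped; and $u_C(x)\to 0$ as $|x|\to\infty$. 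Varying the parameter $C$ (equivalently $u_*(C)\in(1,\sqrt{2}]$, or $T(C)\in(0,T(0)]$) yields the claimed one-parameter family, with the limit $C\to-\infty$ (equivalently $u_*\to 1^+$, $T\to 0^+$) degenerating to the cusped soliton with a single cusp at $x=0$, and the endpoint $C=0$ producing the bell-shaped soliton with $u_*=\sqrt{2}$ and $\ell=T(0)$.

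The main difficulty I expect afterwards is that at the two junctions $x=\pm T(C)$ the derivative $u_C'$ diverges, so $u_C$ is merely continuous (not $C^1$) and the Newton equation is satisfied classically only off these two points. Promoting $u_C$ to a genuine weak solution in the sense of Definition \ref{def-weak-solution} and verifying $u_C\in H^1(\R)$ will require sharp asymptotics of $u_C(x)-1$ near each junction, sufficient to control the nonlinear terms $(1-u^2)u'$ and $u(u')^2$ appearing in (\ref{ode-weak}). This is precisely the content of Section 3, modeled on \cite{Alfimov}, and is not needed at the level of Proposition \ref{prop-soliton-2} as formulated, for which the continuous piecewise-smooth phase-plane construction above suffices.
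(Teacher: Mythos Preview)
Your approach is essentially the same as the paper's: both construct the family by concatenating the $C=0$ outer orbit on $u\in(0,1)$ with an inner orbit on $u\geq 1$ at some energy level $C$, matching continuously at the singular level $u=1$. Your proposal is in fact more explicit than the paper's proof in verifying the finite travel time to $u=1$ and the divergence of the travel time at $u=0$. The one discrepancy is that you restrict the inner energy level to $C\in(-\infty,0]$, whereas the paper allows $C\in\mathbb{R}$: for $C>0$ the turning point $u_*(C)=\sqrt{1+e^{2C}}$ exceeds $\sqrt{2}$ but the construction goes through unchanged, so in the paper's parametrization the bell-shaped soliton ($C=0$) is an interior member of the full family rather than an endpoint.
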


\begin{proof}
If $b = 1$, the potential $V(u)$ can be written in the form:
\begin{align}
\label{potential-positive}
V(u) = \frac{1}{2}\log|1-u^2|.
\end{align}
Two logarithmic singularities exist at $u=\pm 1$. Solutions of 
the Newton equation (\ref{Newton}) with $u(x) \to 0$ as $|x| \to \infty$ 
correspond to the level $C = 0$ since $V(0) = 0$. 
The turning points at the level $C = 0$ are $u = \pm \sqrt{2}$, 
hence the positive and negative solutions for $u(x)$ pass the singularities 
at $u = \pm 1$, at which the derivative $u'(x)$ becomes unbounded due 
to the first-order invariant (\ref{ode-energy}). 

A general way to continue the solution beyond the singularity with $u(x)$ being continuous through the breaking point is to concatenate 
the smooth solution for $u(x) < 1$ corresponding to the level $C = 0$ 
with another smooth solution for $u(x) \geq 1$ corresponding to 
an arbitrary level $C \in \mathbb{R}$. This gives the one-parameter family 
of solutions parametrized by $C \in \mathbb{R}$ for the part of the 
solution with $u(x) \geq 1$.
\end{proof}

Within the one-parameter family of solitary wave solutions of Proposition \ref{prop-soliton-2}, we define two particular solutions:
\begin{itemize}
	\item The cusped soliton (left panel of Fig. \ref{bluesolns}), 
	which has the infinite jump singularity for $u'(x)$. It formally corresponds 
	to $C = -\infty$ for the part of the solution with $u(x) \geq 1$.\\
	
	\item The bell-shaped soliton  (right panel of Fig. \ref{bluesolns}), 
	which has the same infinite value of the first derivative at the two singularities. 
	It formally corresponds to $C = 0$ for the part of the solution with $u(x) \geq 1$.
\end{itemize}

\begin{remark}
	It was claimed in \cite{OL2020} that one positive, single-humped solitary wave solution may exist for $b > 0$ as the bell-shaped soliton. 
	Proposition \ref{prop-soliton-2} alludes to a continuous family of positive, single-humped solitary waves.
\end{remark}

The level curves and energy levels $C$ for the potential $V(u)$ in (\ref{potential-positive}) are shown on Fig. \ref{ppandV_b=1}.
Other solutions constructed from the first-order invariant
(\ref{ode-energy}) beyond the singularity at $u = \pm 1$ are very
similar, i.e., they feature similar ways of continuing past the singularity.

For $C < 0$, the solutions are periodic and (can be thought of as
being) positive definite as $u(x)$ is squeezed between the turning points.
The two periodic solutions (cusped and bell-shaped) are shown on Fig. \ref{magentasoln} with the same values of $C$ below and above the singularity at $u = 1$. As $C \to 0$, these two periodic solutions 
become the cusped and bell-shaped solitons since their periods diverge to infinity. 

For $C > 0$, the periodic solutions become double-humped with the alternating polarities. At each period, the solution reaches both singularity points
$u = \pm 1$. Therefore, there exist four ways to
define the double-humped periodic solutions with the same value of $C$ along each smooth piece of the solution. Three of the solutions are shown in Figure \ref{doublehump}. One more solution is identical to the solution on the left panel due to the transformation $u \mapsto -u$ for solutions of the Newton equation (\ref{Newton}). 

\begin{figure}[hbt]
	\centering
	\includegraphics[width=12cm,height=8cm]{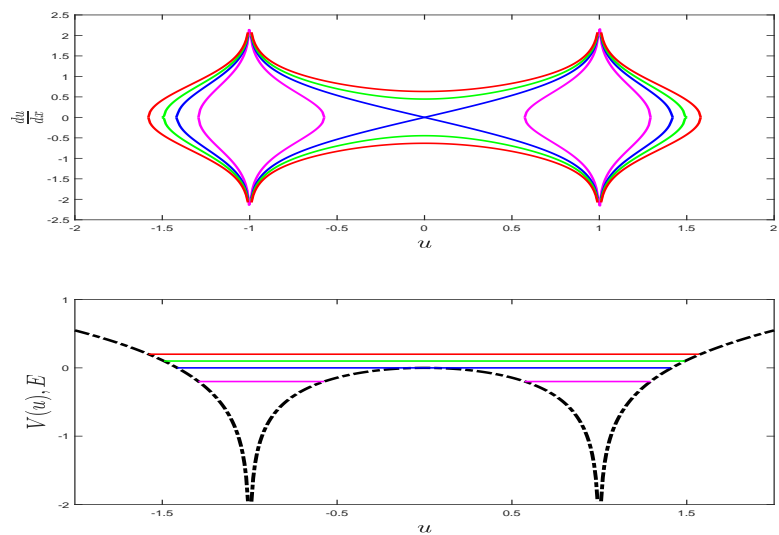}
	\caption{The same as Figure \ref{ppandV_b=-1} but for the potential
		(\ref{potential-positive}) and the energy levels $C = -0.2, 0, 0.1, 0.2, 0.3$.}
	\label{ppandV_b=1}
\end{figure}

\begin{figure}[hbt]
	\centering
	\includegraphics[scale=0.3]{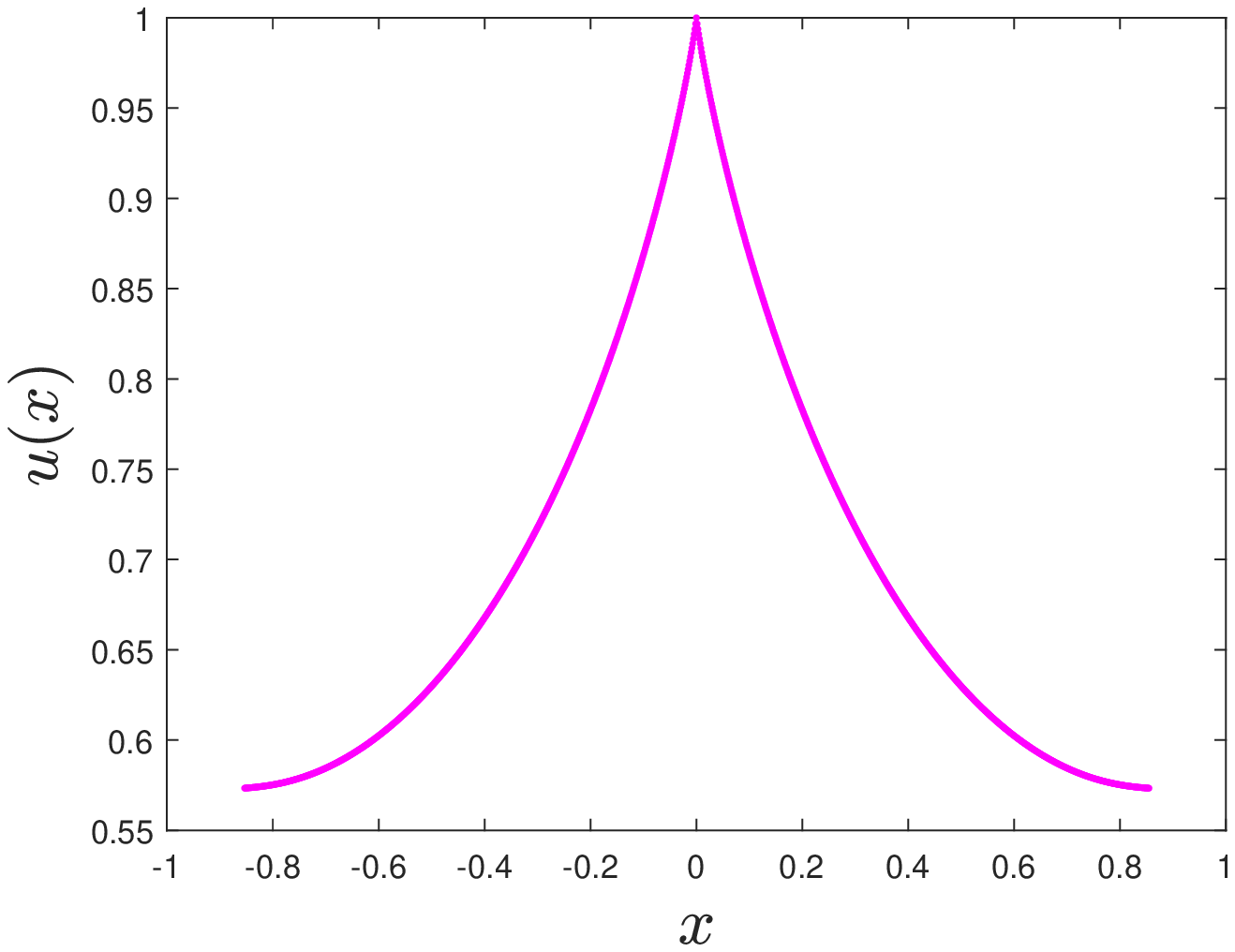}
	\includegraphics[scale=0.3]{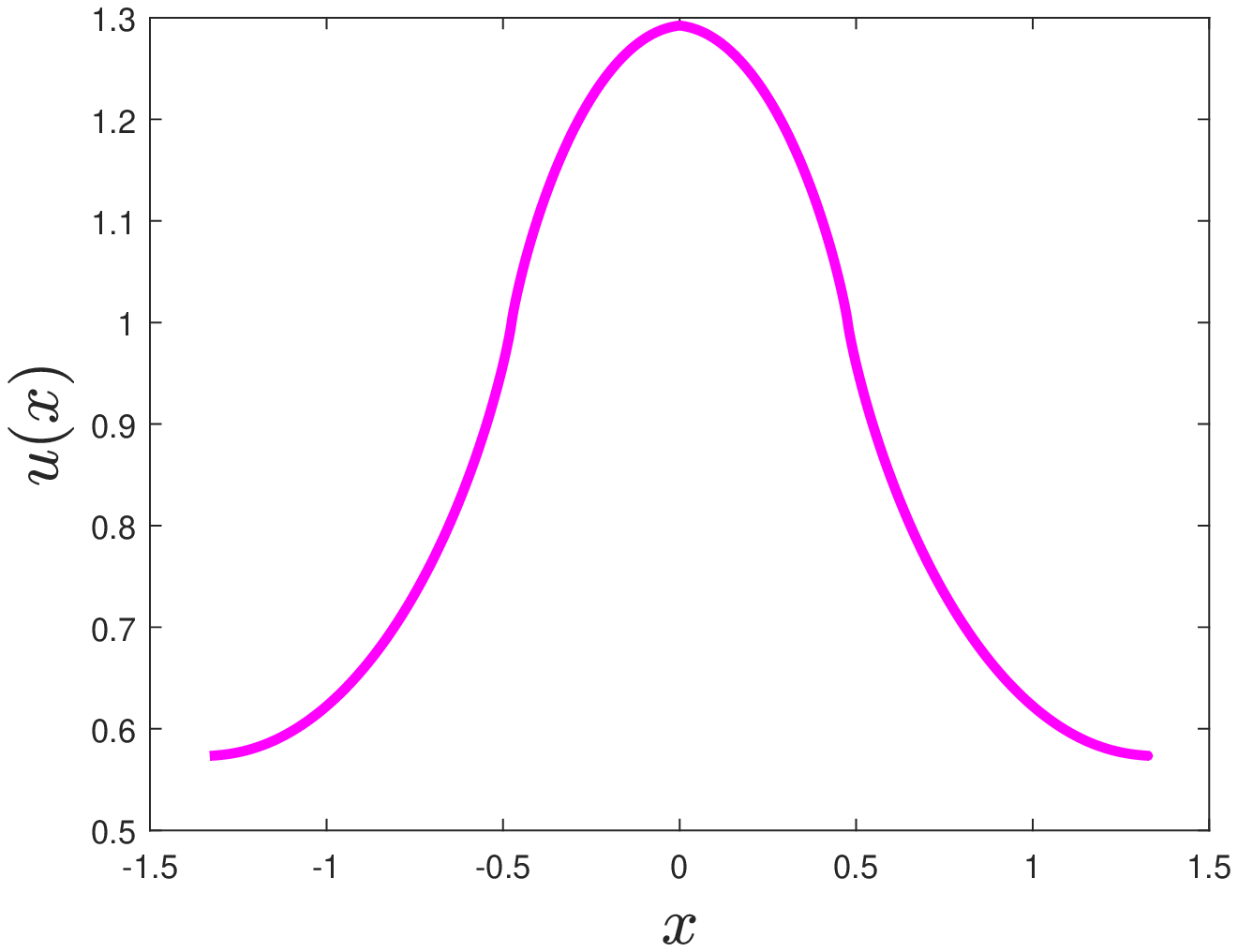}
	\caption{
		The spatial profiles of positive periodic solutions
	 for $C = -0.2$.
	}
	\label{magentasoln}
\end{figure}

\begin{figure}[hbt]
	\centering
	\includegraphics[scale=0.3]{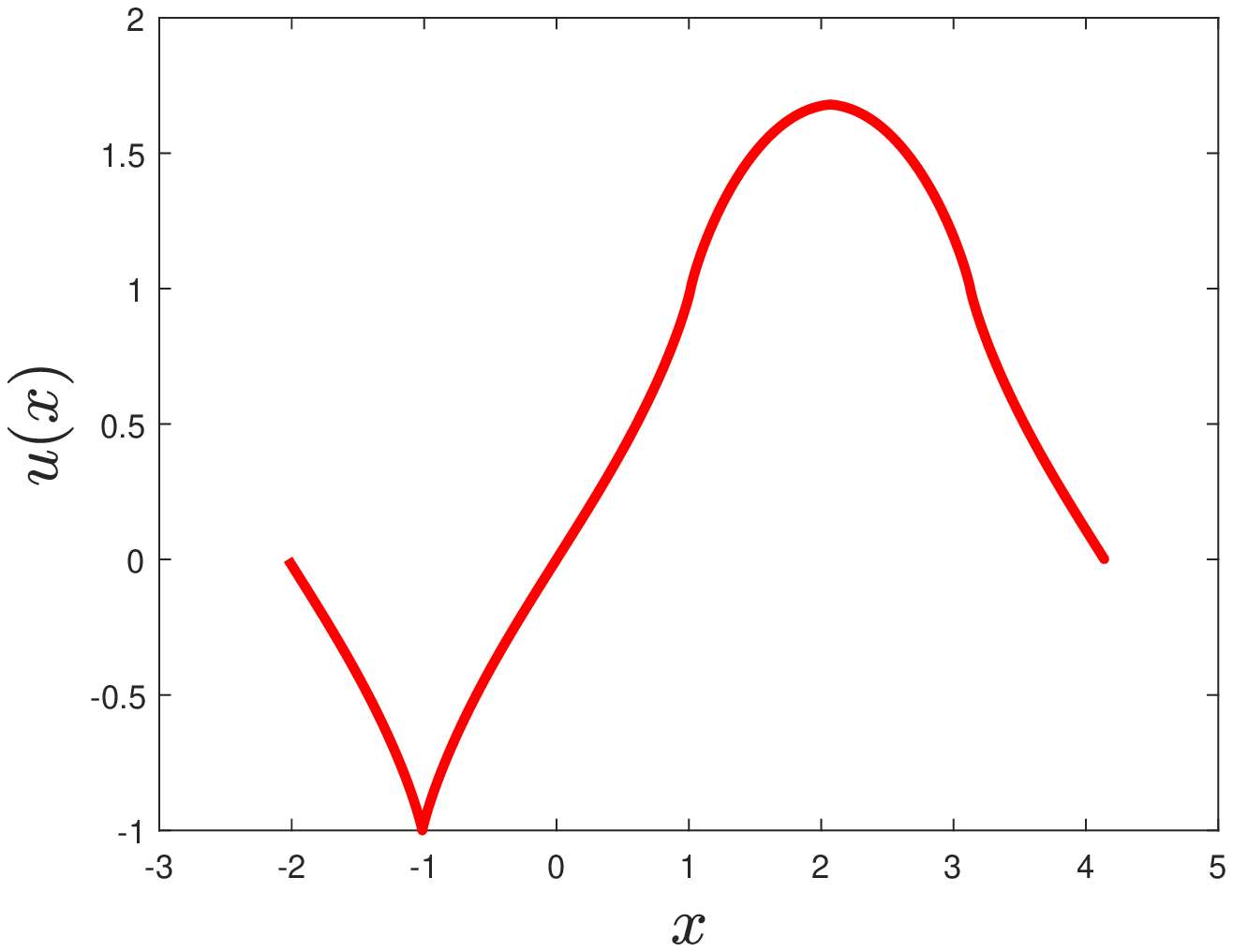}
	\includegraphics[scale=0.3]{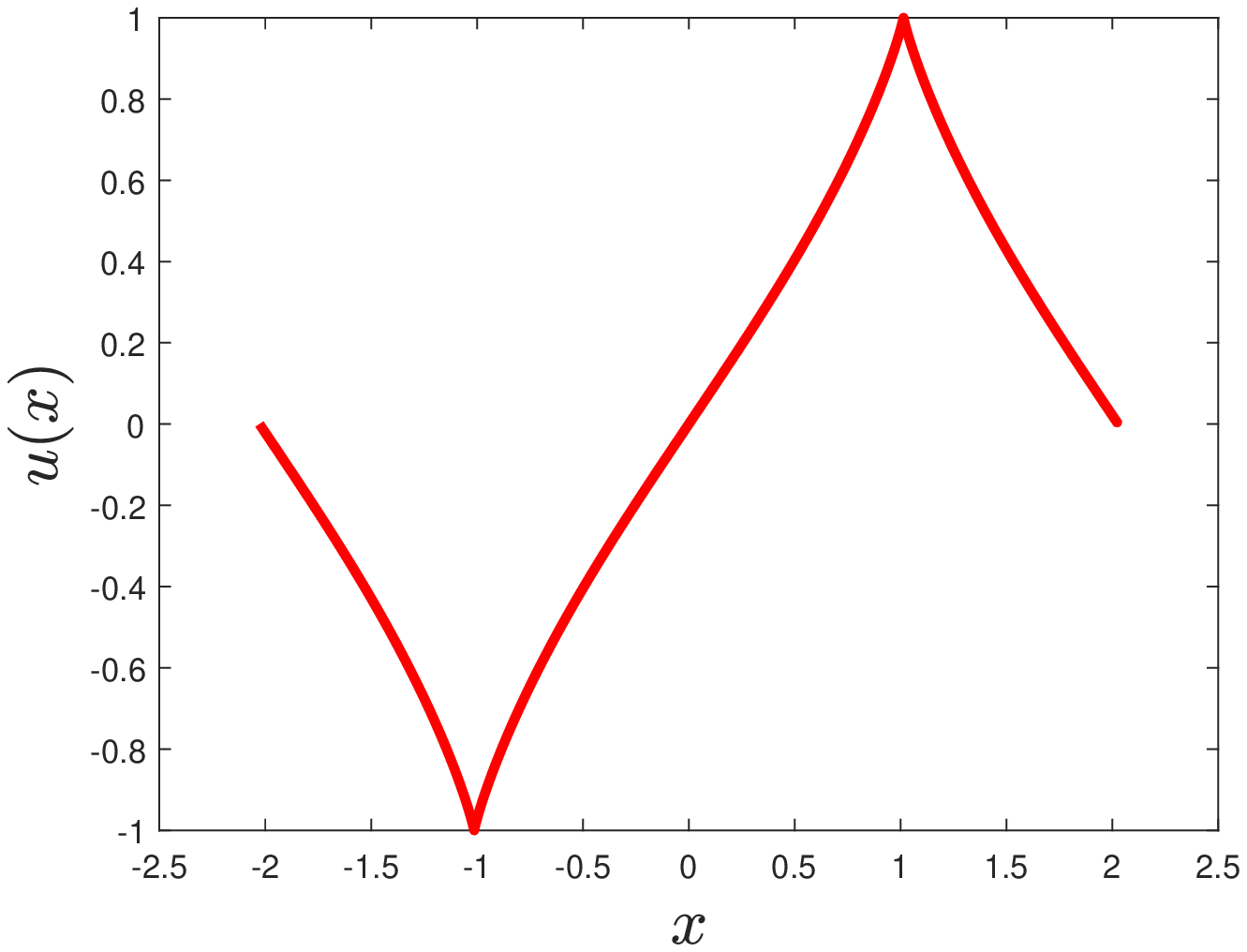}
	\includegraphics[scale=0.3]{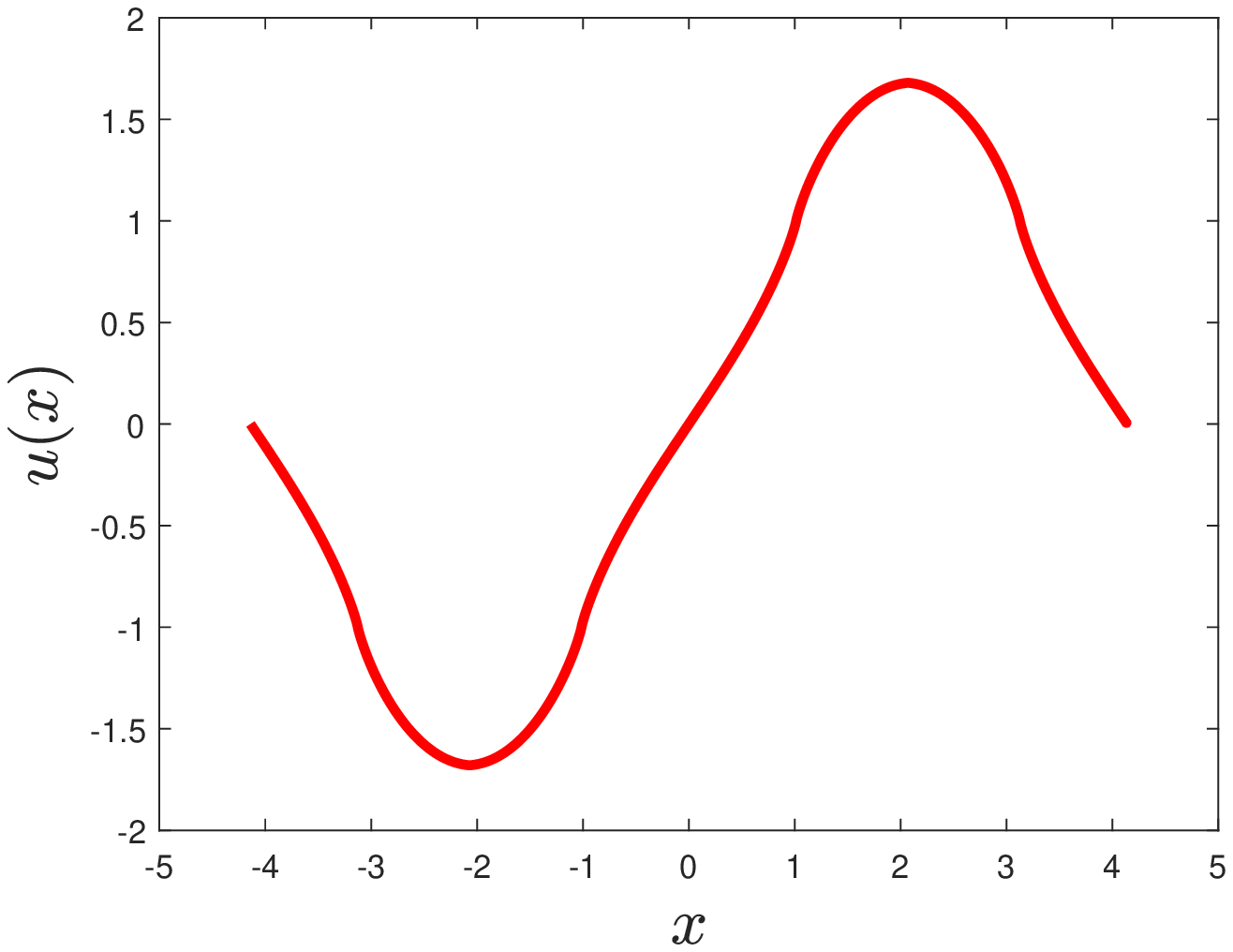}
	\caption{The spatial profiles of sign-definite periodic solutions for $C=0.3$.}
	\label{doublehump}
\end{figure}

\section{Singular behavior near the logarithmic singularity}
\label{sec-singularity}

Although we have formally obtained a one-parameter family of positive, single-humped solitary wave solutions in Proposition \ref{prop-soliton-2}, it remains to justify the existence of such solutions in the weak formulation (\ref{ode-weak}) 
with $c = 1$ and $b = 1$. We do so by clarifying the singular behavior of positive solutions near the logarithmic singularity at $u = 1$ 
and by verifying that the solitary wave solutions belong to $H^1(\mathbb{R})$.

The Newton equation (\ref{Newton}) with $b = 1$ can be rewritten in the form:
\begin{equation}
u''(x) = \frac{u(x)}{1 - u(x)^2}.
\label{2nd}
\end{equation}
Let $u_{\rm cusp}$ denote the cusped soliton. The cusped soliton $u_{\rm cusp}$ is defined by the implicit equation that follows from integration of the first-order invariant (\ref{ode-energy}) with $C = 0$:
\begin{align}
|x-x_0| = \int_{u}^1 \frac{d\xi}{\sqrt{-\log(1-\xi^2)}}, \quad u \in (0,1),
\label{0th}
\end{align}
where $x_0 \in \mathbb{R}$ is arbitrary due to the translational symmetry.
Without loss of generality, we place the cusped soliton $u_{\rm cusp}$ at the origin by selecting $x_0 = 0$ in (\ref{0th}).

Let $u_{\rm bell}$ denote the bell-shaped soliton defined piecewise as follows:
\begin{equation}
\label{bell}
u_{\rm bell}(x) = \left\{ \begin{array}{ll} u_{\rm head}(x), \quad & x \in [-\ell,\ell], \\
u_{\rm cusped}(|x| - \ell), \quad & |x| > \ell, \end{array} \right.
\end{equation}
where $\ell$ is uniquely defined by
\begin{equation}
\label{def-L}
\ell := \int_1^{\sqrt{2}} \frac{du}{\sqrt{|\log(u^2-1)|}}
\end{equation}
and $u_{\rm head}(x) \in [1,\sqrt{2}]$ for $x \in [-\ell,\ell]$ is defined
implicitly by
\begin{equation}
\ell-|x| = \int_1^u \frac{d\xi}{\sqrt{-\log(\xi^2-1)}}, \quad u \in (1,\sqrt{2}].
\label{bell-head}
\end{equation}

Finally, the one-parameter family of solitary wave solutions 
in Proposition \ref{prop-soliton-2} is defined piecewise as follows:
\begin{equation}
\label{cont-family}
u_C(x) = \left\{ \begin{array}{ll} u_{{\rm head},C}(x), \quad & x \in [-\ell_C,\ell_C], \\
u_{\rm cusped}(|x| - \ell_C), \quad & |x| > \ell_C, \end{array} \right.
\end{equation}
where $\ell_C$ is uniquely defined by
\begin{equation}
\label{def-L-C}
\ell_C := \int_1^{\sqrt{1+e^{2C}}} \frac{du}{\sqrt{2C - \log(u^2-1)}}
\end{equation}
and $u_{{\rm head},C}(x)$ for $x \in [-\ell_C,\ell_C]$ is defined
implicitly by
\begin{equation}
\ell_C-|x| = \int_1^u \frac{d\xi}{\sqrt{2C-\log(\xi^2-1)}}, \quad u \in (1,\sqrt{1+e^{2C}}].
\label{bell-head-continuous}
\end{equation}
If $C = 0$, then $u_{C=0} \equiv u_{\rm bell}$ with $\ell_{C=0} \equiv \ell$.
If $C = -\infty$, then $u_{C = -\infty} \equiv u_{\rm cusp}$ with $\ell_{C=-\infty} \equiv 0$.

The following proposition gives the asymptotic behavior of 
$u_{\rm cusp}$ near the logarithmic singularity 
at $u = 1$. The proof follows closely the proof of Lemma 2.4 in \cite{Alfimov}.

\begin{proposition}
	\label{prop-cusped}
	Let $u_{\rm cusp}$ be the cusped soliton given by the implicit
	equation (\ref{0th}) with $x_0 = 0$. Then,
	\begin{align}
	u_{\rm cusp}(x) = 1 - |x| \sqrt{\log(1/|x|)} \left[ 1 + \mathcal{O}\left(\frac{\log\log(1/|x|)}{\log(1/|x|)}\right) \right], \quad \mbox{\rm as} \quad |x| \to 0,
	\label{asympt-sol}
	\end{align}
	where $\mathcal{O}(v)$ denotes a $C^1$ function of $v$ near $v = 0^+$.
\end{proposition}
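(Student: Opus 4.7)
The plan is to extract the asymptotics directly from the implicit equation (\ref{0th}) via Laplace-type analysis and a careful inversion step. Set $v = 1 - u \in (0,1)$ and change variables $\eta = 1 - \xi$ in (\ref{0th}) with $x_0 = 0$, so that
\begin{equation*}
|x| \;=\; F(v) \;:=\; \int_0^v \frac{d\eta}{\sqrt{L(\eta)}}, \qquad L(\eta) := -\log(1-(1-\eta)^2) = \log(1/\eta) - \log(2-\eta).
\end{equation*}
Since $L(\eta) = \log(1/\eta) - \log 2 + \mathcal{O}(\eta)$ as $\eta \to 0^+$, the integrand is essentially $(\log(1/\eta))^{-1/2}$, and the main contribution to $F(v)$ comes from the endpoint $\eta = v$.

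Next, I would integrate by parts once to obtain the leading behavior with explicit error control. Writing $F(v) = \int_0^v 1 \cdot L(\eta)^{-1/2}\,d\eta$ and integrating by parts with primitive $\eta$ gives
\begin{equation*}
F(v) \;=\; \frac{v}{\sqrt{L(v)}} \;+\; \frac{1}{2}\int_0^v \frac{\eta L'(\eta)}{L(\eta)^{3/2}}\, d\eta.
\end{equation*}
A direct estimate shows $\eta L'(\eta) = -1 + \mathcal{O}(\eta)$ near $0$, and consequently the remainder integral is bounded by a multiple of $\int_0^v L(\eta)^{-3/2}\,d\eta = \mathcal{O}(v/(\log(1/v))^{3/2})$. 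Combining this with $L(v) = \log(1/v)(1 - \log 2/\log(1/v) + \mathcal{O}(v))$, I get
\begin{equation*}
F(v) \;=\; \frac{v}{\sqrt{\log(1/v)}}\Bigl[\,1 + R(v)\,\Bigr], \qquad R(v) = \mathcal{O}\!\left(\frac{1}{\log(1/v)}\right),
\end{equation*}
with $R$ smooth on $(0,1)$.

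The last step is to invert $|x| = F(v)$. Squaring and solving gives $v^2 = |x|^2 \log(1/v)\cdot (1 + R(v))^{-2}$, so $v = |x|\sqrt{\log(1/v)}(1 + \widetilde R(v))$. Taking the logarithm yields $\log(1/v) = \log(1/|x|) - \tfrac{1}{2}\log\log(1/v) + \mathcal{O}(1)$, which after one more bootstrap (substituting $\log(1/v) = \log(1/|x|)(1 + o(1))$ back into itself) gives
\begin{equation*}
\log(1/v) \;=\; \log(1/|x|)\left[1 + \mathcal{O}\!\left(\frac{\log\log(1/|x|)}{\log(1/|x|)}\right)\right].
\end{equation*}
Substituting into $v = |x|\sqrt{\log(1/v)}(1 + \widetilde R(v))$ produces the claimed expansion (\ref{asympt-sol}), and the $C^1$ statement on the remainder follows because each step (integration by parts, squaring, taking logarithms, bootstrap) preserves $C^1$ smoothness on $(0,\delta)$ and the implicit function theorem applies on $\{v > 0\}$ since $F'(v) = 1/\sqrt{L(v)} > 0$.

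The main obstacle is the inversion and the tracking of the $C^1$ regularity of the error: one has to verify that after bootstrapping the relation $\log(1/v) = \log(1/|x|) + \tfrac12 \log\log(1/v) + O(1)$, the resulting remainder $\mathcal{O}(\log\log(1/|x|)/\log(1/|x|))$ is not merely continuous but differentiable with a derivative bounded by the same size (up to the natural $1/|x|$ factor coming from the chain rule). This is exactly the content that justifies citing the analogous Lemma 2.4 in~\cite{Alfimov}, where the same bootstrap is carried out rigorously; the proof reduces to checking that $F'$ is strictly positive and smooth near $0^+$ and that differentiating the bootstrap identity once does not enlarge the error class.
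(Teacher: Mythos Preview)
Your proof is correct and follows essentially the same route as the paper: the same substitution $v=1-u$, the same integration-by-parts identity to obtain $|x|=v/\sqrt{\log(1/v)}\,[1+\mathcal{O}(1/\log(1/v))]$, and then inversion. The only cosmetic difference is that the paper organizes the inversion by writing $v=|x|\sqrt{|\log|x||}\,w(x)$ and applying the implicit function theorem once to the resulting equation for $w$, whereas you bootstrap the logarithmic identity; both yield the same error term and the same $C^1$ control.
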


\begin{proof}
We make the substitution $u = 1 - v$ and expand the integral in (\ref{0th}) with $x_0 = 0$ as follows:
	\begin{align}
	\nonumber
	|x| & = \int_{0}^v \frac{d\eta}{\sqrt{|\log(\eta)| \left(1 + \frac{\log(2-\eta)}{\log(\eta)}\right)}} \\
	& =
	\int_{0}^v \frac{d\eta}{\sqrt{|\log(\eta)|}} \left[ 1 + \mathcal{O}\left(\frac{1}{|\log(\eta)|}\right) \right] \quad \mbox{\rm as} \quad v \to 0^+.
	\label{0th-expand}
	\end{align}
	Since 
	$$
	\frac{d}{dv} \left[ \frac{v}{\sqrt{|\log(v)|}} \right] = \frac{1}{\sqrt{|\log(v)|}}
	+ \frac{1}{2 \sqrt{|\log(v)|^3}}, \quad v \in (0,1),
	$$
	we obtain from (\ref{0th-expand}) by integration by parts:
	\begin{align}
	|x| = \frac{v}{\sqrt{|\log(v)|}} \left[ 1 + \mathcal{O}\left(\frac{1}{|\log(v)|}\right)
	\right] \quad \mbox{\rm as} \quad v \to 0^+.
	\label{0th-integral}
	\end{align}
	Setting $v(x) = |x| \sqrt{|\log|x||} w(x)$ into (\ref{0th-integral}) yields
	the nonlinear equation
	\begin{align}
	w(x) = \sqrt{1 + \frac{\log|\log|x|| + 2 \log(w)}{2 \log|x|}} \left[ 1 + \mathcal{O}\left(\frac{1}{|\log(x)|}\right)
	\right] \quad \mbox{\rm as} \quad x \to 0,
	\label{0th-integral-w}
	\end{align}
	from which the existence and uniqueness of the root
	$w(x) = 1 + \mathcal{O}(\frac{\log|\log|x||}{|\log|x||})$ as $x \to 0$ is proved with
	the implicit function theorem since all correction terms are $C^1$ functions of $x$ and $w$. Substituting all transformations back gives the asymptotic expansion (\ref{asympt-sol}).
\end{proof}

\begin{remark}
With a similar transformation for the integral in (\ref{bell-head}), 
one can show that the bell-shaped soliton $u_{\rm bell}$  given by (\ref{bell}), (\ref{def-L}), and (\ref{bell-head}) admits the behavior
\begin{align}
u_{\rm bell}(x) = 1 + (\ell-|x|) \sqrt{|\log|\ell-|x|||} \left[ 1 + \mathcal{O}\left(\frac{\log|\log|\ell-|x|||}{|\log|\ell-|x|||}\right) \right], \quad \mbox{\rm as} \quad |x| \to \ell.
\label{asympt-sol-bell}
\end{align}
Similarly, the one-parameter family $u_C$ of solitary wave solutions given by (\ref{cont-family}), (\ref{def-L-C}), and (\ref{bell-head-continuous}) 
admits the behavior
\begin{align}
\nonumber
u_C(x) &= 1 + (\ell_C-|x|) \sqrt{|\log|\ell_C-|x|||} \left[ 1 + \mathcal{O}\left(\frac{\log|\log|\ell_C-|x|||}{|\log|\ell_C-|x|||}\right) \right. \\
 & \qquad \left.
+ \mathcal{O}_C\left(\frac{1}{|\log|\ell_C-|x|||}\right) \right], \quad \mbox{\rm as} \quad |x| \to \ell_C,
\label{asympt-sol-bell-cont}
\end{align} 
where $\mathcal{O}_C$ denotes remainder terms that depend on parameter $C \in \mathbb{R}$ for $x \in [-\ell_C,\ell_C]$.
\end{remark}

Figure \ref{fig-sol-asympts} shows a very good agreement of the two solutions  $u_{\rm cusp}$ and $u_{\rm bell}$ with their leading order approximations given by \eqref{asympt-sol} and \eqref{asympt-sol-bell}. 

\begin{figure}[ht]
	\includegraphics[width=8cm,height = 6cm]{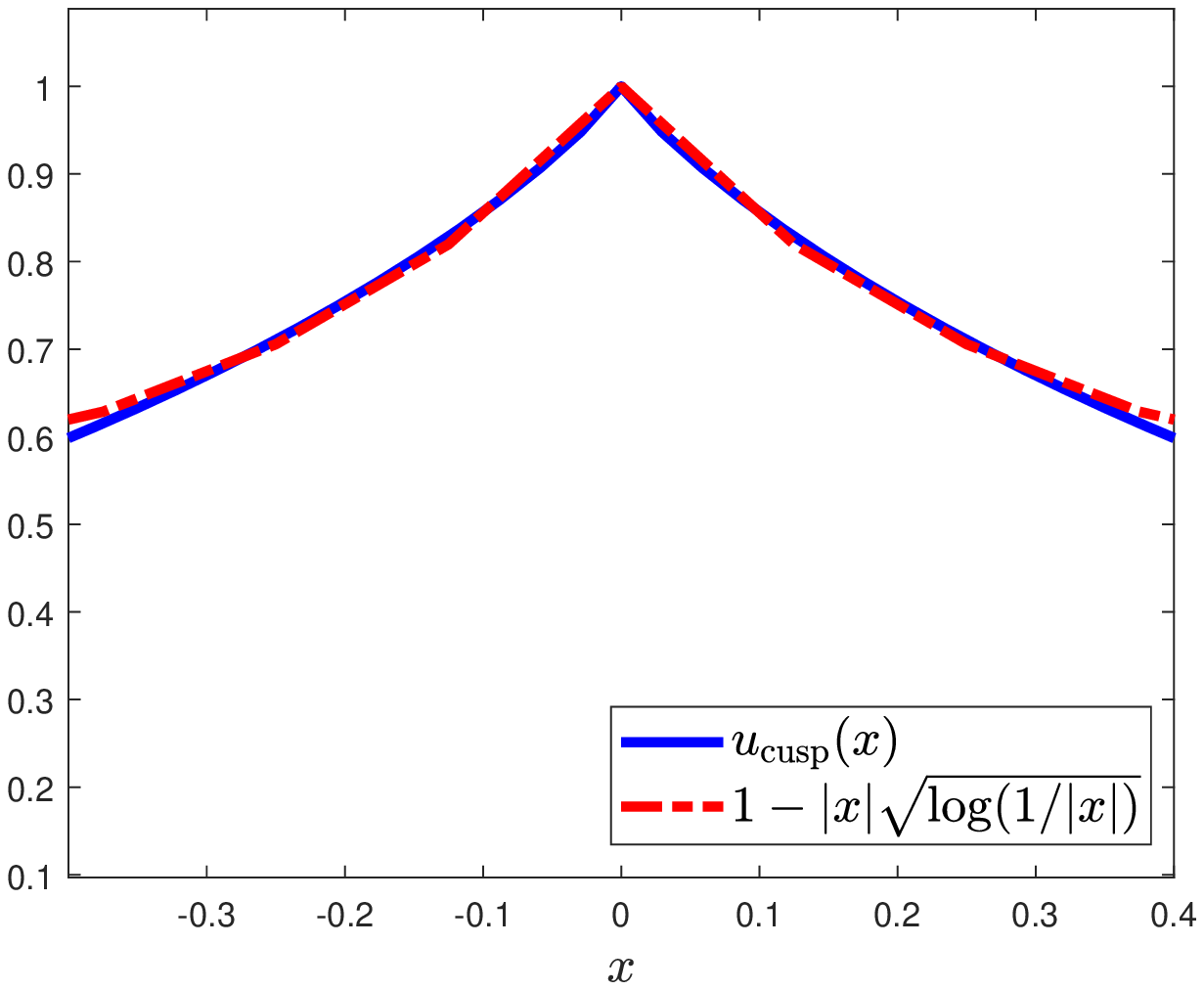}
	\includegraphics[width=8cm,height = 6cm]{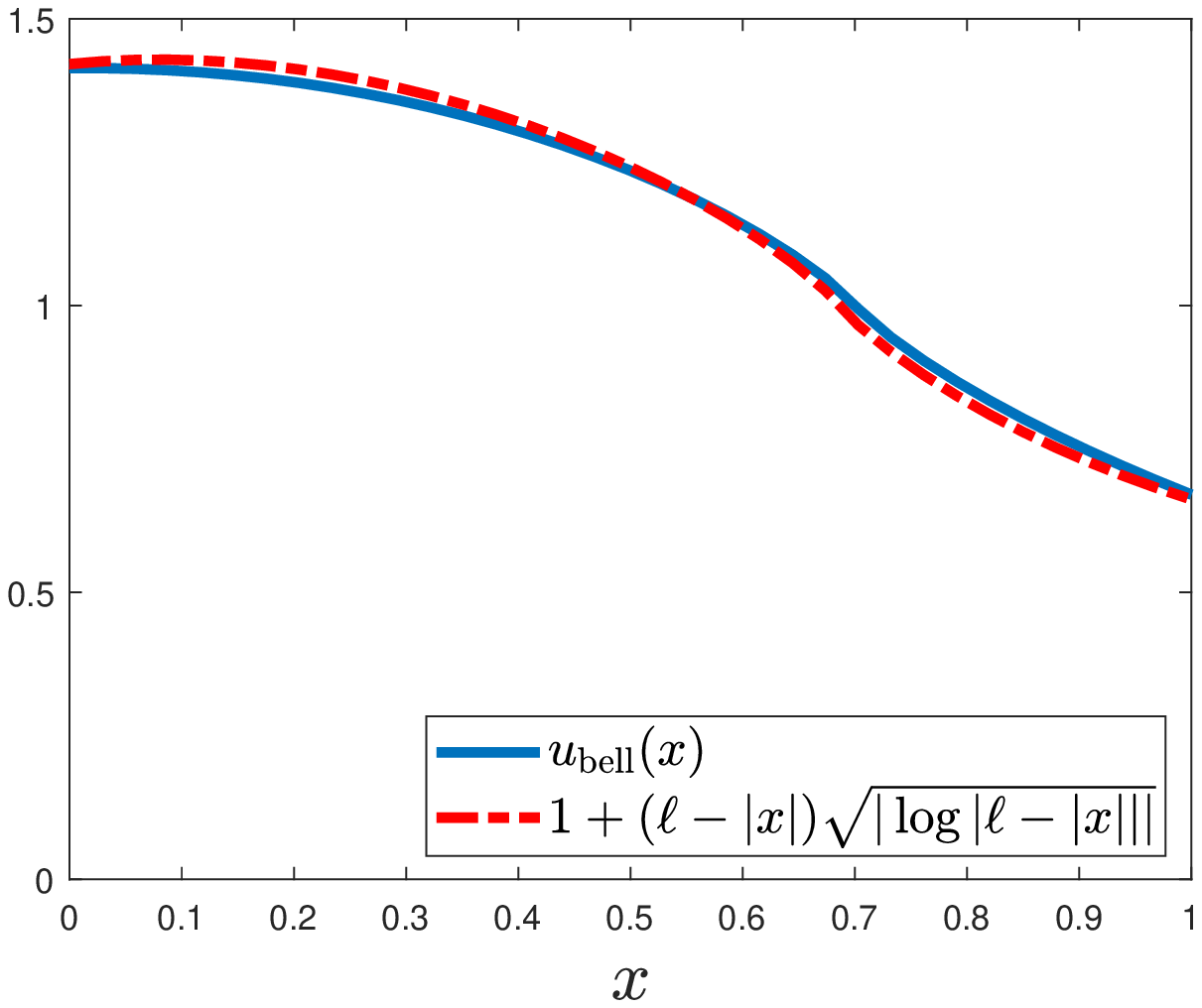}
	\caption{The leading-order approximations given by \eqref{asympt-sol} and \eqref{asympt-sol-bell} superposed with the numerically detected
		cusped (left) and bell-shaped (right) solitary waves.
	}
	\label{fig-sol-asympts}
\end{figure}

The solutions $u_{\rm cusp}$ and $u_{\rm bell}$ in Figures \ref{bluesolns} and \ref{fig-sol-asympts} are obtained numerically as follows. Since the cusped and bell-shaped solitons are even, we solve the implicit equations (\ref{0th}) and (\ref{bell-head}) for $x > 0$ and obtain the other half by the symmetry. To solve the integral equations, we discretize the computational domain $[0,L]$, and approximate the relevant integrals by the midpoint rule on the grid. This yields a nonlinear system of equations for the values of the solution $u$ at grid points, which is then solved using Newton's method.
 
For the cusped soliton, we solve \eqref{0th} with the method described above. For the bell-shaped soliton, we first obtain the solution on $[0,\ell]$, by solving \eqref{bell-head} for $u_{\rm head}(x)$ in the same way. 
The constant $\ell$ is computed from the integral \eqref{def-L} as $\ell \approx 0.6862$. Then, we construct the entire bell-shaped soliton according to \eqref{bell}. For $|x|>\ell$, the solution is defined using the shifted cusped soliton, so we use the cusped soliton already obtained from solving \eqref{0th}. \\

\vspace{0.2cm}

We are now ready to prove Theorem \ref{theorem-main1}. 
By Proposition \ref{prop-soliton-2}, a one-parameter family of positive and single-humped solitary wave solutions of the second-order equation (\ref{2nd}) exists. The solutions are continuous and decay to zero as $|x| \to \infty$ exponentially fast. By Proposition \ref{prop-cusped}, 
$u'(x)$ has infinite jump singularities but the singularities are weak 
so that $u_{\rm cusp}, u_{\rm bell}, u_C \in H^1(\mathbb{R})$. Moreover, 
$u_{\rm cusp}, u_{\rm bell}, u_C \in X \subset H^1(\mathbb{R})$. Each smooth part of the solution 
in $u_{\rm cusp}$,  $u_{\rm bell}$, and $u_C$ 
satisfies the weak formulation in (\ref{ode-weak}) 
for $c = 1$ and $b = 1$ with compactly supported test functions $\varphi$ in appropriate regions of $\mathbb{R}$. The weak formulation in Definition \ref{def-weak-solution} does not 
impose any jump conditions on derivatives of $u$ at the breaking points where $u = 1$.
The proof of Theorem \ref{theorem-main1} is complete.

\section{Numerical methods for solitary wave solutions}

Here we study convergence of the three numerical methods 
used to obtain solitary wave solutions in the differential 
equation (\ref{ode}) with $b = c = 1$, which is also written 
as  (\ref{2nd}).

\subsection{Bell-shaped soliton via regularization}

A natural regularization of the singular second-order equation (\ref{2nd})
is given by 
\begin{align}
u_\eps'' = \frac{u_\eps(1-u_\eps^2)}{(1-u_\eps^2)^2+\eps^2},
\label{reg2}
\end{align}
where $\eps > 0$ is a small parameter. The formal limit $\eps \to 0$
recovers (\ref{2nd}). The first-order invariant for the regularized
equation (\ref{reg2}) is given by 
\begin{align}
\label{reg1}
\frac{1}{2} \left( \frac{d u_\eps}{dx} \right)^2 + V_\eps(u_\eps)=C
\end{align}
with the potential $V_{\eps}(u)$ given by
\begin{align}
V_\eps(u) = \frac{1}{2}\log\bigg[\frac{\sqrt{(1-u^2)^2+ \eps^2}}{\sqrt{1+\eps^2}}\bigg],
\end{align}
where the denominator ensures that the critical point $(0,0)$ still corresponds
to the level $C = 0$. Figure \ref{fig-reg} shows the level curves of the regularized first-order invariant (\ref{reg1}). Figure \ref{H1_conv} shows the profiles of the bell-shaped soliton for different values of $\eps > 0$ (left) 
and illustrates the convergence $u_\eps \to u_{\rm bell}$ in the
$H^1(\R)$ norm as $\eps\to 0$ (right). The following proposition justifies these numerical results analytically.

\begin{proposition}
	\label{prop-regularization}
	For every $\eps > 0$, there exists only one smooth positive solitary wave solution $u_{\eps}$ of the second-order equation (\ref{reg2}) such that $0 < u_{\eps}(x) \leq \sqrt{2}$. Moreover, 
	$$
	\| u_{\eps} - u_{\rm bell}\|_{H^1} \to 0 \quad \mbox{\rm as} \quad \eps \to 0.
	$$
\end{proposition}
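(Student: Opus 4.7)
The plan is to first establish existence and uniqueness of the smooth solitary wave $u_\eps$ for each $\eps > 0$ via phase-plane analysis of the regularized Newton equation (\ref{reg2}), and then establish $H^1$-convergence $u_\eps \to u_{\rm bell}$ by combining pointwise convergence of the profiles with convergence of $H^1$-norms.

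For existence and uniqueness, observe that $V_\eps$ is smooth on $\R$ with $V_\eps(0) = V_\eps(\pm\sqrt{2}) = 0$, $V_\eps(u) < 0$ on $(0, \sqrt{2})$, and $V_\eps''(0) = -1/(1+\eps^2) < 0$, so the origin is a hyperbolic saddle for (\ref{reg2}) in the phase plane $(u, u')$. The level set $\{C = 0\}$ in the upper half-plane therefore consists of a single homoclinic orbit passing through $(\sqrt{2}, 0)$, which yields (up to translation) a unique smooth, even, positive, single-humped solution $u_\eps$ with $u_\eps(0) = \sqrt{2}$ and $u_\eps(x) \to 0$ as $|x| \to \infty$. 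The bound $0 < u_\eps(x) \le \sqrt{2}$ follows from the turning-point structure at $C = 0$.

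For convergence, I would normalize $u_\eps$ and $u_{\rm bell}$ to be even with maxima at $x = 0$. Inverting (\ref{reg1}) with $C = 0$ on $[0, \infty)$ gives $x = T_\eps(u_\eps(x))$, where
$$
T_\eps(v) := \int_v^{\sqrt{2}} \frac{du}{\sqrt{-2V_\eps(u)}},
$$
and analogously $x = T(u_{\rm bell}(x))$ with $V$ replacing $V_\eps$ (piecing the integrals (\ref{def-L}), (\ref{bell-head}) across the singularity $u = 1$). A direct computation yields $V_\eps(u) - V(u) = \tfrac{1}{4}\log\frac{(1-u^2)^2 + \eps^2}{(1-u^2)^2(1+\eps^2)} > 0$ for $u \in (0,\sqrt{2})\setminus\{1\}$, with $V_\eps \to V$ pointwise. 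To prove $T_\eps(v) \to T(v)$ at each $v \in (0,\sqrt{2})\setminus\{1\}$, I would split the integral into a piece away from $u = 1$ (where $V_\eps \to V$ uniformly and convergence is immediate) and a piece in a small neighborhood of $u = 1$, controlled uniformly in $\eps$ using the elementary lower bound $-2V_\eps(u) \ge \tfrac{1}{2}\log\frac{1}{\sqrt{2}\max(|1-u^2|,\eps)}$. Strict monotonicity of $T_\eps$ and continuity of $T$ on $(0,\sqrt{2})\setminus\{1\}$ then yield pointwise convergence $u_\eps(x) \to u_{\rm bell}(x)$ for every $x \ne \ell$.

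To upgrade to $H^1$-convergence, the change of variables $x \mapsto u_\eps(x)$ on $[0,\infty)$ (using $u_\eps' = -\sqrt{-2V_\eps(u_\eps)}$) yields
$$
\|u_\eps'\|_{L^2(\R)}^2 = 2\int_0^{\sqrt{2}} \sqrt{-2V_\eps(u)}\, du,
$$
and since $V_\eps > V$ gives $\sqrt{-2V_\eps(u)} \le \sqrt{-2V(u)} \in L^1(0,\sqrt{2})$, dominated convergence provides $\|u_\eps'\|_{L^2}^2 \to \|u_{\rm bell}'\|_{L^2}^2$. Near infinity the equation linearizes to $u_\eps'' \approx u_\eps/(1+\eps^2)$, giving a uniform exponential bound $u_\eps(x) \le C e^{-|x|/\sqrt{2}}$ for $\eps \in (0,1]$, which together with pointwise convergence yields $\|u_\eps - u_{\rm bell}\|_{L^2} \to 0$ by dominated convergence. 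Pointwise convergence $u_\eps'(x) \to u_{\rm bell}'(x)$ a.e.\ combined with convergence of $L^2$-norms then gives strong $L^2$-convergence of the derivatives (via the standard Hilbert-space fact that weak convergence plus norm convergence implies strong convergence). The main technical obstacle will be the uniform control of $T_\eps$ in the transition region $|u-1| = O(\eps)$, where $V_\eps$ differs significantly in shape from the singular $V$: showing that the ``time'' the orbit spends crossing this region vanishes uniformly as $\eps \to 0$ is essential both for the pointwise convergence and for the dominated-convergence estimates used on the $L^2$-norms.
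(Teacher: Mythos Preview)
Your proposal follows the same strategy as the paper's proof: existence via phase-plane analysis of the smooth level set $C=0$ with turning point at $u=\sqrt{2}$, pointwise convergence of the profiles via the implicit time-to-maximum map $T_\eps(v)=\int_v^{\sqrt{2}}(-2V_\eps)^{-1/2}\,du$, $L^2$-convergence from uniform exponential tails, and convergence of $\|u_\eps'\|_{L^2}$ from the energy identity combined with the Hilbert-space fact that weak convergence plus norm convergence gives strong convergence. Your handling of the transition layer $|u-1|=O(\eps)$ is actually more careful than the paper's, which invokes dominated convergence for $(-2V_\eps)^{-1/2}$ without exhibiting a dominating function (since $V_\eps\ge V$ one has $(-2V_\eps)^{-1/2}\ge(-2V)^{-1/2}$, so your splitting argument is exactly the right repair).
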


\begin{proof}
	The second-order equation (\ref{reg2}) and its first-order invariant (\ref{reg1}) are smooth for every $u \in \mathbb{R}$ if $\eps > 0$. 
	The positive solitary wave solution corresponds to the level $C = 0$, for which the turning point is located at $u = \sqrt{2}$ for every $\eps > 0$. The positive solitary wave solution is defined up to the translation in $x$ by the implicit equation:
\begin{align}
\label{reg-int}
|x| = \int_u^{\sqrt 2} \frac{d\xi}{\sqrt{-2V_\eps(\xi)}}, \qquad 
u \in (0,\sqrt 2),
\end{align}
where the integrand has a weak singularity at $\xi = \sqrt{2}$ and is smooth for any $\xi \in (u,\sqrt{2})$. This gives $u_{\eps} \in C^{\infty}(\mathbb{R})$ 
satisfying $0 < u_{\eps}(x) \leq \sqrt{2}$. Since 
$V_{\eps}(x) \to V(x)$ as $\eps \to 0$ for every $x \in \mathbb{R}$ 
and $|V|^{-1/2}, |V_{\eps}|^{-1/2} \in L^1(u,\sqrt{2})$ for every $u \in (0,\sqrt{2}]$, Lebesgue's dominated convergence theorem implies 
that $u_{\eps}(x) \to u_{\rm bell}(x)$ as $\eps \to 0$  for every $x \in \mathbb{R}$. Because $u_{\eps}(x), u_{\rm bell}(x) \to 0$ as $|x| \to \infty$ 
exponentially fast  with the same rate, the pointwise convergence implies that $\| u_{\eps} - u_{\rm bell} \|_{L^2} \to 0$ as $\eps \to 0$. Since $u_{\eps}', u_{\rm bell}' \in L^2(\mathbb{R})$, the first-order invariant (\ref{reg1}) with $C = 0$ implies $\| u_{\eps}' \|_{L^2} \to \| u_{\rm bell}' \|_{L^2}$ as $\eps \to 0$, 
which yields $\| u_{\eps}' - u_{\rm bell}' \|_{L^2} \to 0$ as $\eps \to 0$. Hence $\| u_{\eps} - u_{\rm bell} \|_{H^1} \to 0$ as $\eps \to 0$.
\end{proof}

\begin{remark}
The implicit equation \eqref{reg-int} was solved numerically with Newton's method for some $x \in \mathbb{R}$ in order to obtain the bell-shaped soliton $u_{\eps}$ shown on Fig. \ref{H1_conv}.
\end{remark}

\begin{figure}[htpb]
	\centering
	\includegraphics[scale=0.7]{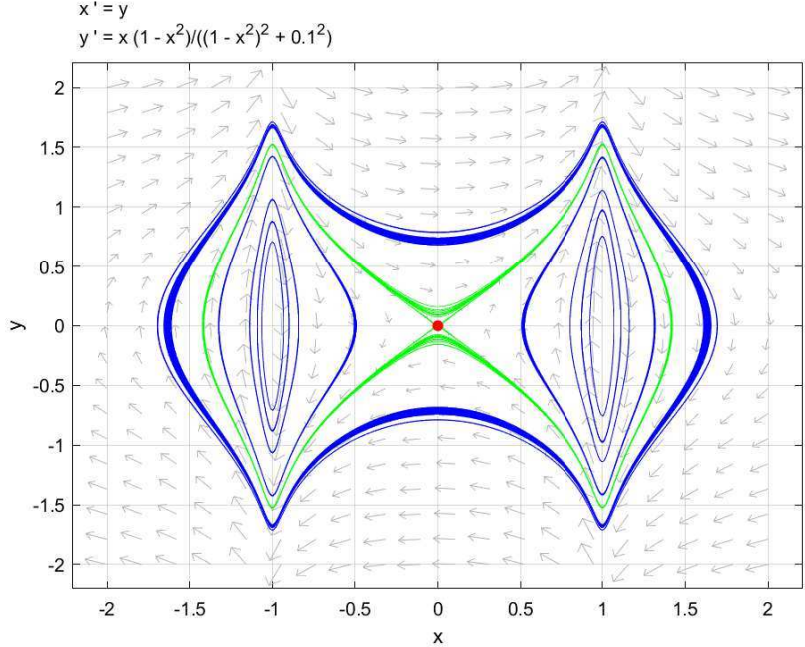}
	\caption{Phase portrait for the regularized equation \eqref{reg2} with $\eps=0.1$. The bell-shaped soliton is shown by green line.}
	\label{fig-reg}
\end{figure}

\begin{figure}[htpb]
	\includegraphics[width=8cm,height = 6cm]{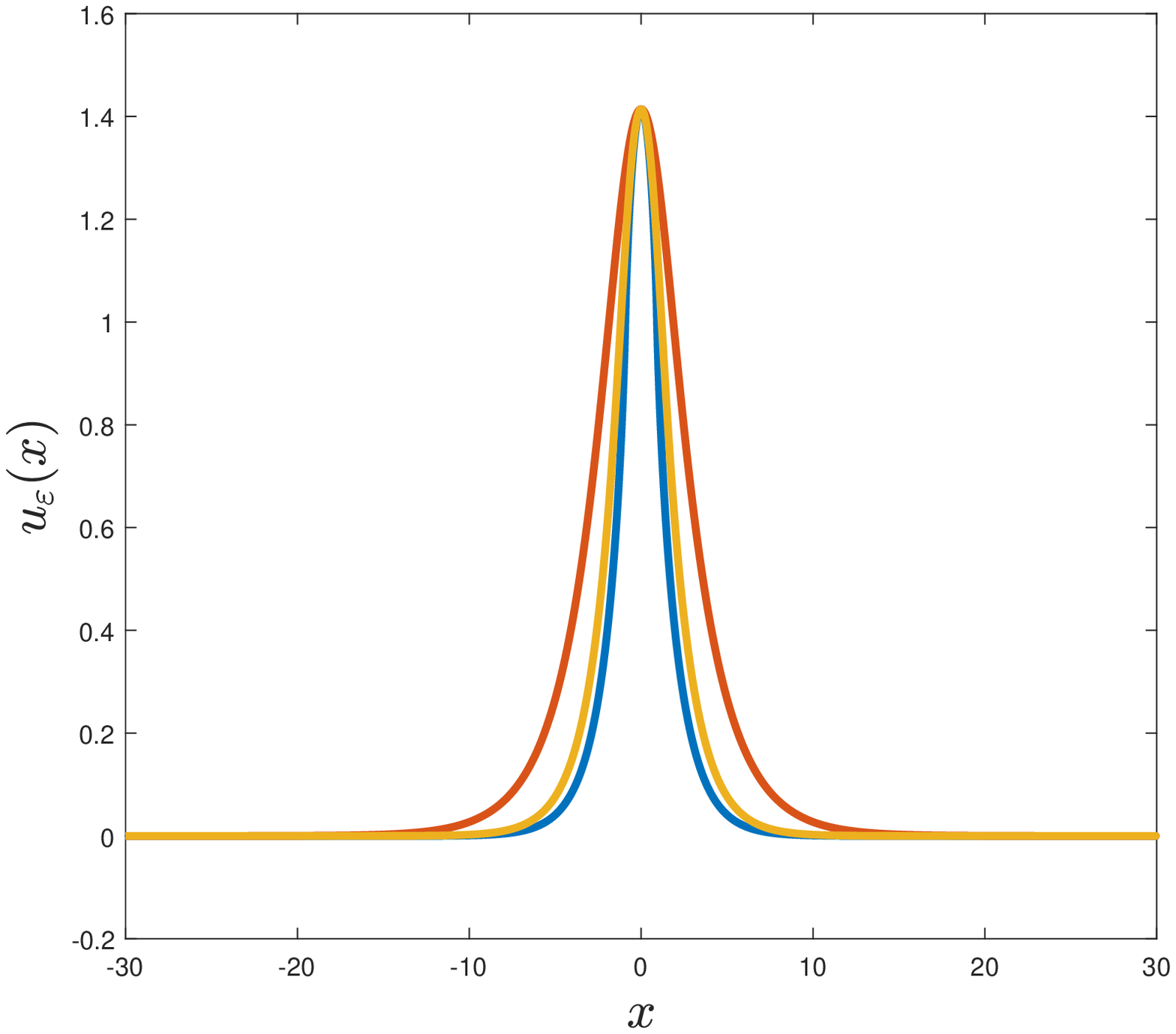}
	\includegraphics[width=8cm,height = 6cm]{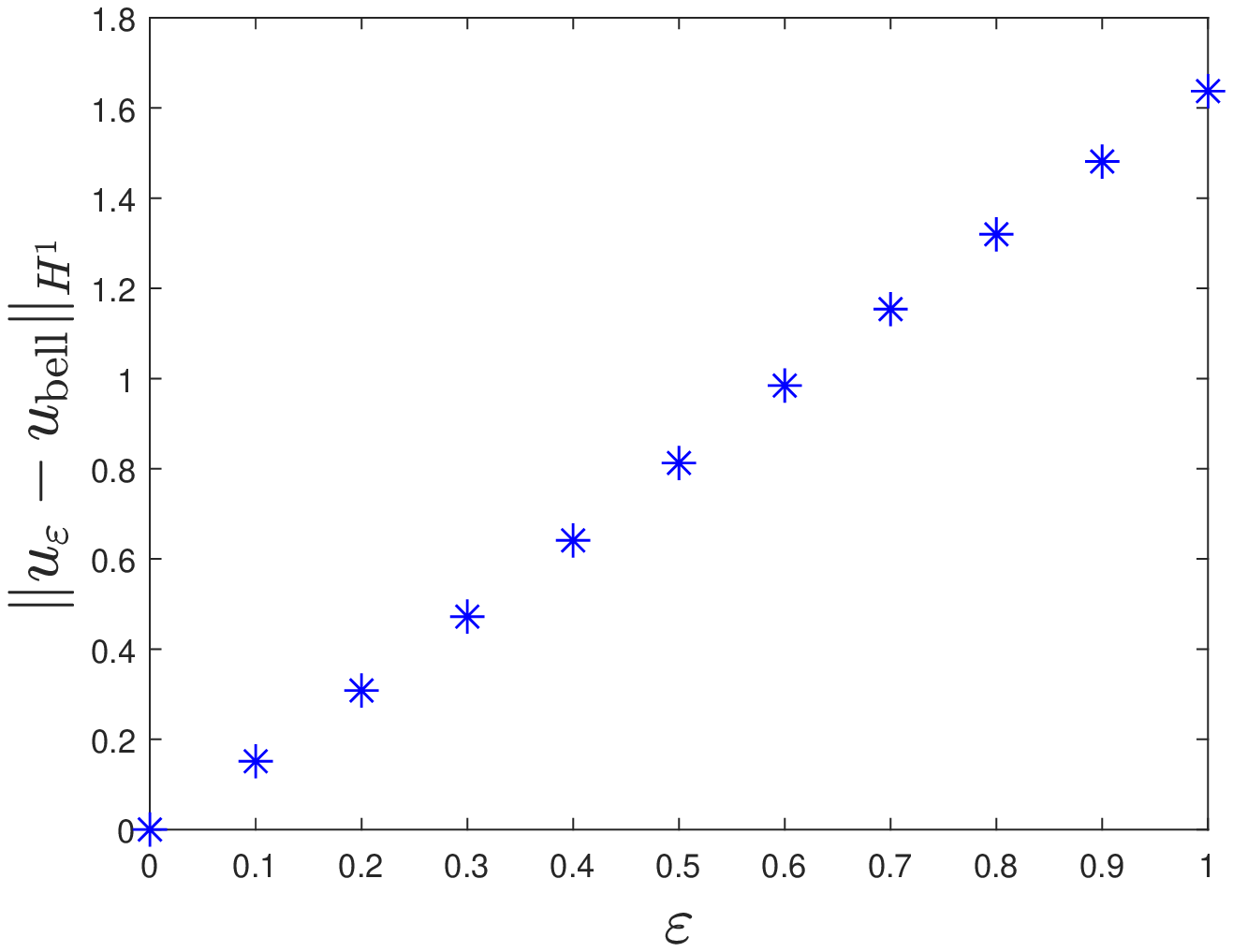}
	\caption{The spatial profile of the  bell-shaped soliton
		$u_\eps(x)$
		of the regularized equation \eqref{reg2} for $\eps = 2,1,0$ (left). Convergence of $u_{\eps}$ to $u_{\rm bell}$
		in the $H^1(\R)$ norm as $\eps\to 0$ (right).}
	\label{H1_conv}
\end{figure}

\subsection{Cusped soliton via Petviashvili's method}

We rewrite the stationary equation (\ref{2nd}) into the following equivalent form:
\begin{equation}
\label{seq-1}
u = (1-u^2) u'' \quad \Rightarrow u - u'' = -u^2 u''.
\end{equation}
A solution $u$ to the stationary equation (\ref{seq-1}) in $H^1(\mathbb{R})$ 
is a fixed point $u = T(u)$ of the nonlinear operator
\begin{equation}
\label{seq-2}
T(u) := -(1 - \partial_x^2)^{-1} u^2 \partial_x^2 u.
\end{equation}
Furthermore, a solution $u \in H^1(\mathbb{R})$ to the stationary equation (\ref{seq-1}) satisfies the equality
\begin{equation}
\label{seq-3}
\int_{\mathbb{R}} (u^2 + (u')^2) dx = 3 \int_{\mathbb{R}} u^2 (u')^2 dx,
\end{equation}
which follows from the weak formulation (\ref{ode-weak}) with $\varphi = u$ 
and $b = c = 1$.

Let us define the iterative method for $\{ w_n \}_{n \in \mathbb{N}} \in H^1(\mathbb{R})$ by
\begin{equation}
\label{seq-4}
w_{n+1} = -\lm_n^{3/2} (1 - \partial_x^2)^{-1} w_n^2 \partial_x^2 w_n,
\end{equation}
starting from an initial guess $w_0$, where $\lm_n := \lm(w_n)$ is the normalization constant defined by
\begin{equation}
\label{seq-5}
\lm(w) = \frac{\int_{\mathbb{R}} (w^2 + w_x^2) dx}{3 \int_{\mathbb{R}} w^2 w_x^2 dx}.
\end{equation}
The special power of $\lm_n$ is introduced in such a way that if $w_n = a_n u$, where $u \in H^1(\mathbb{R})$ is the true solution of the stationary equation
(\ref{seq-1}) satisfying the equality (\ref{seq-3}), then the
iterative method (\ref{seq-4})--(\ref{seq-5}) yields $\lambda_n =
a_n^{-2}$ and $a_{n+1} = 1$, so iterations converge after the first
step independently of $a_0 \neq 0$.

As the method proceeds, $\lm_n$ is supposed to converge to 1 and the sequence $\{w_n\}$ should converge to an approximate solution of the stationary equation \eqref{seq-1}. Hence, we measure convergence of the iterations by $|1-\lm_n|$ and $\|e_n\|_{L^\infty}$, where $e_n(x) = (1- w_n(x)^2)w_n''(x)-w_n(x)$. We stop the iterations at step $N$ when the convergence criterion $\|e_N\|_{L^\infty}<10^{-10}$ is reached. 

To compute all spatial derivatives, we use Fourier spectral differentiation matrices as follows, see, e.g., ~\cite{Tr-book}. For the normalized interval $[0,2\pi]$, we work on the grid
\begin{align}
\label{grid}
x_j=jh, \quad j \in \{1,\dots, N\}
\end{align}
where $N$ is a pre-chosen (large) even integer and $h = \frac{2\pi}{N}$ is the grid spacing. The left endpoint $0$ is removed so that the grid has exactly $N$ points. The choice of which endpoint to remove can be made arbitrarily, as the differentiation matrices are the same regardless of which endpoint is removed.

For the truncated interval $[-L,L]$, we need to translate and rescale the starting interval $[0,2\pi]$ by using the transformation 
\begin{equation}
x\mapsto y = \frac{L}{\pi}(x-\pi)
\end{equation} 
First and second order differentiation for functions on $[-L,L]$ 
on the grid points with grid spacing $L h/\pi$ is performed using the circulant matrices from \cite{Tr-book}.

Fig. \ref{fig-cusped} (left) shows how $|1-\lm_n|$ and $\| e_n \|_{L^{\infty}}$ converge in $n$ for the iterations
of the method (\ref{seq-4})--(\ref{seq-5}) with the initial guess
$w_0(x) = \sech(x)$ and $c = 1$. The algorithm was terminated after $N=287$ iterations when the aforementioned tolerance was reached.
Fig. \ref{fig-cusped} (right) shows that the iterations converged to the cusped soliton. The graph of the error shows that the error is maximal at the point of singularity at $x = 0$ with $\max_{x\in \R} |e_N(x)| < 10^{-10}$.

Trying the initial guess $w_0(x) = \sqrt{2}\,  {\rm\sech}(x)$ in an attempt to converge to the bell-shaped soliton, we can see clearly in Fig. \ref{fig-sqrt2-in-p} that the method still converges to the cusped soliton. This computation suggests that the bell-shaped soliton is unstable in the iterations
of the Petviashvili method (\ref{seq-4})--(\ref{seq-5}).

\begin{figure}[htb]
	\includegraphics[scale=0.45]{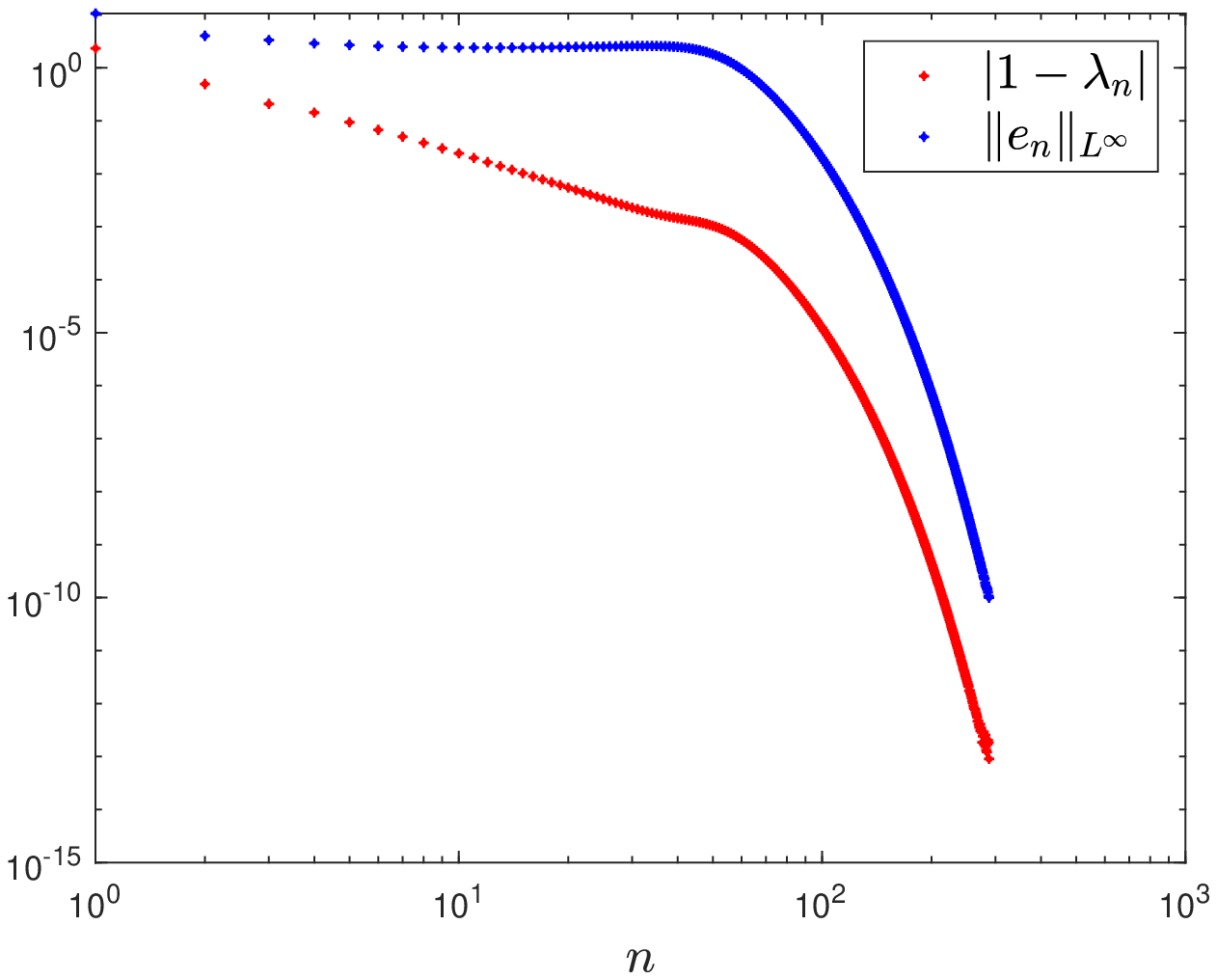}
	\includegraphics[scale=0.45]{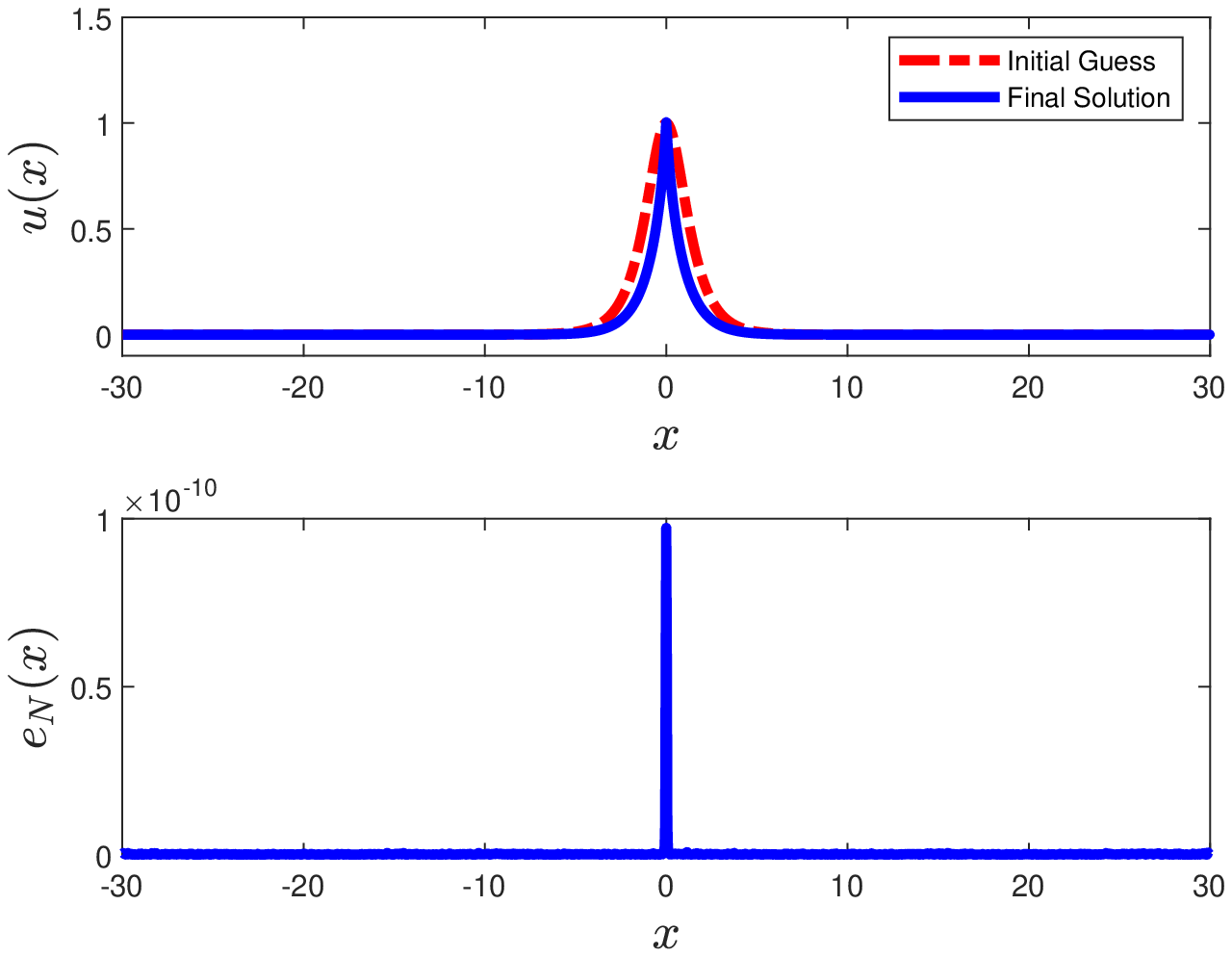}
	\caption{Left: Convergence of $|1-\lm_n|$ and $\| e_n
		\|_{L^{\infty}}$ (defined
		in the text) vs. the iteration index $n$. The former quantity converges faster than the latter one. Right: The cusped soliton obtained by the iterative method (\ref{seq-4})--(\ref{seq-5}) after $N$ iterations and the approximation error $e_N$ vs $x$.}
	\label{fig-cusped}
\end{figure}

\begin{figure}[htb]
	\includegraphics[scale=0.5]{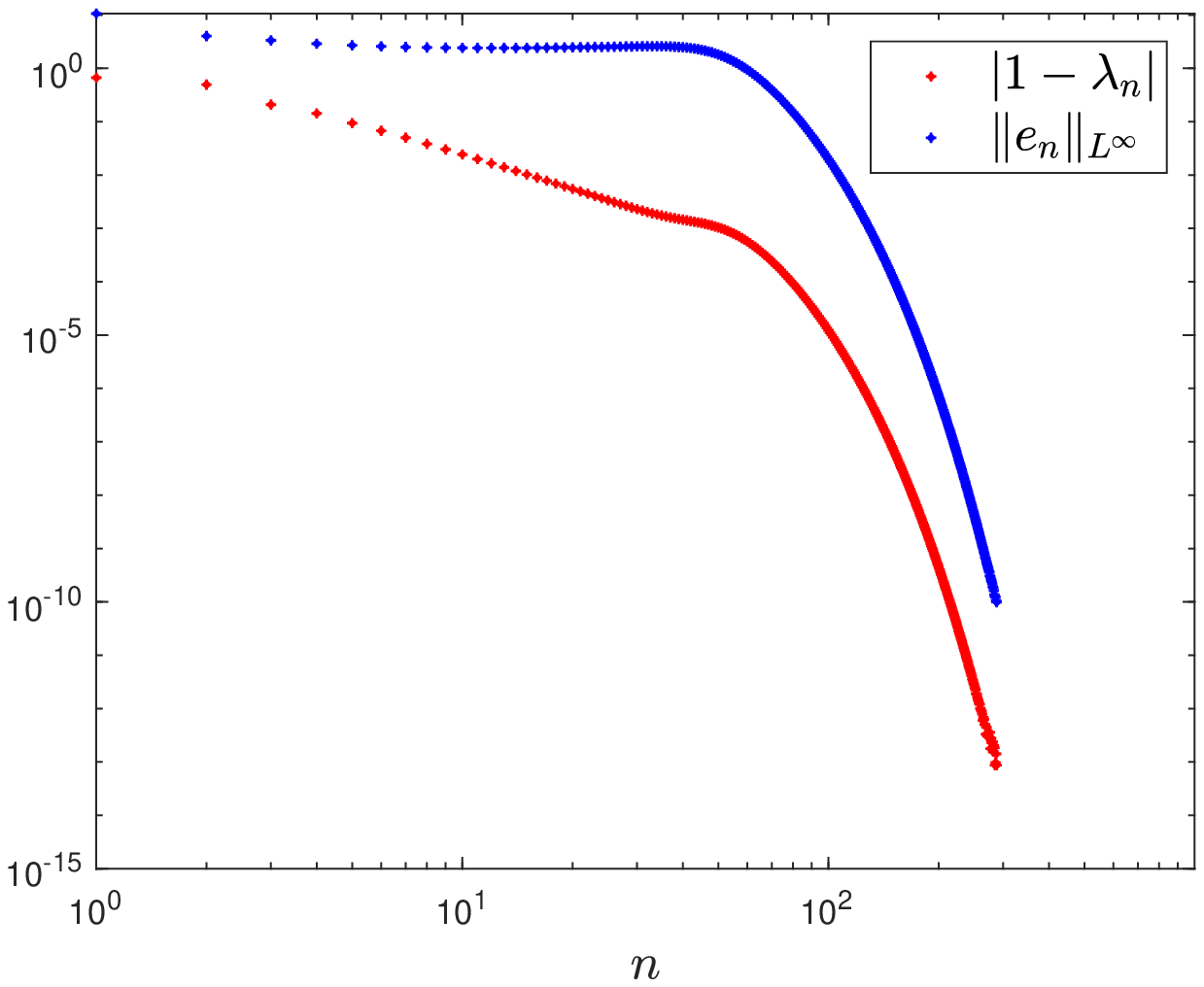}
	\includegraphics[scale=0.5]{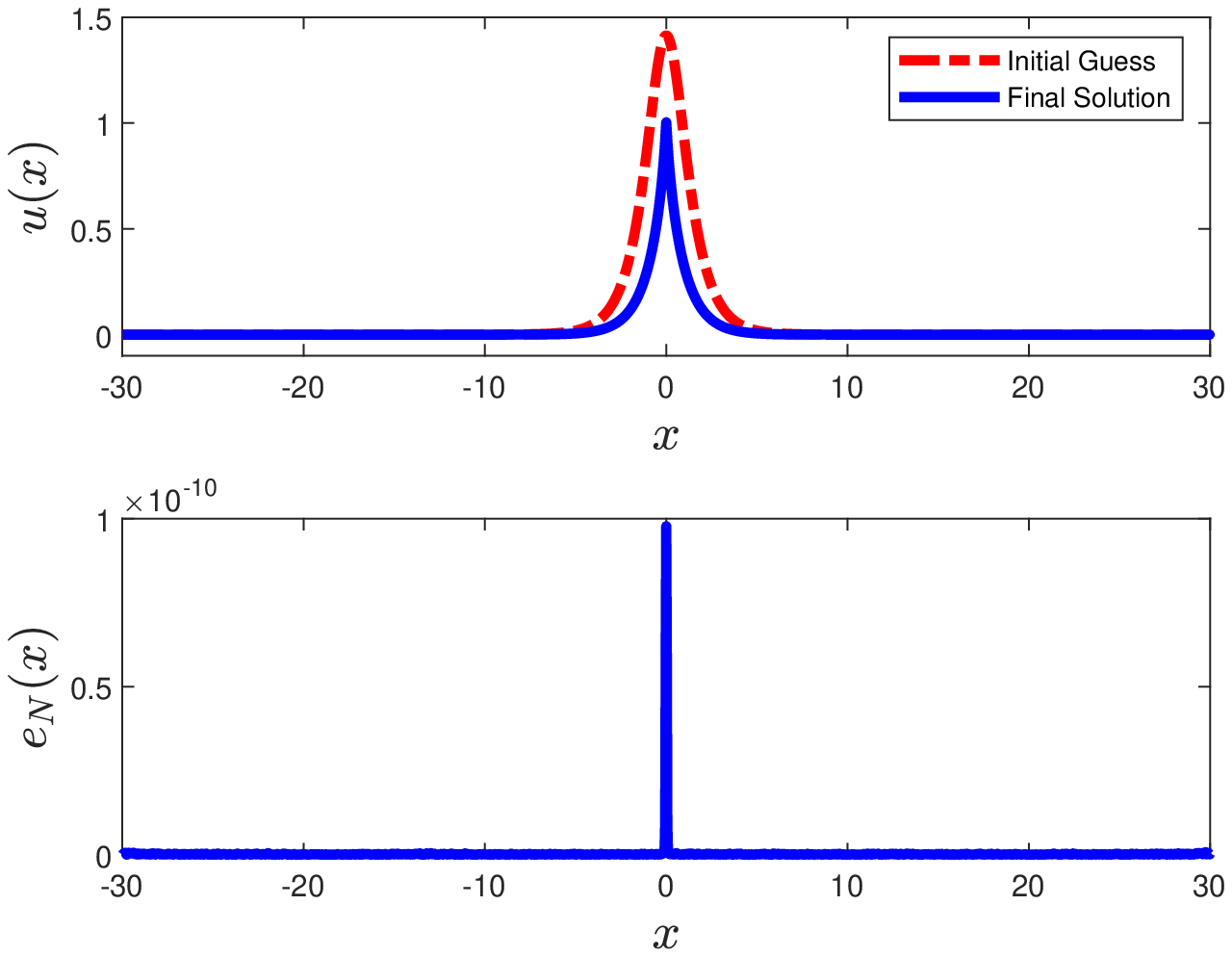}
	\caption{The same as Fig. \ref{fig-cusped} but for the initial
		guess $w_0(x)=\sqrt{2}\, { \rm \sech}(x)$.
		The Petviashvili method still converges to the cusped soliton.}
	\label{fig-sqrt2-in-p}
\end{figure}

The following proposition justifies these numerical results analytically.

\begin{proposition}
	\label{prop-Petv}
	The iterative method (\ref{seq-4})--(\ref{seq-5}) diverges 
	from the solitary wave solution $u_C$ for any fixed $C \in \mathbb{R}$.
\end{proposition}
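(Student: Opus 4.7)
The plan is to linearize the Petviashvili map
\[
F(w) := -\lm(w)^{3/2} (1-\partial_x^2)^{-1} w^2 \partial_x^2 w
\]
at the fixed point $u = u_C$ and to exhibit an eigenvalue of $\mathcal{L}_C := DF(u_C)$ that obstructs convergence. Writing $w = u_C + \delta v$, expanding to first order in $\delta$, and using the fixed-point relation $u_C = -(1-\partial_x^2)^{-1} u_C^2 u_C''$ together with the identity $\lm(u_C) = 1$ that follows from (\ref{seq-3}), one obtains
\[
\mathcal{L}_C v = \tfrac{3}{2}\bigl(\lm'(u_C) v\bigr) u_C - (1-\partial_x^2)^{-1}\bigl(2 u_C u_C'' v + u_C^2 v''\bigr),
\]
where $\lm'(u_C)$ denotes the Fr\'echet derivative of the functional (\ref{seq-5}) at $u_C$.

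Two eigenmodes can then be identified. The homogeneity $F(aw) = F(w)$, a consequence of $\lm(aw) = a^{-2}\lm(w)$ and the cubic scaling of $w^2 \partial_x^2 w$, forces $\mathcal{L}_C u_C = 0$; this gauge mode is harmless because it is annihilated by the normalization. The decisive observation is that, by Proposition \ref{prop-soliton-2} together with (\ref{cont-family})--(\ref{bell-head-continuous}), the whole continuous family $\{u_{C'}\}_{C' \in \mathbb{R}}$ consists of fixed points of $F$. Differentiating the identity $F(u_{C'}) = u_{C'}$ at $C' = C$ formally yields
\[
\mathcal{L}_C \partial_C u_C = \partial_C u_C,
\]
so $\mu = 1$ is an eigenvalue transverse to the scaling mode. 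The corresponding direction is preserved by the linearized iteration and prevents any perturbation with a non-zero component along $\partial_C u_C$ from decaying back to $u_C$.

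The principal technical obstacle is the regularity of $\partial_C u_C$ at the breakpoints $x = \pm \ell_C$: from (\ref{asympt-sol-bell-cont}) one expects $\partial_C u_C \sim \sqrt{|\log|\ell_C - |x|||}$, whose $x$-derivative is borderline non-$L^2$, so $\partial_C u_C$ need not lie in $H^1(\mathbb{R})$. To circumvent this, the cleanest rigorous version of the argument is nonlinear and uses the finite difference $u_{C'} - u_C$ directly: by a dominated-convergence argument analogous to Proposition \ref{prop-regularization}, combined with the first-order invariant (\ref{ode-energy}) to control $\|u_{C'}'\|_{L^2}^2$ on the head and tail regions, one checks that $u_{C'} \to u_C$ in $H^1(\mathbb{R})$ as $C' \to C$. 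Hence, for every $\eta > 0$ there is $C' \neq C$ with $\|u_{C'} - u_C\|_{H^1} < \eta$ for which the Petviashvili sequence launched at $w_0 = u_{C'}$ is stationary at $u_{C'} \neq u_C$; consequently $u_C$ has no $H^1$-basin of attraction and the method diverges from it for every $C \in \mathbb{R}$.
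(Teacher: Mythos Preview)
Your argument is essentially correct but follows a genuinely different route from the paper's.  Both begin with the same linearization: your $\mathcal{L}_C = DF(u_C)$ splits as $\frac{3}{2}\mu(\cdot)\,u_C + \mathcal{M}$, and your observation $\mathcal{L}_C u_C = 0$ is exactly the paper's decomposition $v_n = \alpha_n u + \xi_n$ with $\alpha_{n+1}=0$.  The divergence is where the two proofs part ways.

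The paper does not use the family direction $\partial_C u_C$ at all.  Instead it invokes Proposition~\ref{prop-Petv-cusped}, which computes the \emph{continuous spectrum} of the constrained linearized operator $\mathcal{M} = -(1-\partial_x^2)^{-1}(2u u'' + u^2\partial_x^2)$ via a Green's-function representation and shows $\sigma_c(\mathcal{M}) = {\rm image}(u_C^2) = [0,\,1+e^{2C}]$.  Since $1+e^{2C}>1$ for every finite $C$, the spectral radius of $\mathcal{M}$ on the constrained subspace strictly exceeds~$1$, yielding genuine linear instability of the fixed point.

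Your approach instead exploits the continuous family $\{u_{C'}\}$ of fixed points itself: formally $\mathcal{L}_C\,\partial_C u_C = \partial_C u_C$, and---after correctly flagging that $\partial_C u_C$ may fail to lie in $H^1$---you replace this by the finite-difference version, placing another fixed point $u_{C'}$ in every $H^1$-ball around $u_C$ and concluding that $u_C$ has no basin of attraction.  This is more elementary (no spectral computation is needed) and would transfer to any iteration with a one-parameter family of fixed points; the price is that you only obtain a \emph{neutral} obstruction (eigenvalue~$1$, or stationary nearby orbits) rather than the strict spectral-radius-$>1$ instability the paper establishes.  Both conclusions suffice for ``the method does not converge to $u_C$ from nearby data,'' but the paper's version is quantitatively stronger and explains, in addition, why the numerical iterations are actually repelled toward the cusped soliton rather than merely drifting along the family.
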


\begin{proof}
We substitute $w_n = u + v_n \in H^1(\mathbb{R})$ and linearize the iterative method (\ref{seq-4})--(\ref{seq-5}) with respect to
$v_n$. For the normalization constant  (\ref{seq-5}), we obtain
\begin{equation}
\label{seq-6}
\lm(u+v) = 1 + \frac{2}{3 \int_{\mathbb{R}} u^2 (u')^2 dx}
\int_{\mathbb{R}} (u v + u' v_x - 3 u (u')^2 v - 3 u^2 u' v_x) dx
+ \mathcal{O}(\| v \|_{H^1}^2).
\end{equation}
Writing $\lm(u+v) = 1 + \mu(v) + \mathcal{O}(\| v \|_{H^1}^2)$
and substituting it to (\ref{seq-4}) yields the linearized iterative method:
\begin{equation}
\label{seq-7}
v_{n+1} = \frac{3}{2} \mu(v_n) u + \mathcal{M} v_n,
\end{equation}
where
\begin{equation}
\label{seq-8a}
\mu(v) := \frac{2}{3 \int_{\mathbb{R}} u^2 (u')^2 dx} \int_{\mathbb{R}} (u v + u' v_x - 3 u (u')^2 v - 3 u^2 u' v_x) dx
\end{equation}
and
\begin{equation}
\label{seq-8}
\mathcal{M} := -(1 - \partial_x^2)^{-1} (2 u u''  + u^2  \partial_x^2)
\end{equation}
is the linearized operator. Since $\mathcal{M}u = 3u$, $\mathcal{M}$ is not a contraction in $H^1(\mathbb{R})$; however it may become a contraction after a constraint imposed by $\lm$.
In order to add the constraint, we introduce the decomposition
$v_n = \alpha_n u + \xi_n$,
where $\alpha_n$ is uniquely defined under the orthogonality condition
\begin{equation}
\label{seq-9}
\int_{\mathbb{R}} (u \xi_n + u' \partial_x \xi_n - 3 u (u')^2 \xi_n - 3 u^2 u' \partial_x \xi_n) dx = 0,
\end{equation}
which yields $\mu(\alpha_n u + \xi_n) = -2 \alpha_n$.
Hence, $\alpha_{n+1} = 0$ and $\xi_{n+1} = \mathcal{M}\xi_n$.

The linearized operator (\ref{seq-7}) is a contraction in $H^1(\mathbb{R})$ if the spectrum of $\mathcal{M}$ in $H^1(\mathbb{R})$ belongs to the unit disk except for the simple eigenvalue $3$, for which the eigenvector $u$ is removed by the orthogonality condition (\ref{seq-9}). By Proposition 
\ref{prop-Petv-cusped} below, the continuous spectrum of $\mathcal{M}$ 
is given by ${\rm image}(u^2)$. Since ${\rm image}(u^2) = [0,1+e^{2C}]$ for the solitary wave solution $u_C$ with $1 + e^{2C} > 1$, $\mathcal{M}$ is not contractive for any fixed $C \in \mathbb{R}$. Hence, the iterative method (\ref{seq-4})--(\ref{seq-5}) diverges from $u_C$ 
due to the continuous spectrum of $\mathcal{M}$.
\end{proof}

\begin{proposition}
	\label{prop-Petv-cusped}
	Let $u$ be the solitary wave solution $u_C$. The continuous spectrum of the
	linear operator $\mathcal{M}$ in $H^1(\mathbb{R})$ is given by ${\rm image}(u^2) = [0,1+e^{2C}]$.
\end{proposition}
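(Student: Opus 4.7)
The plan is to analyze the spectral equation $\mathcal{M}v=\lambda v$ via its equivalent ODE formulation: applying $(1-\partial_x^2)$ to both sides yields
\begin{equation*}
(\lambda-u^2(x))\,v''(x) = (\lambda+2u(x)u''(x))\,v(x),
\end{equation*}
whose leading coefficient degenerates exactly at the points where $u^2(x)=\lambda$. Equivalently, using the identity $-(1-\partial_x^2)^{-1}\partial_x^2=\mathrm{Id}-(1-\partial_x^2)^{-1}$, one can write $\mathcal{M}=u^2\cdot\mathrm{Id}+\mathcal{K}$, with the remainder $\mathcal{K}$ involving only the smoothing operator $(1-\partial_x^2)^{-1}$; this exhibits $\mathcal{M}$ as a zeroth-order pseudodifferential-type operator with principal symbol $u^2(x)$, motivating the claim that its continuous spectrum is the range $[0,1+e^{2C}]$ of $u_C^2$.

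For the inclusion $[0,1+e^{2C}]\subseteq\sigma_c(\mathcal{M})$, I would construct a Weyl singular sequence. Fix $\lambda\in(0,1+e^{2C})$ and pick a smooth point $x_0$ of $u$ with $u^2(x_0)=\lambda$ (this avoids the finitely many logarithmic singularity points where $u=1$). Consider the high-frequency wave packets
\begin{equation*}
v_n(x) = A_n\,\chi\!\left(\frac{x-x_0}{\delta_n}\right)e^{i\xi_n x},
\end{equation*}
with $\chi\in C_c^\infty(\mathbb{R})$ real-valued, parameters $\delta_n\to 0^+$ and $\xi_n\to\infty$ satisfying $\xi_n\delta_n\to\infty$ (e.g.\ $\delta_n=\xi_n^{-1/2}$), and $A_n$ chosen so that $\|v_n\|_{H^1}=1$. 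Using $v_n''\approx-\xi_n^2 v_n$ to leading order in $L^2$, the Taylor bound $(\lambda-u^2)(x)=O(\delta_n)$ on $\mathrm{supp}\,v_n$, and the Fourier-side smoothing estimate $\|(1-\partial_x^2)^{-1}f\|_{H^1}\lesssim\|f\|_{L^2}/\xi_n$ for $f$ of frequency $\sim\xi_n$, a direct computation yields
\begin{equation*}
\|(\mathcal{M}-\lambda)v_n\|_{H^1} = O(\delta_n)+O(\xi_n^{-2}) \longrightarrow 0,
\end{equation*}
while $v_n\rightharpoonup 0$ weakly in $H^1$ by Riemann--Lebesgue oscillation. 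Closedness of the essential spectrum then recovers the endpoints $\lambda=0$ and $\lambda=1+e^{2C}$.

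For the reverse inclusion $\sigma_c(\mathcal{M})\subseteq[0,1+e^{2C}]$, I would show that for $\lambda\notin[0,1+e^{2C}]$ the coefficient $\lambda-u^2$ is uniformly bounded away from zero, and the asymptotic ODE at $|x|\to\infty$ reduces to $v''=v$, whose two fundamental solutions decay/grow exponentially. A Green's function built from the two decaying solutions yields a bounded resolvent of $\mathcal{M}-\lambda$ on $H^1(\mathbb{R})$ up to a finite-dimensional defect, so $\mathcal{M}-\lambda$ is Fredholm of index zero; hence such $\lambda$ lies in the resolvent set or in the point spectrum, but not in $\sigma_c(\mathcal{M})$.

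The principal technical obstacle is that in the rewritten ODE the coefficient $2uu''=2u^2/(1-u^2)$ (from the Newton equation) blows up non-integrably near the points $u=1$, so both directions demand care across these singularities. In the Weyl-sequence direction this is handled by keeping $\mathrm{supp}\,v_n$ away from the finitely many singular points, which makes all coefficients bounded on the support. In the Fredholm direction one must verify, in the spirit of Proposition~\ref{prop-cusped} and Section~\ref{sec-singularity}, that the fundamental solutions of the regular ODE on each smooth segment match across the singular points while remaining in $H^1(\mathbb{R})$, so that the Green's function defines a bona fide bounded inverse on the energy space.
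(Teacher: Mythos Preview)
Your approach is correct and genuinely different from the paper's. The paper does not use Weyl sequences or a Fredholm-index argument at all. Instead, it partitions $\mathbb{R}$ at the logarithmic singularity points (where $u=1$) and invokes a classical result from Dunford--Schwartz to reduce $\sigma_c(\mathcal{M})$ to the union of the continuous spectra of the restricted operators $\mathcal{M}_{\pm}$ on the half-lines (or on the three segments, for $u_C$) subject to Dirichlet conditions. On each piece the paper then writes $\mathcal{M}_+ v=\lambda v$ as an explicit Volterra-type integral equation using the Green's function $G(x,y)=\sinh(x)e^{-y}U(y-x)$ of $(1-\partial_x^2)$ on $H^1_0(\mathbb{R}_+)$, and reads off that the resolvent is unbounded precisely when $\lambda$ lies in the range of $u^2$, because the kernel acquires a non-integrable pole at $u^2(y)=\lambda$.

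What each approach buys: the paper's decomposition neatly sidesteps the blow-up of $2uu''=2u^2/(1-u^2)$ by turning the singular interior points into Dirichlet boundary points, after which the integral-equation argument is short and the reference to Dunford--Schwartz does the heavy lifting. Your route is more self-contained---the high-frequency wave packet construction and the ODE-asymptotics/Fredholm argument require no external theorem---and your rewriting $\mathcal{M}=u^2\cdot\mathrm{Id}+\mathcal{K}$ makes the ``principal symbol $=u^2$'' mechanism transparent. The price you pay is having to verify the matching of fundamental solutions across the $L^1_{\mathrm{loc}}$ singularities of the potential for the reverse inclusion, which you correctly flag; the paper avoids this entirely via the Dirichlet splitting. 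One small point: your closure argument for the endpoints should also explicitly cover $\lambda=1$, since for $u_C$ the preimage $u^2=1$ consists only of the singular points $\pm\ell_C$ and hence admits no smooth $x_0$; this follows from the same closedness of $\sigma_{\mathrm{ess}}$, but is worth stating.
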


\begin{proof}
	Let us consider the spectral problem
	\begin{equation}
	\label{SL-M} \mathcal{M} v = \mu v, \quad v \in H^1(\R).
	\end{equation}	
	For simplicity, we work with the cusped soliton $u_{\rm cusp}$, 
	for which the singularity is placed at one point, $x = 0$.
	By Theorem 4 in \cite[p. 1438]{Dunford}, the continuous spectrum of $\mathcal{M}$ is given by the union of the continuous spectra of $\mathcal{M}_+$ and $\mathcal{M}_-$ restricted on $\mathbb{R}_{\pm}$ subject to the Dirichlet condition at $x = 0$:
	\begin{equation}
	\mathcal{M}_{\pm} := (1-\partial_x^2)^{-1} \left[ - u^2\partial_x^2 - \frac{2 u^2}{1 - u^2} \right] : H^1_0(\mathbb{R}_{\pm}) \mapsto H^1_0(\mathbb{R}_{\pm}),
	\end{equation}
	where $H^1_0(\mathbb{R}_{\pm})$ denote the Sobolev space of $H^1(\mathbb{R}_{\pm})$ functions vanishing at $x = 0$.
	By the symmetry of $u$, the spectrum of $\mathcal{M}_+$ is identical to the spectrum of $\mathcal{M}_-$. Hence, we consider the spectral problem for the operator $\mathcal{M}_+$ only. 
	
	The Green's function $G(x,y)$ for $(1 - \partial^2_x) G(x,y) = \delta(y)$ in $H^1_0(\mathbb{R}_+)$ is given by
	\begin{equation}
	\label{Green-expr}
	G(x,y) = \sinh(x) e^{-y} U(y-x), \quad x, y \in \mathbb{R}_+,
	\end{equation}
	where $U$ is the unit step function. Using elementary operations,
	we can rewrite the spectral problem $\mathcal{M}_+ v = \lambda v$
	for $v \in H^1_0(\mathbb{R}_+)$ in the following integral form:
	\begin{equation}
	\label{Green-function}
	v(x) = \int_x^{\infty} G(x,y)
	\frac{u(y)^2 [3- u(y)^2]}{[u(y)^2 - \lambda] [1 - u(y)^2]} v(y) dy,
	\quad x \in \mathbb{R}_+.
	\end{equation}
	The proof that $\sigma_c(\mathcal{M}_+) = {\rm image}(u^2) = [0,1]$ is
	standard. Indeed, if $\lambda \in {\rm image}(u^2)$, then the integral
	in the right-hand-side of (\ref{Green-function}) diverges
	unless $v(x_{\lambda}) = 0$ at $x_{\lambda}$ given by $u^2(x_{\lambda}) = \lambda$,
	hence the resolvent operator $(\mathcal{M}_+ - \lambda I)^{-1}$ is unbounded in $H^1_0(\mathbb{R}_+)$. Thus, 
	$\sigma_c(\mathcal{M}) = \sigma_c(\mathcal{M}_+) \cup \sigma_c(\mathcal{M}_-) = {\rm image}(u^2) = [0,1]$.
	
	For the case of the solitary wave solution $u_C$ with fixed $C \in \mathbb{R}$, the singularities are placed at two points $x = \pm \ell_C$.
	Partitioning $\mathbb{R}$ into $(-\infty,-\ell_C] \cup [-\ell_C,\ell_C] \cup [\ell_C,\infty)$ subject to the Dirichlet boundary conditions at $x = \pm \ell_C$ gives the same result $\sigma_c(\mathcal{M}) = {\rm image}(u^2) = [0,1+e^{2C}]$ but the Green's functions in (\ref{Green-function}) are expressed differently from the form (\ref{Green-expr}).
\end{proof}

\begin{remark}
	\label{remark-Petv}
Since ${\rm image}(u^2) = [0,1]$ for the cusped soliton $u_{\rm cusp}$, 
$\mathcal{M}$ is not a strict contraction for the cusped soliton $u_{\rm cusp}$.
Our numerical results on Fig. \ref{fig-M-spectrum} suggest that 
$\sigma(M) \backslash \{3\} \subset [-1,1]$ in $H^1(\mathbb{R})$, hence $\mathcal{M}$ is a contraction for the cusped soliton $u_{\rm cusp}$  under the orthogonality condition (\ref{seq-9}). Despite the lack of strict contraction, the iterative method (\ref{seq-4})--(\ref{seq-5})
converges to the cusped soliton due to discretization and truncation, 
in agreement with the numerical results shown in Fig.  \ref{fig-cusped} 
and \ref{fig-sqrt2-in-p}. 
\end{remark}

Fig. \ref{fig-M-spectrum} shows the numerical approximation of the
spectrum of $\mathcal{M}$ in $H^1(\mathbb{R})$ at the cusped soliton
$u_{\rm cusp}$ (left) and the bell-shaped soliton $u_{\rm bell}$
(right).  The numerical approximations are obtained with the Fourier
spectral method. The location of the spectrum of $\mathcal{M}$ agrees with Proposition \ref{prop-Petv-cusped} and Remark \ref{remark-Petv}.

\begin{figure}[htb]
	\includegraphics[scale=0.5]{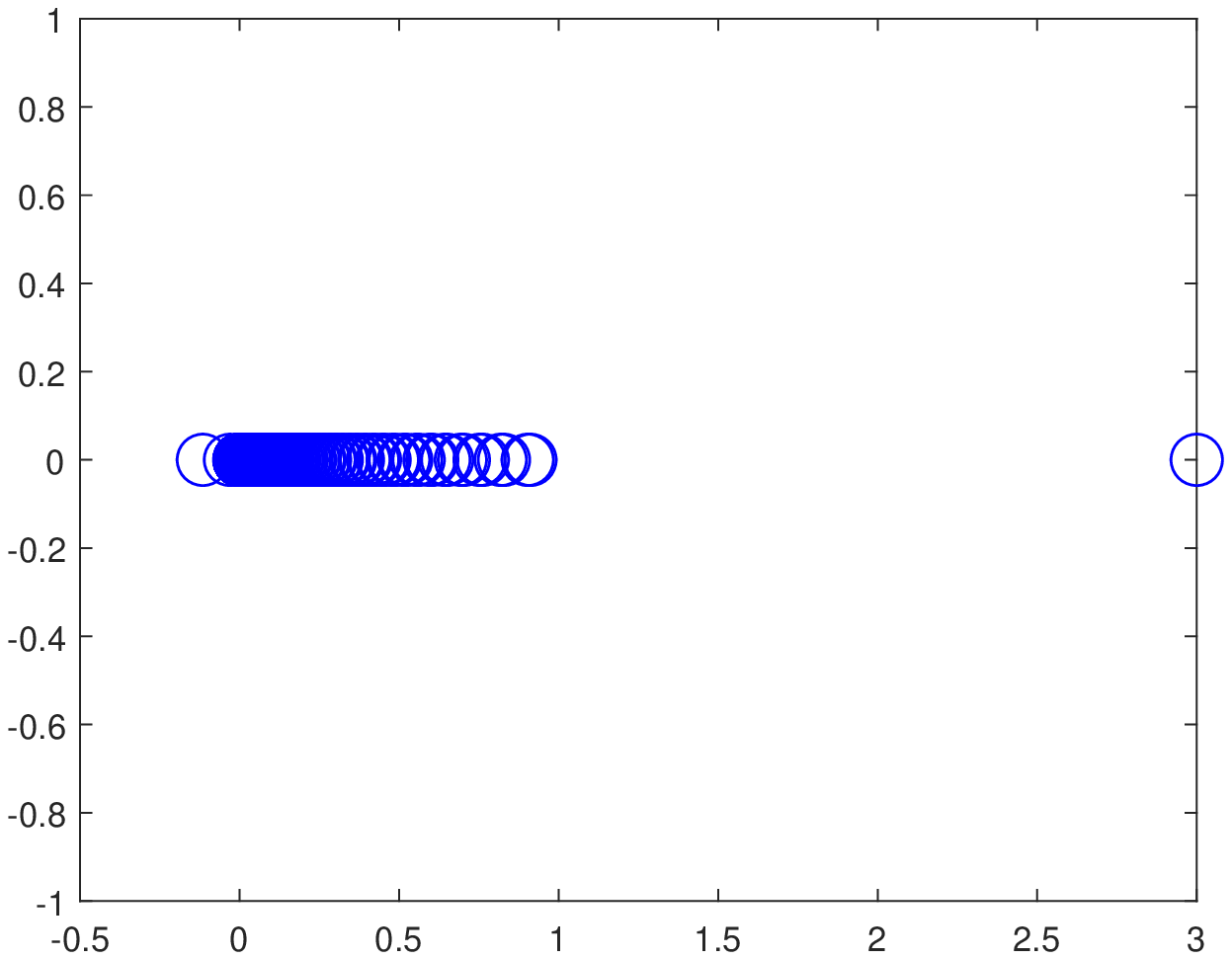}
	\includegraphics[scale=0.5]{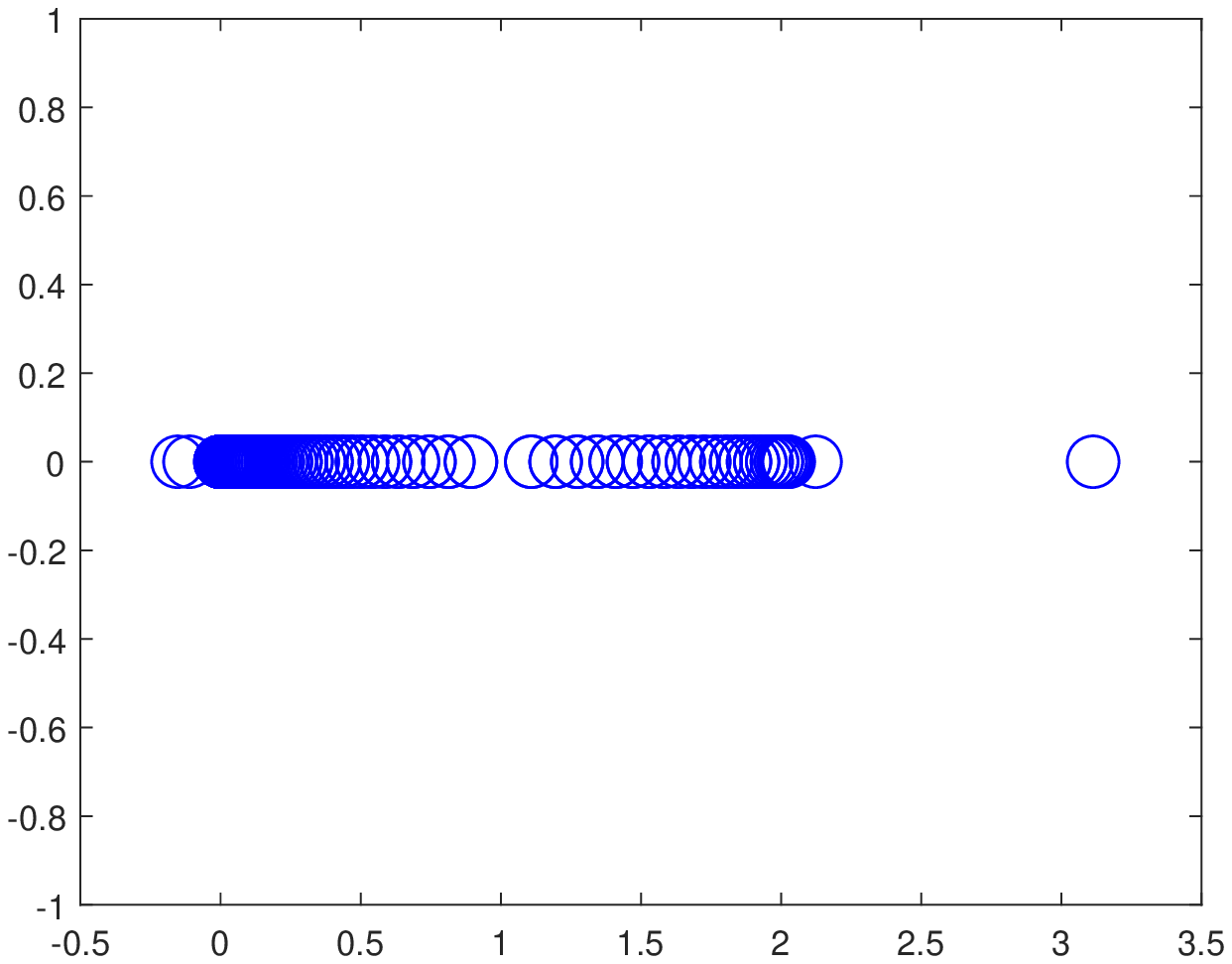}
	\caption{Numerical approximation of the spectrum of $\mathcal{M}$ in $H^1(\mathbb{R})$ at the cusped soliton (left) and the bell-shaped soliton (right).}
	\label{fig-M-spectrum}
\end{figure}

\begin{remark}
	The cusped soliton $u_{\rm cusp}$ represents the lowest energy state in the continuous family of the solitary wave solutions $u_C$. Indeed, it follows from (\ref{cont-family}) with $\ell_C> 0$ for every fixed $C \in \mathbb{R}$ that $E(u_{\rm cusp}) \leq E(u_C)$. Petviashvili's method is hence useful to approximate numerically the lowest energy state of the NLS equation.
\end{remark}

\subsection{Bell-shaped and cusped solitons via Newton method}

We represent solutions of the second-order equation (\ref{2nd}) as the roots of the nonlinear equation $F(u) = 0$, where
\begin{align}
\label{eq-F}
F(u) := -(1-u^2)\partial_x^2 u + u
\end{align}
If $u$ is a root of $F$, then performing a linearization of $F(u+v)$ with respect to $v$ leads to the linearized operator
\begin{align}
\label{eq-L}
F(u+v) = \mathcal{L} v + \mathcal O(\|v\|_{H^1}^2), \qquad \mathcal{L} := -(1-u^2)\partial_x^2 + \frac{1+u^2}{1-u^2}.
\end{align}
Roots of the nonlinear equation $F(u) = 0$ in $H^1(\R)$ can be approximated by using the Newton iterations:
\begin{align}
\label{eq-euler}
u_{n+1} = u_n - \mathcal{L}^{-1}F(u_n), \quad n \in \mathbb{N},
\end{align}
starting with any $u_1 \in H^1(\R)$ provided that $\mathcal{L}^{-1} : H^{-1}(\R) \mapsto H^1(\R)$ exists. Note the correspondence $\mathcal{M} = 1  -(1-\partial_x^2)^{-1} \mathcal{L}$ between linearized operators 
of the two methods.

The following proposition shows that $\mathcal{L}$ is invertible for the cusped soliton $u_{\rm cusp}$.

\begin{proposition}
	\label{prop-Newton-cusped}
	Let $u$ be the cusped soliton $u_{\rm cusp}$. Then, $\sigma(\mathcal{L}) = [1,\infty)$ in $L^2(\mathbb{R})$.
\end{proposition}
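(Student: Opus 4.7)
The plan is to exploit the factorization
\[
\mathcal{L} = (1-u^2)\,\tilde{\mathcal{L}}, \qquad \tilde{\mathcal{L}} := -\partial_x^2 + W(x), \qquad W(x) := \frac{1+u^2}{(1-u^2)^2},
\]
which presents $\mathcal{L}$ as a vanishing-weight multiple of a positive Schr\"odinger-type operator. The crucial algebraic identity
\[
W(x) - \frac{1}{1-u^2} = \frac{2u^2}{(1-u^2)^2} \geq 0,
\]
together with $W(x) \to 1$ as $|x| \to \infty$, is the source of both the lower bound $\lambda \geq 1$ and the identification of the essential spectrum. Since $\mathcal{L}$ is not symmetric with respect to the standard $L^2$ inner product but is symmetric in the weighted space $L^2(\R; (1-u^2)^{-1}\,dx)$, I would interpret $\sigma(\mathcal{L})$ via its natural self-adjoint realization in that weighted space, which coincides spectrally with the operator of interest.

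First, I would prove $\sigma(\mathcal{L}) \subseteq [1,\infty)$ by a quadratic form bound. Testing $\mathcal{L} v = \lambda v$ against $\bar v$ in the weighted inner product and integrating by parts yields
\[
\lambda \int_{\R} \frac{|v|^2}{1-u^2}\,dx = \int_{\R} |v'|^2\,dx + \int_{\R} W(x)|v|^2\,dx \geq \int_{\R} \frac{|v|^2}{1-u^2}\,dx,
\]
so $\lambda \geq 1$, with equality forcing $v \equiv 0$. To prove the reverse inclusion $[1,\infty) \subseteq \sigma_{\mathrm{ess}}(\mathcal{L})$, I would construct standard Weyl sequences localized far from the cusp: for any $\lambda = 1 + \xi^2$ with $\xi \in \R$, take $v_n(x) := n^{-1/2}\eta((x-x_n)/n)\,e^{i\xi x}$ with $x_n \to +\infty$ and $\eta \in C_c^\infty(\R)$ of unit $L^2$-norm. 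Because $u(x) \to 0$ exponentially as $|x| \to \infty$, the coefficients $1-u^2$ and $W(x)$ are arbitrarily close to $1$ on the support of $v_n$, and a direct estimate gives $\|(\mathcal{L} - \lambda I)v_n\|/\|v_n\| \to 0$ as $n \to \infty$.

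The main obstacle will be rigorously handling the cusp singularity at $x = 0$, where $1-u^2 \sim 2|x|\sqrt{\log(1/|x|)}$ vanishes, $(1+u^2)/(1-u^2)$ diverges, and $u'$ is unbounded. Following the Dunford-style decomposition used in the proof of Proposition \ref{prop-Petv-cusped}, I would split $\R$ at the singularity into $\R_+$ and $\R_-$ and study $\mathcal{L}_\pm$ on each half-line separately. Using the asymptotic expansion of Proposition \ref{prop-cusped}, the singular term $(1+u^2)v/(1-u^2)$ belongs to $L^2$ near $x = 0$ only if $v$ vanishes sufficiently fast there, so the Dirichlet condition $v(0^\pm) = 0$ emerges as the unique natural self-adjoint realization. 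The integration by parts used in the quadratic form identity must likewise be justified by exploiting this forced decay together with the vanishing of the boundary term $(1-u^2)v'\bar v$ at $x = 0$; once this is in place, $\sigma(\mathcal{L}) = \sigma(\mathcal{L}_+) \cup \sigma(\mathcal{L}_-) = [1,\infty)$ by the symmetry of $u_{\rm cusp}$ and the arguments above.
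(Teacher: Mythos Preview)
Your proposal is correct and follows essentially the same strategy as the paper: force $v(0)=0$ from the non-integrability of $(1-u^2)^{-1}$ at the cusp, establish $\lambda\geq 1$ via a quadratic-form identity obtained by integration by parts, and invoke Weyl theory (in your case via explicit Weyl sequences supported far from the origin) for the essential spectrum $[1,\infty)$. The only difference in execution is that the paper pairs $\mathcal{L}v=\lambda v$ against $v$ in the \emph{unweighted} $L^2$ inner product, which requires two integrations by parts and produces the lower-bound coefficient $\tfrac{1+2u^2}{1-u^2}+(u')^2\geq 1$, whereas your weighted pairing via the factorization $\mathcal{L}=(1-u^2)\tilde{\mathcal{L}}$ yields the slightly cleaner pointwise inequality $W(x)\geq (1-u^2)^{-1}$ after a single integration by parts.
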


\begin{proof}
	Let us consider the spectral problem
	\begin{equation}
	\label{spectrum-Newton} \mathcal{L} v = \lambda v, \quad v \in {\rm Dom}(\mathcal{L}) \subset L^2(\mathbb{R}).
	\end{equation}
	Because $(1-u^2)^{-1}$ is not integrable near $x = 0$ due to singular behavior (\ref{asympt-sol}), we have $v(0) = 0$ if $v \in {\rm Dom}(\mathcal{L})$.
	Multiplying (\ref{spectrum-Newton}) by $v$ and integrating by parts under the condition $v(0) = 0$ yields 
	\begin{eqnarray*}
\lambda \int_{\mathbb{R}} v^2 dx &=& 
-\int_{\mathbb{R}} (1-u^2) v v'' dx + \int_{\mathbb{R}} \frac{1+u^2}{1-u^2} v^2 dx \\ 
&=& 
\int_{\mathbb{R}} (1-u^2) (v')^2 dx +
\int_{\mathbb{R}} (uu')' v^2 dx + \int_{\mathbb{R}} \frac{1+u^2}{1-u^2} v^2 dx \\ 
&=& 
\int_{\mathbb{R}} (1-u^2) (v')^2 dx + 
\int_{\mathbb{R}} \left[ \frac{1+2u^2}{1-u^2} + (u')^2 \right] v^2 dx. 
	\end{eqnarray*}
Since $u \in [0,1]$, we have $\lambda \geq 1$. 	Since $u(x) \to 0$ as $|x| \to \infty$ exponentially fast, Weyl's theory implies that $[1,\infty) \subset
\sigma_c(\mathcal{L})$. Hence, $\sigma(\mathcal{L}) = [1,\infty)$ for the cusped soliton $u_{\rm cusp}$.
\end{proof} 


\begin{remark}
	For the solitary wave solution $u_C$ of Proposition \ref{prop-soliton-2} with fixed $C \in \mathbb{R}$, the weight $(1-u^2)$ is no longer positive. As a result, $\sigma(\mathcal{L})$ is expected to be sign-indefinite, in agreement with the numerical approximation on Fig. \ref{fig-spectrumL} (right).
\end{remark}
 
We study invertibility of $\mathcal{L}$ numerically by rewriting the spectral 
problem $\mathcal{L} v = \lambda v$ as the generalized 
eigenvalue problem  
\begin{align}
\label{genL}
\mathcal{A} v = \lambda \mathcal{B} v, \qquad 
\mathcal{A} := -(1-u^2)^2\partial_x^2 + (1+u^2), \quad 
\mathcal{B} := (1-u^2).
\end{align}
In the form (\ref{genL}), the singularity of the potential $\frac{(1+u^2)}{(1-u^2)}$ in $\mathcal{L}$ is avoided. 
The generalized eigenvalue problem (\ref{genL}) can be solved 
numerically even when $\mathcal{B}$ is not invertible. 

\begin{figure}[ht]
	\includegraphics[scale=0.4]{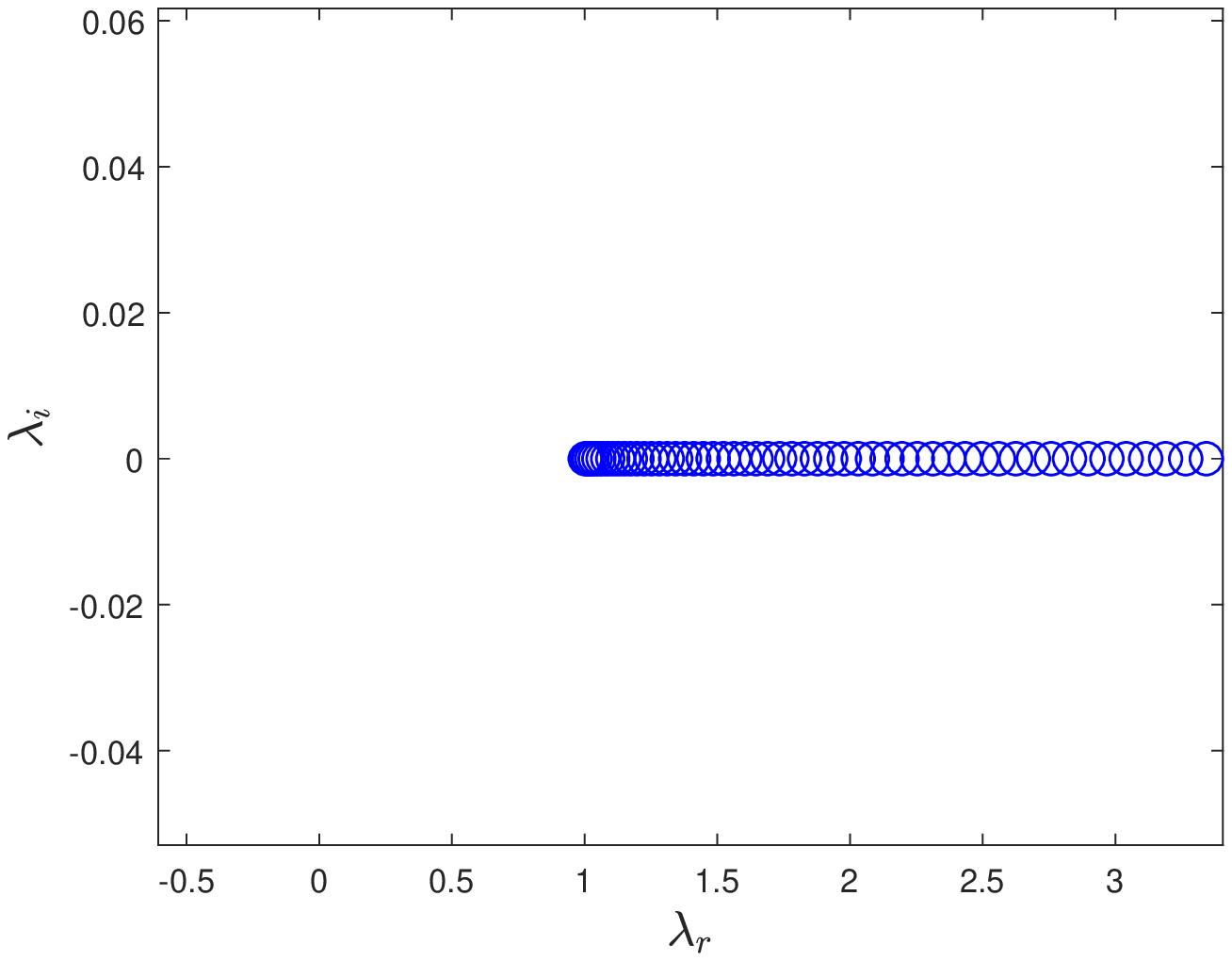}
	\includegraphics[scale=0.4]{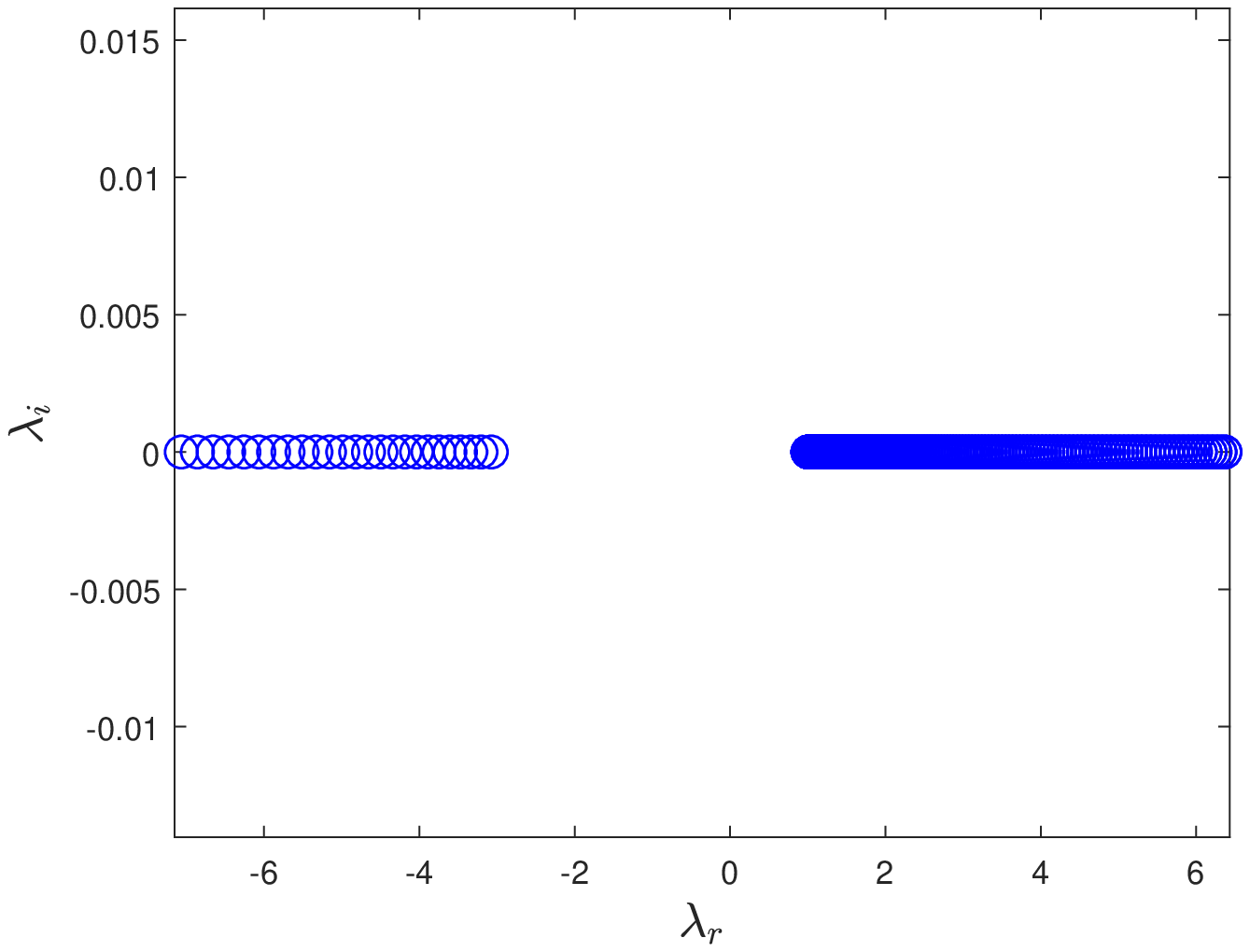}
	\caption{The spectrum of $\mathcal{L}$ approximated numerically at the cusped soliton (left) and bell-shaped soliton (right).}
	\label{fig-spectrumL}
\end{figure}

Numerical approximations of the spectrum of $\mathcal{L}$ at the cusped and bell-shaped solitons are shown in Fig. \ref{fig-spectrumL}. 
The spatial derivatives are replaced by the same Fourier 
differentiation matrices as before.  
The numerical results suggest that $\sigma(\mathcal{L}) = [1,\infty)$ for the cusped soliton $u_{\rm cusp}$ in agreement with Proposition \ref{prop-Newton-cusped} 
and $\sigma(\mathcal{L}) = (-\infty,\lambda_0] \cup [1,\infty)$ 
with $\lambda_0 < 0$ for the bell-shaped soliton $u_{\rm bell}$. In both cases, 
$\mathcal{L}$ is invertible and the Newton iterative method (\ref{eq-euler}) can be used unconditionally.

For solutions from the continuous family $u_C$, numerical results indicate that $\sigma(\mL_+) = (-\infty,\lm_C] \cup [1,\infty)$, where $\lm_C < 0$ depends on $C$. Figure \ref{fig-lambdaC} shows the location of $\lm_C$ with varying $C$. In agreement with Proposition \ref{prop-Newton-cusped}, the results suggest that $\lm_C \to -\infty$ as $C\to -\infty$, so that the spectrum of $\mathcal{L}$ at for the cusped soliton $u_{\rm cusp}$ reduces to just $[1,\infty)$.

\begin{figure}[hbt]
	\includegraphics[scale=0.4]{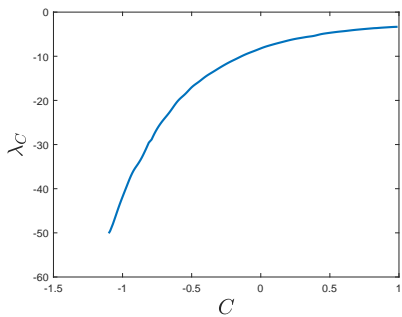}
	\caption{The dependence of $\lm_C$ on $C$ in the spectrum of $\mathcal{L}$.}
	\label{fig-lambdaC}
\end{figure}

Figures \ref{newton-cusped} and \ref{newton-bell} show the convergence
of the Newton method (\ref{eq-euler}) to the cusped and bell-shaped
soliton respectively.
As discussed earlier, a key advantage of this method is its enabling
to converge from suitable (distinct) initial guesses to both the
principal
solutions of the IDD model. Another advantage of the method
is its ability to converge to arbitrary members $u_C$ of the continuous
family
of solutions.
At each step of the iteration, we measure convergence using the
distance between successive iterates $\|u_{n+1}-u_n\|_{L^\infty}$, and
also the approximation error $e_n = |F(u_n)|$. We stop iterations at
step $N$ where the tolerance $\|e_N\|_{L^\infty} < 10^{-10}$ is
reached.

\begin{figure}[hbt]
	\includegraphics[scale=0.45]{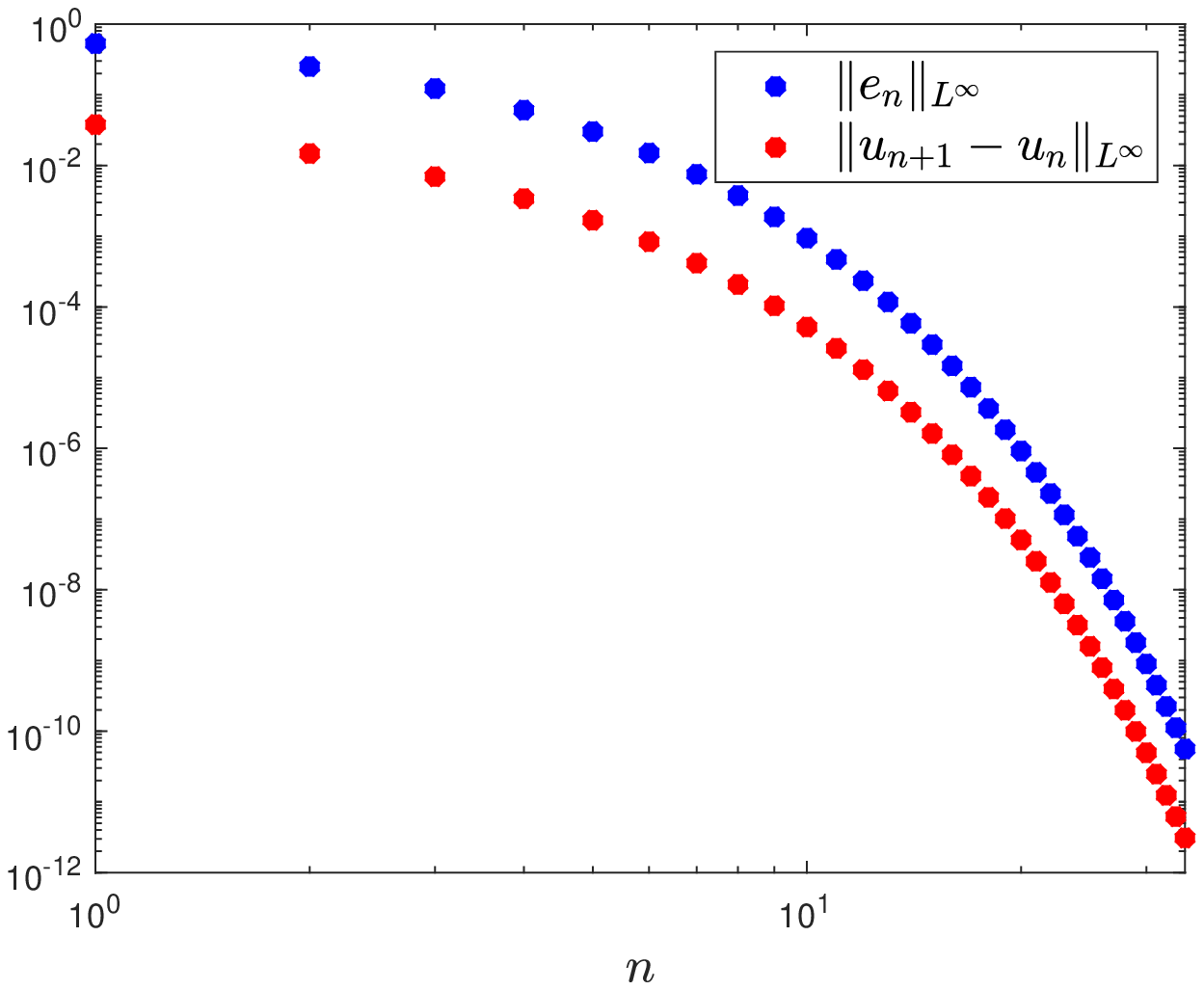}
	\includegraphics[scale=0.45]{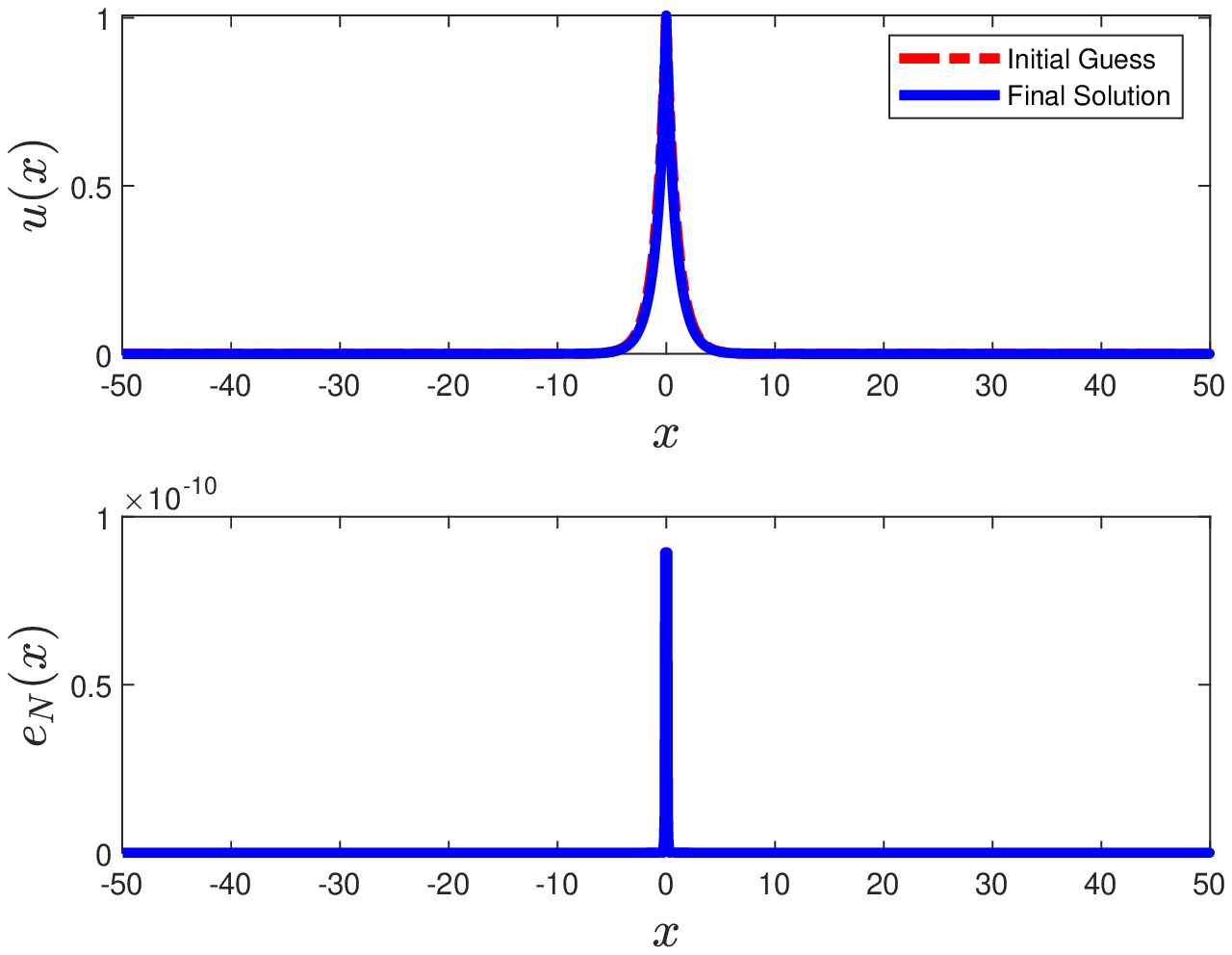}
	\caption{Based on the same diagnostics as discussed
		previously,
		the convergence of the Newton method to the cusped
		soliton is quantified, starting from the initial guess $u_1(x) =
		e^{-|x|}$.}
	\label{newton-cusped}
\end{figure}

\begin{figure}[ht]
	\includegraphics[scale=0.45]{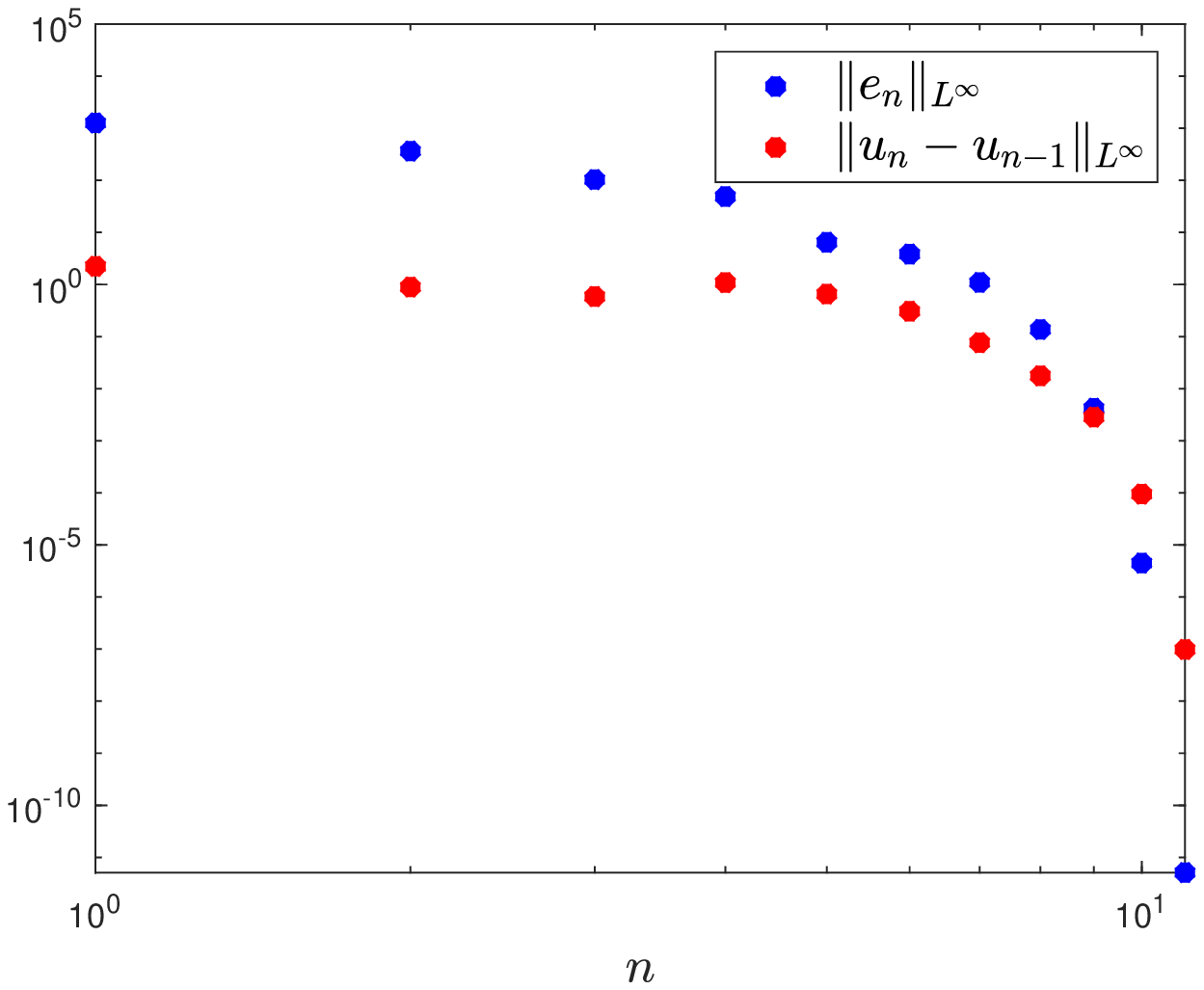}
	\includegraphics[scale=0.45]{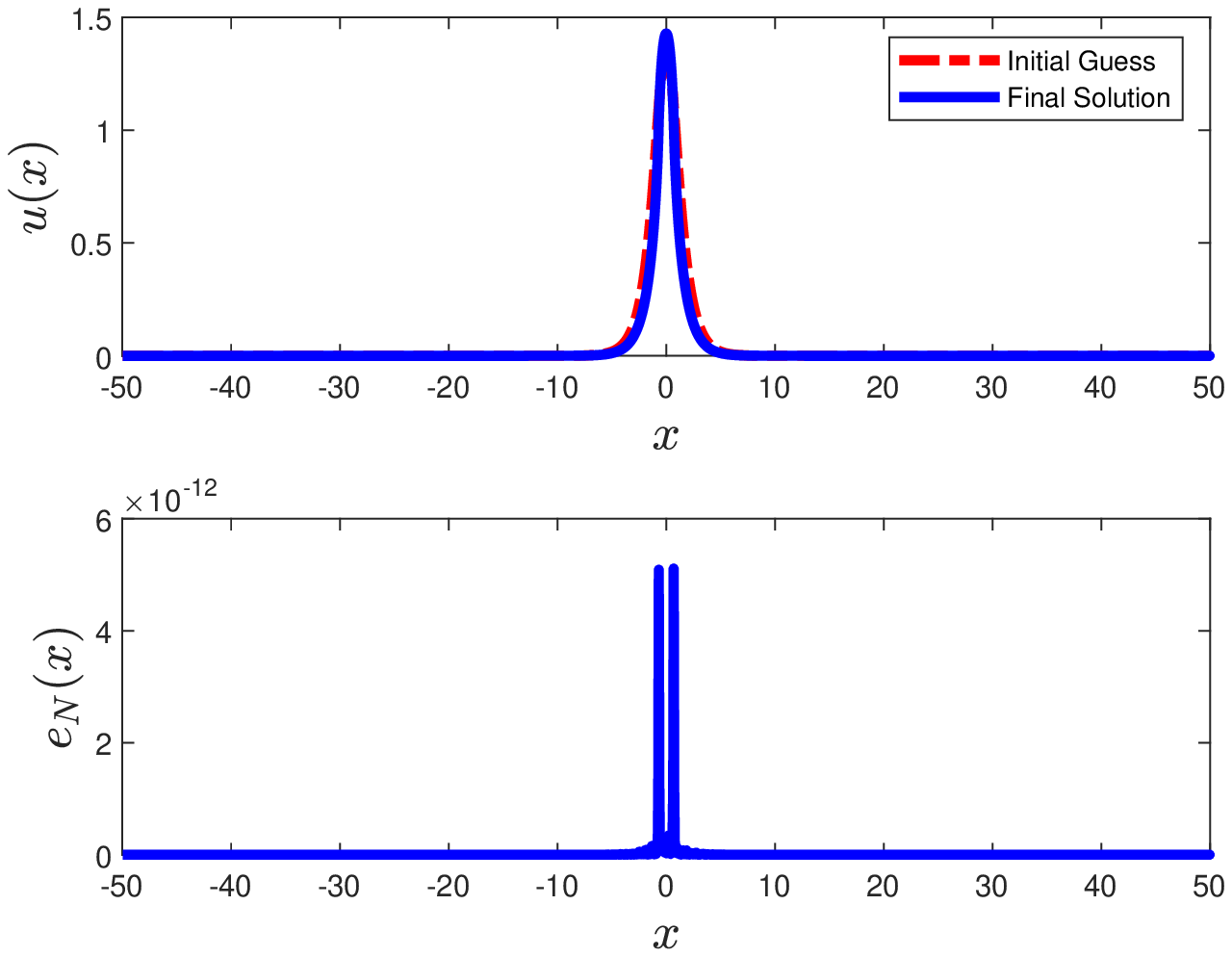}
	\caption{Convergence of the Newton method to the bell-shaped soliton starting from the initial guess $u_1(x)=\sqrt{2} {\rm sech}(x)$.}
	\label{newton-bell}
\end{figure}

\section{Spectral stability}

In order to derive the spectral stability problem, we use the ansatz
\begin{align}
\label{ansatz}
\psi(x,t) = e^{it} [u(x) + e^{\lm t}(v(x)+i w(x)) + e^{\bar{\lm} t} (\bar{v}(x) + i \bar{w}(x))]
\end{align}
where $u$ is the solitary wave solution, 
$v + iw$ is a small perturbation with real $v$ and $w$, $\lm$ is the spectral parameter,
and the bar denotes complex conjugation. Substituting (\ref{ansatz}) into the NLS equation (\ref{nls}) and linearizing in $v$ and $w$ gives the spectral stability problem
\begin{align}
\label{eq-spec-stability}
\left[ \begin{array}{cc} 0 & \mathcal{L}_- \\
-\mathcal{L}_+ & 0 \end{array} \right] \left[ \begin{array}{c} v \\ w \end{array} \right] = \lm \left[ \begin{array}{c} v \\ w \end{array} \right],
\end{align}
where 
\begin{align*}
\mathcal{L}_+ := -(1-u^2) \partial_x^2 + \frac{1+u^2}{1-u^2}, \quad 
\mathcal{L}_- := & -(1-u^2) \partial_x^2 + 1
\end{align*}
Note that $\mathcal{L}_+ \equiv \mathcal{L}$ is the linearized operator 
in the Newton method. We consider the spectral problem 
(\ref{eq-spec-stability}) in $L^2(\mathbb{R})$ so that 
the differential operators $\mathcal{L}_{\pm}$ are closed in 
$$
\mathcal{L}_{\pm} : {\rm Dom}(\mathcal{L}_{\pm}) \subset L^2(\R) \mapsto L^2(\R),
$$
where ${\rm Dom}(\mathcal{L}_{\pm}) = \{ v \in L^2(\R), \quad (1-u^2) v'' \in L^2(\R) \}$. The following proposition characterizes the zero eigenvalue of the spectral problem (\ref{eq-spec-stability}) for the cusped soliton. 

\begin{proposition}
	\label{prop-kernel}
	Let $u$ be the cusped soliton $u_{\rm cusp}$.
	The spectral problem (\ref{eq-spec-stability}) has a double zero eigenvalue
	$\lambda = 0$ in $L^2(\mathbb{R})$.
\end{proposition}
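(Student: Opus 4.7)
The plan is to exploit the block structure of the spectral problem (\ref{eq-spec-stability}) at $\lambda = 0$: the matrix on the left annihilates $(v,w)$ if and only if the decoupled identities $\mathcal{L}_+ v = 0$ and $\mathcal{L}_- w = 0$ hold separately. So a double zero eigenvalue amounts to producing one kernel element in each of $\ker \mathcal{L}_+$ and $\ker \mathcal{L}_-$. The natural candidates come from the continuous symmetries of the NLS equation (\ref{nls}): the phase rotation $\psi \mapsto e^{i\theta}\psi$ contributes $u_{\rm cusp} \in \ker \mathcal{L}_-$, and the spatial translation $\psi(x,t) \mapsto \psi(x+x_0,t)$ contributes $u_{\rm cusp}' \in \ker \mathcal{L}_+$.

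For the phase mode, I would rewrite the stationary equation (\ref{2nd}) as $(1-u^2)u'' = u$, so that $\mathcal{L}_- u = -(1-u^2)u'' + u = -u + u = 0$ directly. Domain membership is immediate: $u \in H^1(\R)\subset L^2(\R)$ by Theorem~\ref{theorem-main1}, and $(1-u^2)u'' = u \in L^2(\R)$, so $u \in {\rm Dom}(\mathcal{L}_-)$. For the translation mode, I would differentiate (\ref{2nd}) once in $x$ to obtain $u' + 2uu'u'' - (1-u^2)u''' = 0$ and substitute $u'' = u/(1-u^2)$, which rearranges to $\mathcal{L}_+ u' = 0$ pointwise on $\R\setminus\{0\}$. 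The integrability $u_{\rm cusp}' \in L^2(\R)$ follows from Proposition~\ref{prop-cusped}: near the cusp one has $u'(x) \sim -{\rm sgn}(x)\sqrt{|\log|x||}$, which is locally $L^2$ because $\int_0^1 |\log x|\,dx < \infty$; at spatial infinity, $u'$ decays exponentially by the level-$C=0$ first-order invariant (\ref{ode-energy}).

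The main obstacle is verifying $u_{\rm cusp}' \in {\rm Dom}(\mathcal{L}_+)$ in the presence of the logarithmic singularity at $x = 0$. Using the asymptotic expansion (\ref{asympt-sol}), the two individual terms in $\mathcal{L}_+ u' = -(1-u^2)u''' + \frac{1+u^2}{1-u^2}u'$ each behave like $|x|^{-1}$ near the cusp. Indeed, differentiating the stationary equation gives $(1-u^2)u''' = (1+u^2)u'/(1-u^2)$, and since $1-u^2 \sim 2|x|\sqrt{|\log|x||}$ while $u' \sim \sqrt{|\log|x||}$, neither term is separately in $L^2(\R)$. The cancellation forced by the differentiated ODE, however, makes $\mathcal{L}_+ u'$ identically zero, which is trivially in $L^2(\R)$, so $u' \in {\rm Dom}(\mathcal{L}_+)$ when the singular terms are grouped in that order. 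A parallel check, relying on the oddness of $u_{\rm cusp}'$ about the even cusp, confirms that no spurious distributional delta contributions appear at $x = 0$ in this combination.

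Finally, to upgrade from "at least two" to "exactly two" I would confirm that each of $\ker\mathcal{L}_+$ and $\ker\mathcal{L}_-$ is one-dimensional in $L^2(\R)$ by an asymptotic ODE argument at $|x|\to\infty$. Since $u_{\rm cusp}\to 0$ exponentially, both $\mathcal{L}_+ v = 0$ and $\mathcal{L}_- w = 0$ reduce to the constant-coefficient equation $y'' = y$ with general solution $\alpha e^{x}+\beta e^{-x}$; only one $L^2$-decaying branch survives on each half-line, and these are realized by $u_{\rm cusp}'$ and $u_{\rm cusp}$ respectively, while the complementary solution grows like $e^{|x|}$ and is excluded from $L^2(\R)$. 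The two kernel vectors $(u_{\rm cusp}',0)$ and $(0,u_{\rm cusp})$ are clearly linearly independent (they have opposite parity), so the block operator in (\ref{eq-spec-stability}) has a two-dimensional kernel in $L^2(\R)\times L^2(\R)$, yielding the double zero eigenvalue asserted in the proposition.
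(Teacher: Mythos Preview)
Your argument has a genuine gap at the domain step. The paper defines ${\rm Dom}(\mathcal{L}_\pm)=\{v\in L^2(\R):\ (1-u^2)v''\in L^2(\R)\}$, so membership of $u_{\rm cusp}'$ requires $(1-u^2)u'''\in L^2(\R)$ on its own, not merely that the combination $\mathcal{L}_+ u'$ land in $L^2$. Differentiating (\ref{2nd}) gives $(1-u^2)u'''=\dfrac{(1+u^2)\,u'}{1-u^2}$, and with (\ref{asympt-sol}) one has $1-u^2\sim 2|x|\sqrt{|\log|x||}$ and $u'\sim -\mathrm{sgn}(x)\sqrt{|\log|x||}$, so $(1-u^2)u'''\sim |x|^{-1}$ near $x=0$, which is not locally square integrable. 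Hence $u_{\rm cusp}'\notin{\rm Dom}(\mathcal{L}_+)$, and $(u_{\rm cusp}',0)$ is \emph{not} an eigenvector of the block operator. Your ``group the singular terms'' reading would amount to working with a different (maximal) domain; with the domain actually specified in the paper, the geometric kernel is one-dimensional, spanned by $(0,u_{\rm cusp})$.

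Because of this, the ``double'' in the statement refers to algebraic, not geometric, multiplicity. The paper's route is to build the Jordan chain: solve $\mathcal{L}_+ v=-u$ explicitly by $v=\tfrac{1}{2}xu'$ (here $(1-u^2)(xu')''\in L^2$ does hold, unlike for $u'$ alone), and then show the chain terminates at length two by checking that $\mathcal{L}_- w=v$ has no solution in ${\rm Dom}(\mathcal{L}_-)$ via the Fredholm-type obstruction $\langle (1-u^2)^{-1}v,\,u\rangle\neq 0$. Your proposal does not touch this generalized-eigenvector structure, and the asymptotic argument at $|x|\to\infty$ you give for one-dimensionality of each kernel is also incomplete, since it ignores the matching condition across the interior singularity at $x=0$.
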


\begin{proof}
	Due to the phase rotation symmetry of the NLS equation (\ref{nls}), 
	we have 
	\begin{equation}
	\label{ker-1}
	\mathcal{L}_- u = -(1-u^2) u'' + u = 0, 
	\end{equation}
	where $u \in H^1(\mathbb{R})$ and $(1-u^2) u'' \in H^1(\mathbb{R})$. 
	Hence $(v,w) = (0,u)^T$ is the eigenvector of the spectral problem 
	(\ref{eq-spec-stability}) for $\lambda = 0$. 
	
	Due to the translation symmetry of the NLS equation (\ref{nls}), 
	we also have 
	\begin{equation}
	\label{ker-1a}
	\mathcal{L}_+ u' = -(1-u^2) u''' + \frac{1+u^2}{1-u^2} u' = 0, 
	\end{equation}
	with $u' \in L^2(\mathbb{R})$, however, $(1-u^2) u''' \notin L^2(\mathbb{R})$ 
	due to the singular behavior (\ref{asympt-sol}).
	Therefore,$(v,w) = (u',0)$ is not in the domain 
	of the spectral problem (\ref{eq-spec-stability}) and 
	the geometric multiplicity of $\lambda = 0$ is one. 
	
	It remains to check the Jordan blocks associated with the 
	eigenvector $(v,w) = (0,u)$. The first generalized eigenvector 
	satisfies the nonhomogeneous equation 
	\begin{align}
	\label{ker-2}
	\mathcal{L}_+ v = -u.
	\end{align}
	Since the kernel of $\mathcal{L}_+$ is trivial in 
	${\rm Dom}(\mathcal{L}_+) \subset L^2(\mathbb{R})$, 
	Fredholm's alternative theorem implies that there exists 
	$v \in {\rm Dom}(\mathcal{L}_+) \subset L^2(\mathbb{R})$ 
	that solves the nonhomogeneous equation (\ref{ker-2}). 
	For the cusped soliton, the solution can be found in the explicit form:
	\begin{equation}
	\label{v-solution}
	v = \frac{1}{2} x u',
	\end{equation}
	since $x u' \in H^1(\mathbb{R})$ and 
	$(1-u^2) (xu')'' \in L^2(\mathbb{R})$ due to the singular behavior (\ref{asympt-sol}). 
	
	The second generalized eigenvector, if it exists, satisfies 
	the nonhomogeneous equation 
	\begin{align}
	\label{ker-3}
	\mathcal{L}_- w = v,
	\end{align}
	where $v$ is the solution to the nonhomogeneous equation 
	(\ref{ker-2}). However, no solution $w \in {\rm Dom}(\mathcal{L}_-) \subset L^2(\mathbb{R})$ exists by the Fredholm alternative if
	\begin{equation}
	\label{ker-3a}
	\langle \frac{v}{1-u^2}, u \rangle \neq 0,
	\end{equation}
	where we have used the fact that $(1-u^2)^{-1}\mathcal{L}_-$ is a symmetric differential operator.
	Using (\ref{v-solution}) and integrating by parts, we obtain 
	\begin{equation}
	\label{ker-3b}
	\frac{1}{2} \langle \frac{x u'}{1-u^2}, u \rangle = 
	-\frac{1}{4} \int_{\mathbb{R}} \log(1-u^2) dx \neq 0,
	\end{equation}
	where the integration by parts is justified since 
	$x \log(1-u^2) \to 0$ both as $|x| \to 0$ and as $|x| \to \infty$. 
	Hence, the algebraic multiplicity of $\lambda = 0$ is two.
\end{proof}

\begin{remark}
	The proof of Proposition \ref{prop-kernel} can be extended 
	to the solution $u_C$ with $C \in \mathbb{R}$ up to the nonhomogeneous equation 
	(\ref{ker-2}). However, $(1-u^2) (xu')'' \notin L^2(\mathbb{R})$ 
	due to the singular behavior (\ref{asympt-sol-bell-cont}), hence $v$ 
	is not available in the closed form.
	Consequently, it is not clear how to show that 
	the constraint (\ref{ker-3a}) holds for the solution $u_C$. 
\end{remark}

\begin{remark}
	\label{remark-kernel}
	Besides the result of Proposition \ref{prop-kernel}, it is not clear
	how to analyze the spectral problem (\ref{eq-spec-stability}) 
	and to prove spectral stability of the cusped soliton. 
	Since $u(x) \to 0$ as $|x| \to \infty$ exponentially fast, 
	the continuous spectrum of the problem 
	is located at $\lambda \in i (-\infty,-1] \cup i [1,\infty)$. 
	However, $\mathcal{L}_{\pm}$ are not symmetric 
	differential operators compared to the spectral stability problems arising in other NLS-type equations. The case 
	of the solution $u_C$ is even more difficult since $1 - u^2$ is no longer sign-definite.
\end{remark}

We approximate the spectral stability problem (\ref{eq-spec-stability}) numerically by using the Fourier discretization matrices. Figure \ref{fig-spec-stab} shows the location of the spectrum 
in the discretized and truncated system (\ref{eq-spec-stability})
for the cusped (left) and bell-shaped (right) solitons.
The double zero eigenvalue is detached from
the continuous spectrum located on  $i (-\infty,-1] \cup i [1,\infty)$ 
(see Remark \ref{remark-kernel}).
Both solitons appear to be spectrally stable for perturbations in $L^2(\R)$.

\begin{figure}[htpb]
	\includegraphics[width=8cm,height = 6cm]{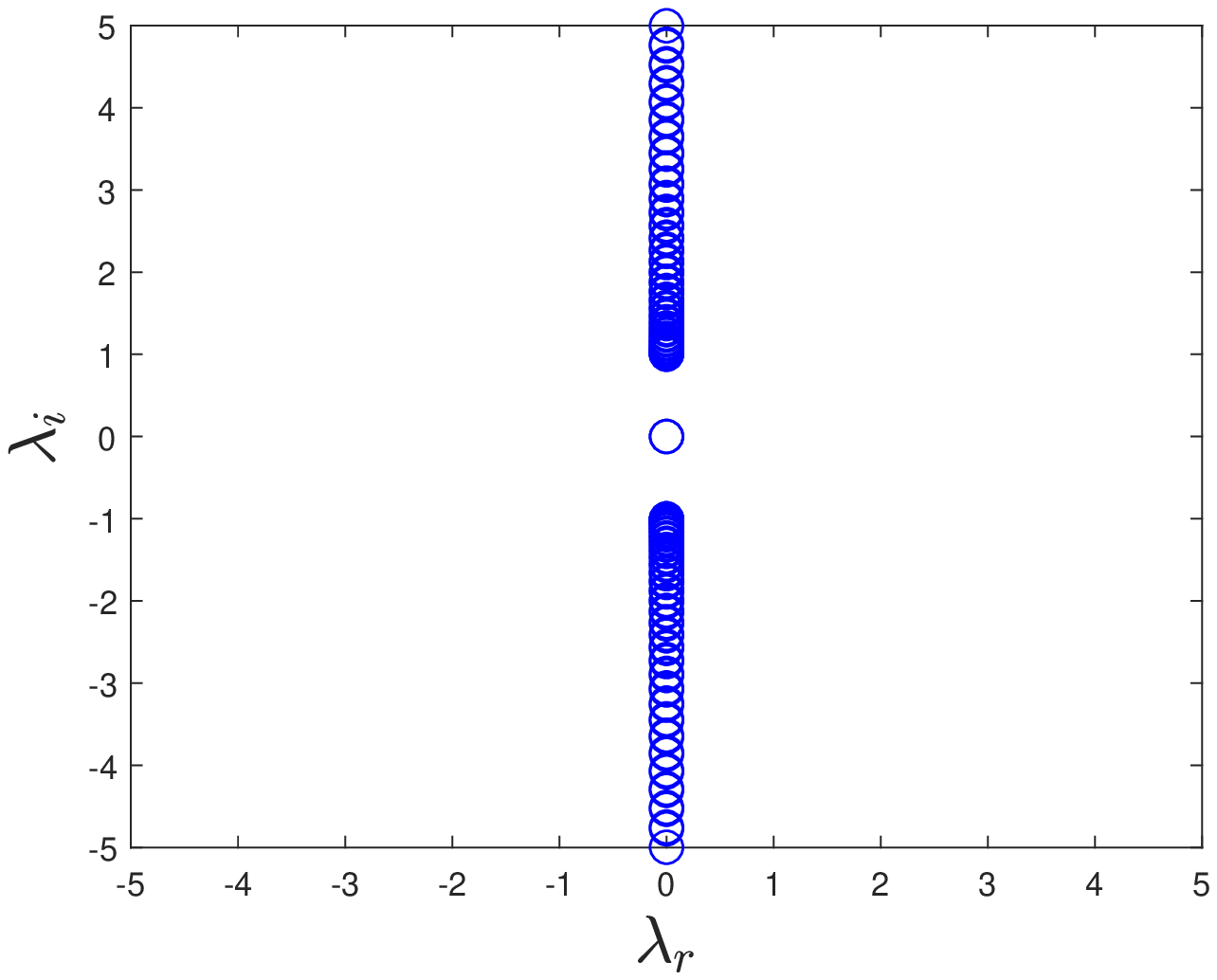}
	\includegraphics[width=8cm,height = 6cm]{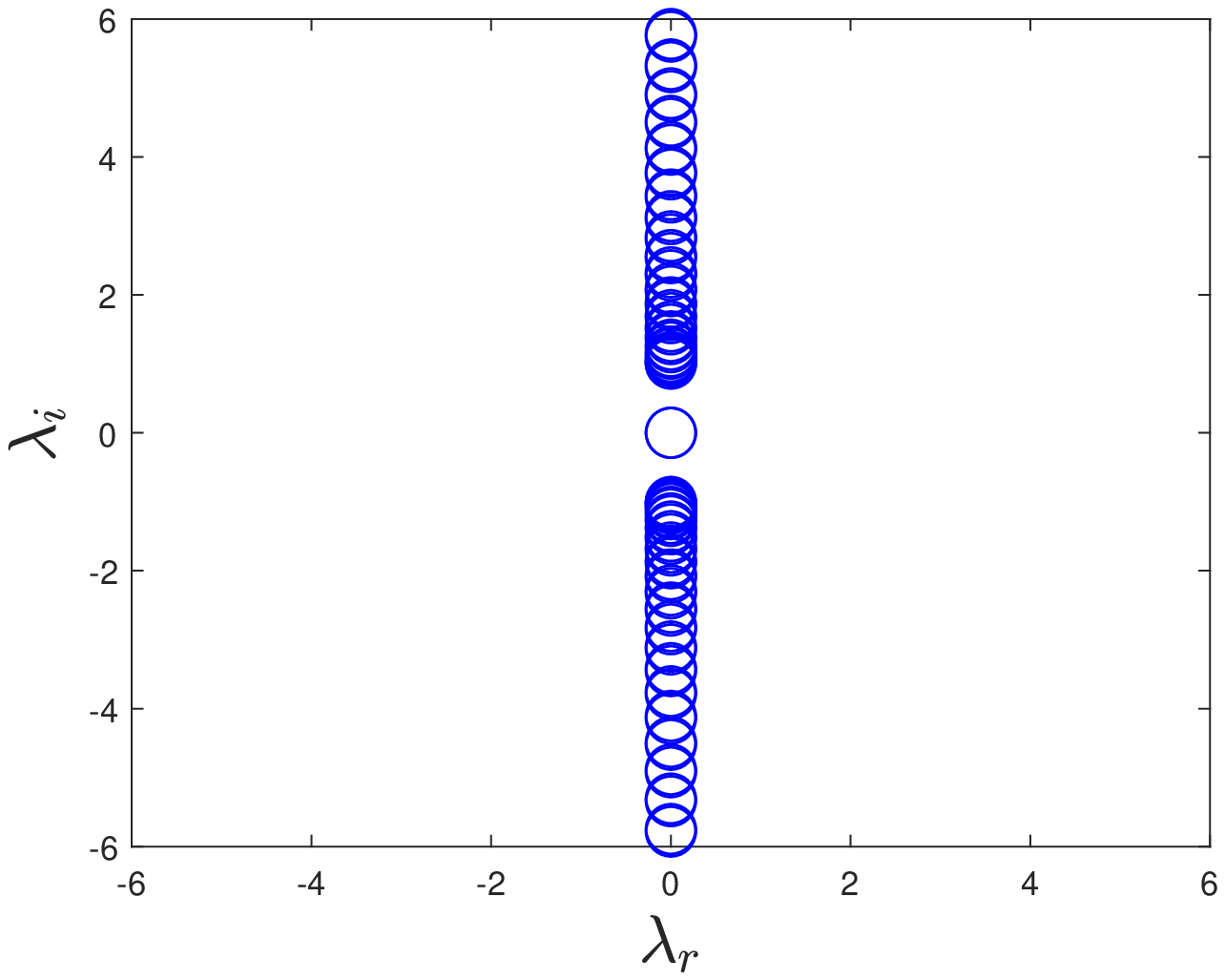}
	\caption{Eigenvalues for the spectral stability problem of the
		form of Eq.~(\ref{eq-spec-stability}) are given for the
		cusped soliton (left) and bell-shaped soliton (right).
		The spectral plane $(\lambda_r,\lambda_i)$ of the eigenvalues
		$\lambda=\lambda_r + i \lambda_i$ is shown.
		Both solutions are numerically found to be spectrally stable since there are no eigenvalues with positive real part.}
	\label{fig-spec-stab}
\end{figure}

\section{Conclusion}

We have revisited the prototypical model of the NLS equation 
with IDD. We have illustrated that the solutions of the model depend 
on the interplay between the constant coefficient dispersion 
and the intensity-dependent dispersion. We proved that no solitary
wave solutions exist when the two dispersion contributions bear
the same sign. On the other hand, the competition between
the two leads to a continuous family of solitary wave solutions 
which are singular at the points of vanishing dispersion, 
where the constant and intensity-dependent
dispersion prefactors cancel each other.

We analyzed three numerical methods which can be used to approximate
the relevant solitary waves and showed that the Newton method 
is robust in approximating different solitary waves of the
continuous family,
while the regularizing method only converges to the bell-shaped soliton 
and the Petviashvili method only converges to the cusped soliton 
being the lowest energy state of the family. 
We illustrated numerically that the bell-shaped and cusped soliton 
are spectrally stable in the time-dependent NLS equation.

A number of important outstanding questions remain for further 
development of the mathematical analysis,
including the rigorous proof of the spectral and orbital stability of the
solitary waves for which our numerical computations suggest that
they are stable.
There are also numerous extensions of this class of models that are in their infancy. First off, it is especially relevant
to consider if a cancellation of the cubic local nonlinear term (as,
e.g., is achieved in BECs in the context of the so-called Feshbach
resonances~\cite{chin}) could be engineered to realize the
prototypical
IDD model studied herein. Also, one can envision generalizations of
the IDD model depending on $|\psi|^{2k}$ and consider the nature
of the emerging solutions as a function of $k$ (as has been done
in the context of the regular NLS equation ~\cite{sulem}).
In the same spirit one can envision more complex functional forms
of the dispersion coefficient, potentially bearing multiple
zero-crossings
and attempt to classify the ensuing nonlinear waveforms.
Moreover,
one can consider lattice-based variants, which could potentially be related to waveguides in optics~\cite{moti} or
associated with optical lattices in BEC~\cite{markus}. Finally,
a natural issue to consider would be the extension of such models
into isotropic or anisotropic  generalization possibilities into higher
dimensions
and the dynamics of solitary waves that can arise therein.

\end{document}